\newcommand{\RNum}[1]{\uppercase\expandafter{\romannumeral #1\relax}}
\newtheorem{theorem}{Theorem}[section]
\newtheorem{corollary}{Corollary}[section]
\newtheorem{proposition}{Proposition}[section]
\newtheorem{lemma}{Lemma}[section]
\newtheorem{example}{Example}[section]
\newtheorem{definition}{Definition}[section]
\numberwithin{equation}{section}
\def\asubs#1{{
\vbox{\hrule height .2pt \kern 1pt \hbox{$\scriptstyle {#1}\kern
1pt$} }\kern-.05pt \vrule width .2pt }}
\def\hang{\hangindent=\parindent\noindent}
\def\urltilda{\kern -.15em\lower .7ex\hbox{\~{}}\kern .04em}
\begin{document}
\def\hang{\hangindent=\parindent\noindent}

\thispagestyle{empty}

\vspace*{1mm}

\begin{center}
\MakeUppercase{\bf {A form of multivariate Pareto distribution with
applications to financial risk measurement}}

\bigskip

\textsc{Jianxi Su and
Edward Furman\footnote{Corresponding author. Department of Mathematics and Statistics, York University, Toronto, Ontario M3J 1P3, Canada; E-mail: efurman@mathstat.yorku.ca; Tel: 416-736-2100 Ext 33768; Fax: 416-736-5757.}}
\smallskip

\textit{Department of Mathematics and Statistics, York University, \\
Toronto, Ontario M3J 1P3, \\ Canada}

\bigskip
\end{center}

\begin{quote}
\textbf{Abstract.} A new multivariate distribution possessing arbitrarily
parame-\\ trized and positively dependent univariate Pareto margins is introduced. Unlike the probability law
of Asimit et al. (2010)
[Asimit, V., Furman, E. and Vernic, R. (2010) On a multivariate Pareto distribution.
\textit{Insurance: Mathematics and Economics}
\textbf{46}(2), 308 -- 316], the structure in this paper is absolutely continuous
with respect to the corresponding Lebesgue measure. The distribution is
of importance to actuaries through its connections to the popular frailty models,
as well as because of the capacity to describe dependent heavy-tailed risks. The genesis of
the new distribution is linked to a number of existing probability models, and useful
characteristic results are proved. Expressions for, e.g., the decumulative
distribution and probability density functions, (joint) moments and regressions are developed. The distributions
of minima and maxima, as well as, some weighted risk measures are employed to exemplify
possible applications of the distribution in insurance.

\bigskip \noindent\textit{Keywords and phrases: }
Multivariate Pareto distributions, characterizations, dependence, weighted risk measures,
minima, maxima.


\end{quote}

\newpage

\section{introduction}
At the outset, we fix the probability space $(\Omega,\Sigma,\mathbf{P})$ and
define the random vector (r.v.) $\mathcal{X}\ni\mathbf{X}=(X_1,\ldots,X_n)'$ as a map from
$(\Omega,\Sigma)$ to the $n(\in\mathbf{N})$-dimensional Borel space
$(\mathbf{R}_+^n,\mathcal{B}(\mathbf{R}_+^n))$. The cumulative distribution
function (c.d.f.) of $\mathbf{X}$ is in the sequel denoted by
$F_{1,\ldots,n}(x_1,\ldots,x_n):=\mathbf{P}[X_1\leq x_1,\ldots,X_n\leq x_n]$, and
the corresponding probability density function (p.d.f.)  by
$f_{1,\ldots,n}(x_1,\ldots,x_n):=\partial^n/(\partial x_1\cdots \partial x_n)
F_{1,\ldots,n}(x_1,\ldots,x_n)$, where $(x_1,\ldots,x_n)'\in\mathbf{R}_{+}^n:=(0,\ \infty)^n$.
Finally, $F_i(x)$ and $f_i(x)$ denote, respectively,
the marginal c.d.f. and p.d.f. of $X_i,\ i=1,\ldots,n$.
Clearly, when the coordinates of $\mathbf{X}$ are stochastically independent, then
there is only one way to formulate the c.d.f. $F_{1,\ldots,n}$, whereas the shapes
of the just-mentioned c.d.f. are infinite otherwise.

We further discuss the so-called multivariate reduction approach to creating random
vectors with dependent coordinates. This paves the way to introducing the main object of
our interest in Section \ref{sec2}.
For applications of the multivariate reduction method in insurance, we refer to, e.g.,
Vernic (1997, 2000), Pfeifer and Ne\v slehov\' a (2004), Furman and Landsman (2005, 2010),
Boucher et al. (2008) and Tsanakas (2008), as well as to the references therein.

Let $\mathcal{Y}\ni\mathbf{Y}=(Y_1,\ldots,Y_{n+1})'$ be an $(n+1)$-variate r.v. with
mutually independent univariate margins distributed gamma. Namely, for
$j=1,\ldots,n+1$, the p.d.f. of
$Y_j\backsim Ga(\gamma_j(\in\mathbf{R}_+),\alpha_j(\in\mathbf{R}_+))$ is given by
\begin{equation}
\label{Gampdf}
g(y;\gamma_j,\alpha_j)=e^{-\alpha_j y}\frac{y^{\gamma_j-1}\alpha_j^{\gamma_j}}
{\Gamma(\gamma_j)},\ y\in\mathbf{R}_+
\end{equation}
with the corresponding Laplace transform being well-defined on $\mathbf{R}_{+}:=(0,\ \infty)$ (the interval of
interest herein) and
given by
\begin{equation}
\label{GamLS}
\hat{G}(x;\gamma_j,\alpha_j)=\int_0^\infty e^{-xy}g(y)dy=\left(
1+\frac{x}{\alpha_j}
\right)^{-\gamma_j}.
\end{equation}

\begin{definition}[Furman, 2008; Furman and Landsman, 2010]
Let
$A\in Mat_{n\times (n+1)}(\mathbf{R}_{0,+})$ denote a deterministic $n\times (n+1)$ matrix
with  suitable non-negative entries. Then $\mathbf{X}=A\mathbf{Y}$ is distributed
 $n$-variate gamma
with shape parameters $\gamma_i^\ast=t_i(\gamma_1,\ldots,\gamma_{n+1})$
and rate parameters
$\alpha_i^\ast=u_i(\alpha_1,\ldots,\alpha_{n+1})$ for appropriate positive Borel functions
$t_i(\cdot),\ u_i(\cdot),\ i=1,\ldots,n$. Succinctly, we write
$\mathbf{X}\backsim Ga_{1,\ldots,n}(\boldsymbol{\gamma}^\ast,\boldsymbol{\alpha}^\ast)$,
where
$\boldsymbol{\gamma}^\ast=(\gamma_1^\ast,\ldots,\gamma_n^\ast)'\in\mathbf{R}_+^n$
and
$\boldsymbol{\alpha}^\ast=(\alpha_1^\ast,\ldots,\alpha_n^\ast)'\in\mathbf{R}_+^n$
are vectors of parameters.
\end{definition}
\begin{example}[Mathai and Moschopoulos, 1991; see also Cherian, 1941 and Ramabhadran, 1951]
 \label{ExMMG1991}
Let $Y_j\backsim Ga(\gamma_j,\alpha_j),\ j=1,\ldots,n+1$ be mutually independent
random variables  distributed gamma  with arbitrary parameters, and
choose
the matrix $A$ such that, for
$i=1,\ldots,n$ and $\sigma_i>0$, it holds that $(A)_{i,n+1}=\alpha_{n+1}/\alpha_i$, $(A)_{i,i}\equiv 1$ and zero otherwise. Then
 $\mathbf{X}\backsim Ga_{1,\ldots,n}(\boldsymbol{\gamma}^\ast,\boldsymbol{\alpha})$,
 where $\boldsymbol{\gamma}^\ast=(\gamma_{n+1}+\gamma_1,\ldots,\gamma_{n+1}+\gamma_n)'$
 and $\boldsymbol{\alpha}=(\alpha_1,\ldots,\alpha_n)'$ are two $n$-variate vectors of
 parameters.
\end{example}
\begin{example}[Mathai and Moschopoulos, 1992; see  also, Furman, 2008]
\label{ExFG2009}
Let $Y_j\backsim Ga(\gamma_j,\alpha_j)$, $j=1,\ldots,n$ be mutually independent
random variables distributed gamma with arbitrary parameters, and
choose
the matrix $A$ such that, for
$i=1,\ldots,n,\ j=1,\ldots,i$ and $\sigma_i>0$, it holds that
 $(A)_{i,j}=\alpha_{j}/\sigma_i$  and zero otherwise.
Then
 $\mathbf{X}\backsim Ga_{1,\ldots,n}(\boldsymbol{\gamma}^\ast,\boldsymbol{\sigma})$,
 where $\boldsymbol{\gamma}^\ast=(\gamma^\ast_1,\ldots,\gamma^\ast_n)',\ \gamma^\ast_i=
 \sum_{j=1}^i \gamma_j$
 and $\boldsymbol{\sigma}=(\sigma_1,\ldots,\sigma_n)'$ are two $n$-variate vectors of
 parameters.
\end{example}

In the present paper we employ the following modification of Examples \ref{ExMMG1991} and
\ref{ExFG2009}.
\begin{example}[Furman, 2008]
\label{ExFG2009gen}
Let $Y_j\backsim Ga(\gamma_j,\alpha_j),\ j=1,\ldots,n+1$ be again mutually independent
random variables distributed gamma with arbitrary parameters, and
choose
the matrix $A$ such that, for
$i=1,\ldots,n,\ j=1,\ldots,i$ and $\sigma_i>0$, it holds that
 $(A)_{i,j}=\alpha_{j}/\sigma_i$, $(A)_{i,n+1}=\alpha_{n+1}/\sigma_i$  and zero otherwise.
 Then
 $\mathbf{X}\backsim Ga_{1,\ldots,n}(\boldsymbol{\gamma}^\ast,\boldsymbol{\sigma})$,
 where $\boldsymbol{\gamma}^\ast=(\gamma^\ast_1,\ldots,\gamma^\ast_n)',\ \gamma^\ast_i=
 \gamma_{n+1}+\sum_{j=1}^i \gamma_j$
 and $\boldsymbol{\sigma}=(\sigma_1,\ldots,\sigma_n)'$ are two $n$-variate vectors of
 parameters.
\end{example}

In the sequel, we embark on the idea in Example \ref{ExFG2009gen} to introduce
an encompassing yet tractable multivariate distribution with univariate
margins distributed Pareto.
We note in passing that a real-valued
r.v.
is said to be distributed Pareto of the $2$nd kind, succinctly
$X\backsim Pa(II)(\mu,\sigma,\alpha),$ where $\mu\in\mathbf{R}$ is a location parameter,
$\sigma\in\mathbf{R}_+$ is a scale parameter and $\alpha\in\mathbf{R}_+$ is a tail index,
 if its c.d.f. is given by
\begin{equation}
\label{Par2cdf}
F_X(x;\mu,\sigma,\alpha)=1-\left(
1+\frac{x-\mu}{\sigma}
\right)^{-\alpha},\ x> \mu
\end{equation}
(see, e.g., Pareto, 1897; Arnold, 1983; Kotz et al., 2000).
Similarly to Asimit et al. (2010),
we set $\mu=0$, which conveniently makes the support of the distribution
coincide with the positive half of the real line, i.e.,
$supp_F=\{x\in\mathbf{R}\ :f(x)\neq 0\}=\mathbf{R}_{+}$ and does not lead
to any loss of generality. The resulting distribution (Lomax distribution), notationally $Pa(II)(\sigma,\alpha)$, enjoys
a great variety of applications in all areas of applied mathematics in general and in
actuarial science in particular, as it naturally arises in the extreme value theory
as the limiting distribution of the excess-of-loss r.v. $X_d:=X-d|\ X>d$
where
$d(\in\mathbf{R}_+)$ denotes a threshold
(see, e.g., Balkema and de Haan, 1974; Pickands, 1975).

The rest of the paper is organized as follows. In Section \ref{sec2} a multivariate probability structure
with dependent Pareto-distributed  univariate margins is introduced and linked to a number of existing multivariate models.
Then distributional properties of the new structure are derived and some characterization results are proved in
Sections \ref{sec2} and \ref{sec-biv-stuff}.
In Section \ref{SecAppl} the new multivariate Pareto
is reintroduced as a variant of the minima-based multiple risk factor model, and some applications
to notions of actuarial interest are considered.
In Section \ref{sec-numex} an application of the model is elucidated with the help of a
numerical example borrowed from the context of default risk.
Section \ref{sec-con} concludes the paper. All proofs are relegated
to the appendix to facilitate the reading.

\section{New multivariate Pareto distribution }
\label{sec2}

Let $\mathbf{Y}=(Y_1,\ldots,Y_{n+1})'$ be a r.v. with mutually independent coordinates
$Y_j\backsim Ga(\gamma_j,1),$ $\gamma_j\in\mathbf{R}_+$, and choose
the matrix $A_c\in Mat_{n\times (n+1)}$ such
that $(A_c)_{i,j}=c_{i,j}/\sigma_i$, where
$c_{i,j}\in\{0,\ 1\}$ are deterministic constants, $\sigma_i\in\mathbf{R}_+$,
$i=1,\ldots,n$ and $j=1,\ldots,n+1$.
The following definition
unifies Examples \ref{ExMMG1991}, \ref{ExFG2009} and \ref{ExFG2009gen} and serves as an auxiliary tool for constructing the multivariate Pareto distribution of interest.
\begin{definition}
\label{GamDefG}
Let $\mathbf{X}=(X_1,\ldots,X_n)'=A_c\mathbf{Y}$, then it follows an $n$-variate
gamma distribution, notationally $\mathbf{X}\backsim Ga_{1,\ldots,n}(\boldsymbol{\gamma}_c^\ast,\
\boldsymbol{\sigma})$, where $\boldsymbol{\gamma}_c^\ast=(\gamma_{c,1}^\ast,\ldots,\gamma_{c,n}^\ast)'\in\mathbf{R}_+^n$ with $\gamma_{c,i}^\ast=\sum_{j=1}^{n+1} c_{i,j}\gamma_j,\ i=1,\ldots,n$ and
$\boldsymbol{\sigma}=(\sigma_1,\ldots,\sigma_n)'\in\mathbf{R}_+^n$ are two vectors of
parameters.
\end{definition}

We note in passing that Definition \ref{GamDefG} (auxiliary for the present paper)
establishes an encompassing multivariate probability law with gamma-distributed univariate
margins and an additive background risk dependence structure (see, Gollier and Pratt, 1996;
Tsanakas, 2008; Furman and Landsman, 2010 for applications of the additive background risk models in
economics and actuarial science).  More specifically,
the following simple special cases of
$Ga_{1,\ldots,n}(\boldsymbol{\gamma}_c^\ast,\
\boldsymbol{\sigma})$ readily recover
the models of, respectively, Mathai and Moschopoulos (1991,\ 1992) and Furman (2008):
\begin{itemize}
\item $c_{i,i}=c_{i,n+1}\equiv 1$ for $i=1,\ldots,n$ and zero otherwise - Example 1.1;
\item $c_{i,j}\equiv1$ for $1\leq j\leq i\leq n$ and zero otherwise - Example 1.2;
\item $c_{i,j}=c_{i,n+1}\equiv1$ for $1\leq j\leq i\leq n$
and zero otherwise - Example 1.3.
\end{itemize}

Some elementary but useful properties of $\mathbf{X}\backsim Ga_{1,\ldots,n}$
follow directly by definition or from the Laplace transform that is established next.
\begin{proposition}
\label{propLTG}
Let $\mathbf{X}\backsim Ga_{1,\ldots,n}(\boldsymbol{\gamma}_c^\ast,\
\boldsymbol{\sigma})$ be the r.v. distributed multivariate gamma as in Definition \ref{GamDefG},
then the corresponding Laplace transform is given by
\[
\hat{G}_{1,\ldots,n}(\mathbf{t})=\prod_{j=1}^{n+1}\left(
1+\sum_{i=1}^n \frac{c_{i,j}}{\sigma_i}t_i
\right)^{-\gamma_j},
\]
and it is well-defined  on $\mathbf{R}_{+}^n$.
\end{proposition}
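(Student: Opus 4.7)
The plan is to compute the Laplace transform directly from the definition $\mathbf{X} = A_c\mathbf{Y}$, exploiting the independence of the coordinates of $\mathbf{Y}$ together with the univariate Laplace transform (\ref{GamLS}) applied with $\alpha_j = 1$.

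First, I would write the Laplace transform of $\mathbf{X}$ at $\mathbf{t}=(t_1,\ldots,t_n)'\in\mathbf{R}_+^n$ as $\hat{G}_{1,\ldots,n}(\mathbf{t})=\mathbf{E}[\exp(-\mathbf{t}'\mathbf{X})]$ and substitute $X_i=\sum_{j=1}^{n+1}(c_{i,j}/\sigma_i)Y_j$. After swapping the order of summation, the exponent rearranges into $-\sum_{j=1}^{n+1}Y_j\bigl(\sum_{i=1}^n c_{i,j}t_i/\sigma_i\bigr)$, which expresses $\mathbf{t}'\mathbf{X}$ as a linear combination of the independent $Y_j$'s with non-negative coefficients. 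Since $c_{i,j}\in\{0,1\}$, $\sigma_i>0$ and $t_i>0$, each coefficient $s_j:=\sum_{i=1}^n c_{i,j}t_i/\sigma_i$ lies in $\mathbf{R}_{0,+}$, so the exponentials are integrable under the law of each $Y_j$.

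Second, I would invoke the mutual independence of $Y_1,\ldots,Y_{n+1}$ to factorize the expectation into a product of univariate expectations $\mathbf{E}[\exp(-s_j Y_j)]$, $j=1,\ldots,n+1$. Each factor is the Laplace transform of a $Ga(\gamma_j,1)$ random variable, evaluated at $s_j$, which by (\ref{GamLS}) equals $(1+s_j)^{-\gamma_j}$. Putting $s_j$ back in yields exactly the claimed product expression.

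Finally, well-definedness on $\mathbf{R}_+^n$ follows from the observation already made: every $s_j$ is non-negative, so each term $1+s_j\geq 1$ is bounded away from zero, and the infinite product is really a finite product of finite quantities. There is no genuine obstacle here; the argument is a routine bookkeeping exercise, and the only subtlety worth flagging is that the use of (\ref{GamLS}) requires $\alpha_j=1$, which is exactly the normalization imposed in the construction of $\mathbf{Y}$ at the beginning of Section \ref{sec2}.
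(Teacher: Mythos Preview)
Your proposal is correct and follows essentially the same approach as the paper: substitute $\mathbf{X}=A_c\mathbf{Y}$, swap the order of summation to express $\mathbf{t}'\mathbf{X}$ as $\sum_{j=1}^{n+1}s_jY_j$, factor by independence, and apply (\ref{GamLS}) with $\alpha_j=1$. The paper's version is simply more terse and omits the explicit discussion of well-definedness that you spell out.
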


Immediate consequences of Proposition \ref{propLTG} are, for $k,l=1,\ldots,n$, that
\begin{itemize}
\item the distribution of $\mathbf{X}\backsim Ga_{1,\ldots,n}(\boldsymbol{\gamma}_c^\ast,\
\boldsymbol{\sigma})$
is `marginally closed', i.e., $X_k\backsim Ga(\gamma_{c,k}^\ast(\in\mathbf{R}_+),\ \sigma_k(\in\mathbf{R}_+))$;
\item the expectation of the $k$-th coordinate is $\mathbf{E}[X_k]=\gamma_{c,k}^\ast/\sigma_k$;
\item the variance of the $k$-th coordinate is $\mathbf{Var}[X_k]=\gamma_{c,k}^\ast/\sigma_k^2$;
\item for $k\neq l$, the covariance between the coordinates
$X_k$ and $X_l$ is non-negative and
given by
\[
\mathbf{Cov}[X_k,X_l]=\frac{\sum_{j=1}^{n+1} c_{k,j}c_{l,j}\gamma_j}{\sigma_k\sigma_l};
\]
\item for $k\neq l$, the Pearson linear correlation between the coordinates
$X_k$ and $X_l$ is non-negative and given by
\[
\rho[X_k,X_l]=
\frac{\sum_{j=1}^{n+1} c_{k,j}c_{l,j}\gamma_j}{\sqrt{\gamma_{c,k}^\ast\gamma_{c,l}^\ast}}.
\]
\end{itemize}

We are now in a position to introduce the multivariate Pareto distribution of interest.
In fact, simple observation (\ref{GamLS}) along with Proposition \ref{propLTG} result in the following definition.
\begin{definition}
\label{defMPflexGen}
We call the r.v. $\mathbf{X}=(X_1,\ldots,X_n)'$ having the
decumulative distribution function
(d.d.f.)
\begin{equation}
\label{MPflexGenddf}
\overline{F}_{1,\ldots,n}(x_1,\ldots,x_n)=\prod_{j=1}^{n+1}\left(
1+\sum_{i=1}^n \frac{c_{i,j}}{\sigma_i}x_i
\right)^{-\gamma_j}, \textnormal{ where }(x_1,\ldots,x_n)'\in\mathbf{R}_{+}^n,
\end{equation}
a multivariate Pareto of the $2$nd kind. Succinctly, we write
$\mathbf{X}\backsim Pa_{1,\ldots,n}^c$
$(\boldsymbol{\sigma},\ \boldsymbol{\gamma},\ \gamma_{n+1})$, where $\boldsymbol{\sigma}=(\sigma_1,\ldots,\sigma_n)',\ \boldsymbol{\gamma}=(\gamma_1,\ldots,\gamma_n)'$ are two deterministic vectors of positive
parameters, and $\gamma_{n+1}\in\mathbf{R}_+$ and $c\in Mat_{n\times (n+1)}(\{0,\ 1\})$ are scalar-valued
and matrix-valued parameters, respectively.
\end{definition}

Generally,  distributions with Paretian tails have been applied in a multitude of areas. Herein we refer to:
 Benson et al. (2007) for applications in modelling catastrophic risk; Koedijk et al. (1990), Longin (1996), Gabaix et al. (2003) for applications in general financial phenomena; Cebri\'{a}n et al. (2003) for
applications in insurance pricing; and Soprano et al. (2010), Chavez-Demoulin et al. (2015)
for applications in risk management.

Specifically, the probability law in Definition \ref{defMPflexGen} is a generalization of the classical
multivariate Pareto distribution of Arnold (1983) with the d.d.f. $\overline{F}_{1,\ldots,n}^{Arnold}$. Indeed, set
 $c_{i,j}=c_{\bullet,j},\ i=1,\ldots,n,\ j=1,\ldots,n+1$ in (\ref{MPflexGenddf}) and
 obtain, for $\gamma^\ast_{c}=\sum_{j=1}^{n+1}c_{\bullet,j}\gamma_j$, that
\begin{equation}
\label{MPArnold}
(2.1)=
\left(
1+\sum_{i=1}^n \frac{x_i}{\sigma_i}
\right)^{-{\gamma}^\ast_c}=\overline{F}^{Arnold}_{1,\ldots,n}(x_1,\ldots,x_n), \textnormal{ where }(x_1,\ldots,x_n)'\in\mathbf{R}_{+}^n.
\end{equation}
That being said, unlike the classical multivariate Pareto distribution of Arnold (1983),
the structure in Definition \ref{defMPflexGen} incorporates stochastic independence -
set $c_{i,i}\equiv 1,\ i=1,\ldots, n$ and zero otherwise and obtain, for $\overline{F}_{1,\ldots,n}^\Pi$ denoting
the d.d.f. of a multivariate Pareto with independent margins, that
\[
(2.1)
 = \prod_{i=1}^{n}\left(
1+\frac{x_i}{\sigma_i}
\right)^{-\gamma_i}=\overline{F}^{\Pi}_{1,\ldots,n}(x_1,\ldots,x_n), \textnormal{ where }(x_1,\ldots,x_n)'\in\mathbf{R}_{+}^n.
\]
Consequently, the new multivariate Pareto distribution
meaningfully fills the gap between the multivariate probability distributions with independent
and Arnold-dependent Pareto-distributed margins.

In addition,  unlike (\ref{MPArnold}), $Pa_{1,\ldots,n}^c(\boldsymbol{\sigma},\ \boldsymbol{\gamma},\ \gamma_{n+1})$ allows for
distinct marginal tail indices (see, Proposition \ref{Nt1} below). Furthermore, the new multivariate Pareto distribution unifies the
 probability models
studied recently in Chiragiev and Landsman
 (2009). Namely, in order to obtain their `flexible Pareto type I and II' we choose $c_{i,i}=c_{i,n+1}\equiv 1,\ i=1,\ldots,n$ and zero
 otherwise and $c_{i,j}\equiv 1$ for $1\leq j\leq i\leq n$ and zero otherwise, respectively.

 Lastly but perhaps most importantly in actuarial applications, d.d.f. (\ref{MPflexGenddf}) admits stochastic representations
 that mimic the
 multiplicative background risk model (Franke et al., 2006) and the minima-based
 common shock model (Bowers et al., 1997) (see, respectively, Theorems \ref{CharLem} and \ref{MinimaCS}
 in this paper).
Stochastic representations are a very welcome facet, since they endow probabilistic models with an important feature of interpretability,
and as a result contribute greatly to the process of model selection and implementation.

We further document several simple properties of the multivariate Pareto with d.d.f.
(\ref{MPflexGenddf}).
The proofs are straightforward and thus omitted.
\begin{proposition}
\label{Nt1}
Let $\mathbf{X}\backsim Pa_{1,\ldots,n}^c$
$(\boldsymbol{\sigma},\ \boldsymbol{\gamma},\ \gamma_{n+1})$ as in Definition
\ref{defMPflexGen}, then, for $i=1,\ldots,n$, the marginal d.d.f. of $X_i$ is
\[
\overline{F}_i(x_i)=
\left(1+\frac{x_i}{\sigma_i}\right)^{-\gamma_{c,i}^\ast},\ x_i\in\mathbf{R}_+,
\]
that is $X_i\backsim Pa(II)(\sigma_i,\ \gamma_{c,i}^\ast)$, where
$\gamma_{c,i}^\ast=\sum_{j=1}^{n+1}c_{i,j}\gamma_j$. Also, for $i=1,\ldots,n$ and setting
 $\gamma_{c,i}^\ast>1$, we have that
\begin{equation}
\label{Epar}
\mathbf{E}[X_i]=\sigma_i/\left(\gamma_{c,i}^\ast-1\right),
\end{equation}
and furthermore setting $\gamma_{c,i}^\ast>2$,  we obtain that
\begin{equation}
\label{Varpar}
\mathbf{Var}[X_i]=\sigma_i^2\gamma_{c,i}^\ast/
\left(\left(
\gamma_{c,i}^\ast-1
\right)^2
\left(
\gamma_{c,i}^\ast-2
\right)
\right).
\end{equation}
\end{proposition}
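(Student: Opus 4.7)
The plan is to observe that the statement is really a direct reading-off from the joint d.d.f. in Definition 2.2 together with standard Pareto(II) moment formulas, and that the heavy lifting has already been done when d.d.f.~(2.1) was written down.

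First I would obtain the marginal d.d.f.~of $X_i$ by the standard device of pushing the remaining arguments to zero, namely
\[
\overline{F}_i(x_i)=\overline{F}_{1,\ldots,n}(0,\ldots,0,x_i,0,\ldots,0),
\]
which is valid because the support is $\mathbf{R}_+^n$ and $\overline{F}_{1,\ldots,n}$ is continuous at the boundary. Substituting into (\ref{MPflexGenddf}) collapses the product: for every index $j$ the factor becomes $(1+c_{i,j}x_i/\sigma_i)^{-\gamma_j}$, and since $c_{i,j}\in\{0,1\}$, the factors with $c_{i,j}=0$ are unity and the factors with $c_{i,j}=1$ are $(1+x_i/\sigma_i)^{-\gamma_j}$. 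Grouping exponents then yields
\[
\overline{F}_i(x_i)=\left(1+\frac{x_i}{\sigma_i}\right)^{-\sum_{j=1}^{n+1} c_{i,j}\gamma_j}=\left(1+\frac{x_i}{\sigma_i}\right)^{-\gamma_{c,i}^\ast},
\]
which is precisely the d.d.f.~of $Pa(II)(\sigma_i,\gamma_{c,i}^\ast)$ in the sense of (\ref{Par2cdf}) with $\mu=0$.

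With this identification in hand, formulas (\ref{Epar}) and (\ref{Varpar}) are nothing more than the classical first and second moments of a Pareto(II) law. Concretely, $\mathbf{E}[X_i]=\int_0^\infty \overline{F}_i(x)\,dx=\sigma_i/(\gamma_{c,i}^\ast-1)$ provided $\gamma_{c,i}^\ast>1$, while $\mathbf{E}[X_i^2]=2\sigma_i^2/((\gamma_{c,i}^\ast-1)(\gamma_{c,i}^\ast-2))$ provided $\gamma_{c,i}^\ast>2$, and subtracting $\mathbf{E}[X_i]^2$ gives the stated variance after a routine algebraic simplification.

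There is no real obstacle here; the only subtlety worth a line of justification is the first step, namely that plugging zeros into $\overline{F}_{1,\ldots,n}$ indeed produces the marginal d.d.f.~of the remaining variable, which follows from the definition of the d.d.f.~and monotone convergence (or equivalently by noting $\mathbf{P}[X_i>x_i]=\mathbf{P}[X_i>x_i,\,X_k>0,\,k\neq i]$ since all coordinates are a.s.~positive). Once this is recorded, the remainder is a one-line computation, which is presumably why the authors call the proof straightforward and omit it.
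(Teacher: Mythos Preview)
Your proposal is correct and matches the paper's treatment: the authors explicitly state that the proofs of Proposition~\ref{Nt1} are ``straightforward and thus omitted,'' and your argument---setting the other coordinates to zero in (\ref{MPflexGenddf}), using $c_{i,j}\in\{0,1\}$ to collapse the product, and then quoting the standard Pareto(II) moment formulas---is exactly the direct verification they have in mind. There is nothing to add.
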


In Proposition \ref{Nt1},
the substitution $c_{i,i}=c_{i,n+1}\equiv 1,\ i=1,\ldots,n$ and zero otherwise yields Theorem 1 of Chiragiev and Landsman
 (2009), whereas the substitution $c_{i,j}\equiv 1$ for $1\leq j\leq i\leq n$ and zero otherwise results in
 their Theorem 5.

In what follows, we develop an expression for the joint p.d.f. of the multivariate Pareto distribution of interest. To this end,
let
\begin{equation}\label{pdfaux}
\prod_{i=1}^n\sum_{j=1}^{n+1}c_{i,j} y_j =\sum_{i_j\in I} d_c(i_1,\ldots,i_{n+1})\prod_{j=1}^{n+1} y_j^{i_j},
\end{equation}
where $I$ establishes a set of positive integer indices such that $\sum_{j=1}^{n+1}i_j=n$, and $d_c(i_1,\ldots,i_n)$ are appropriately
chosen constants. Also, let
\[
(\gamma)_n=\frac{\Gamma(\gamma+n)}{\Gamma(\gamma)},\ \textnormal{ where }
\gamma\in\mathbf{R}_+ \textnormal{ and } n\in\mathbf{N}
\]
denote the Pochhammer symbol.
\begin{theorem}
\label{jointpdf}
Let $\mathbf{X}\backsim Pa_{1,\ldots,n}^c(\boldsymbol{\sigma},\ \boldsymbol{\gamma},\ \gamma_{n+1})$ as in Definition \ref{defMPflexGen},
then the corresponding joint p.d.f. is formulated, for $(x_1,\ldots,x_n)'\in \mathbf{R}_+^n$, as
 \begin{eqnarray}
 \label{joint_pdf}
 f_{1,\ldots,n}(x_1,\ldots,x_n)
  = \sum_{\forall i_j\in I}d_c(i_1,\ldots,i_{n+1})\prod_{j=1}^{n+1}\frac{(\gamma_j)_{i_j}}{\prod_{l=1}^n\sigma_l }\left(1+\sum_{i=1}^nc_{i,j}\frac{x_i}{\sigma_i}\right)^{-(\gamma_j+i_j)},
 \end{eqnarray}
 where $d_c(i_1,\ldots,i_{n+1})$ are appropriately chosen constants and $\ i_j\in I$.
\end{theorem}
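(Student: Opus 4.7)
The plan is to recover the joint p.d.f. from the d.d.f.\ (\ref{MPflexGenddf}) via the standard identity $f_{1,\ldots,n}(x_1,\ldots,x_n) = (-1)^n\,\partial^n \overline{F}_{1,\ldots,n}/(\partial x_1\cdots \partial x_n)$, and then to regroup the resulting Leibniz expansion so that the combinatorial coefficients appearing in (\ref{pdfaux}) emerge naturally. Writing $u_j(\mathbf{x})=1+\sum_{i=1}^n (c_{i,j}/\sigma_i)x_i$, the d.d.f.\ factors as $\prod_{j=1}^{n+1}u_j^{-\gamma_j}$ and the crucial simplification is that each $u_j$ is affine in $\mathbf{x}$, so $\partial u_j/\partial x_i = c_{i,j}/\sigma_i$ and every higher partial of $u_j$ vanishes.

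I would apply the product rule iteratively across the coordinates $x_1,\ldots,x_n$. Because of the affinity observation, each differentiation $\partial/\partial x_i$ merely picks one factor $u_{\phi(i)}^{-\gamma_{\phi(i)}}$ and lowers its exponent by one. Expanding $n$ derivatives therefore produces a sum over all maps $\phi\colon\{1,\ldots,n\}\to\{1,\ldots,n+1\}$. For a fixed $\phi$, letting $i_j=|\phi^{-1}(j)|$, the factor $u_j^{-\gamma_j}$ has been differentiated $i_j$ times, contributing $\prod_{\ell=0}^{i_j-1}(-\gamma_j-\ell)=(-1)^{i_j}(\gamma_j)_{i_j}$ together with a shifted exponent $u_j^{-(\gamma_j+i_j)}$, while the chain rule produces $\prod_{i=1}^n c_{i,\phi(i)}/\sigma_i$. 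Using $\sum_j i_j=n$, the sign bookkeeping yields $(-1)^n\prod_{j=1}^{n+1}(-1)^{i_j}=1$, so the joint p.d.f.\ equals
\begin{equation*}
f_{1,\ldots,n}(\mathbf{x})=\sum_{\phi}\Bigl(\prod_{i=1}^n\tfrac{c_{i,\phi(i)}}{\sigma_i}\Bigr)\prod_{j=1}^{n+1}(\gamma_j)_{i_j(\phi)}\Bigl(1+\sum_{i=1}^n c_{i,j}\tfrac{x_i}{\sigma_i}\Bigr)^{-(\gamma_j+i_j(\phi))}.
\end{equation*}

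The final step is to collect the $\phi$-sum according to the common profile $(i_1,\ldots,i_{n+1})$. Factoring the universal $\prod_{\ell=1}^n \sigma_\ell^{-1}$ and noting that only $\prod_{i=1}^n c_{i,\phi(i)}$ depends on the particular $\phi$ (not just its profile), the internal sum becomes $\sum_{\phi:\,|\phi^{-1}(j)|=i_j\,\forall j}\prod_{i=1}^n c_{i,\phi(i)}$. Expanding the left-hand side of (\ref{pdfaux}) directly gives exactly $\sum_{\phi}\prod_i c_{i,\phi(i)}\prod_j y_j^{i_j(\phi)}$, so matching coefficients of $\prod_j y_j^{i_j}$ identifies this internal sum with $d_c(i_1,\ldots,i_{n+1})$, delivering (\ref{joint_pdf}). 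The main obstacle is not analytic but combinatorial: carefully tracking the alternating signs from higher-order differentiation of $u_j^{-\gamma_j}$ and verifying that the grouping over $\phi$-profiles reproduces precisely the coefficients $d_c$ defined implicitly by the polynomial identity (\ref{pdfaux}).
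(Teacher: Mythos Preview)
Your proof is correct. It differs in route from the paper's argument, so a short comparison is warranted.

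The paper also starts from $f_{1,\ldots,n}=(-1)^n\,\partial^n\overline{F}_{1,\ldots,n}/(\partial x_1\cdots\partial x_n)$, but instead of differentiating the product $\prod_j u_j^{-\gamma_j}$ directly, it rewrites $\overline{F}_{1,\ldots,n}$ as the expectation $\mathbf{E}\bigl[\exp\{-\sum_{i,j}c_{i,j}Y_jx_i/\sigma_i\}\bigr]$ with independent $Y_j\sim Ga(\gamma_j,1)$, differentiates under the expectation to obtain $\mathbf{E}\bigl[\bigl(\prod_i\sum_j c_{i,j}Y_j\bigr)\exp\{-\cdots\}\bigr]$, and only then invokes identity (\ref{pdfaux}) to expand the polynomial $\prod_i\sum_j c_{i,j}Y_j$. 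The Pochhammer factors $(\gamma_j)_{i_j}$ appear as moment ratios $\Gamma(\gamma_j+i_j)/\Gamma(\gamma_j)$ after absorbing $y_j^{i_j}$ into shifted gamma densities, and the final $u_j^{-(\gamma_j+i_j)}$ come from computing the resulting gamma Laplace transforms. Your argument bypasses the probabilistic detour entirely: you apply the Leibniz rule to $\prod_j u_j^{-\gamma_j}$, index the terms by maps $\phi$, and recover $d_c$ by matching the $\phi$-profile sum to the coefficient extraction in (\ref{pdfaux}). What the paper's approach buys is a unified treatment in which the combinatorics of (\ref{pdfaux}) and the Pochhammer symbols both emerge from a single gamma integral, reinforcing the mixture interpretation central to the paper; what your approach buys is that it is purely algebraic, needs no justification of differentiation under the integral sign, and makes the sign bookkeeping and the origin of the constants $d_c$ fully explicit.
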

In general, the constants $d_c(i_1,\ldots,i_n)$ can be rather involved.  For an insight,  we show
how (\ref{joint_pdf}) reduces to the p.d.f. of the classical multivariate Pareto distribution of Arnold (1983).
To this end,
set $c_{i,j}\equiv 1$ for $i=1,\ldots,n$ and $j=1\ldots,n+1$.
Then from (\ref{pdfaux}), we
have that
\[
d_c(i_1,\ldots,i_{n+1})={n \choose i_1,\ldots, i_{n+1}}
\]
with the right-hand side denoting the multinomial coefficient.
On the other hand, as (\ref{joint_pdf}) must integrate to one and since for the Arnold's
multivariate Pareto distribution, we have, for $\gamma^\ast=\gamma_1+\cdots+\gamma_{n+1}$,
that
\[
\prod_{j=1}^{n+1}\left(1+\sum_{i=1}^nc_{i,j}\frac{x_i}{\sigma_i}\right)^{-(\gamma_j+i_j)}
=\left(1+\sum_{i=1}^n\frac{x_i}{\sigma_i}\right)^{-(\gamma^\ast+n)},
\]
we obtain
\[
(2.6)=\frac{(\gamma^\ast)_n}{\prod_{i=1}^n \sigma_i}\left(1+\sum_{i=1}^n\frac{x_i}{\sigma_i}\right)^{-(\gamma^\ast+n)},\ \textnormal{for}
\ (x_1,\ldots,x_n)'\in\mathbf{R}_+^n,
\]
as required.

The following theorem establishes
a useful characteristic relation  in the context of the multivariate Pareto distribution of interest, and
it also plays an important role when deriving the formula for the corresponding Pearson linear
correlation (see, Theorem \ref{cov} in Section \ref{sec-biv-stuff}). In the sequel `$\overset{d}{=}$' denotes equality in
distribution.
\begin{theorem}
\label{CharLem}
Let $\boldsymbol{\Lambda}=(\Lambda_1,\ldots,\Lambda_n)'$ be a r.v. with independent
and exponentially-distributed
univariate margins $\Lambda_i\backsim Exp(1),\ i=1,\ldots,n$, and denote by
$\boldsymbol{\Xi}=(\Xi_1,\ldots,\Xi_n)'$ $\backsim Ga_{1,\ldots,n}(\boldsymbol{\gamma}_c^\ast,\
\boldsymbol{\sigma})$ the $n$-variate gamma distribution introduced in Definition
\ref{GamDefG}; here $\boldsymbol{\gamma}_c^\ast=(\gamma_{c,1}^\ast,\ldots,\gamma_{c,n}^\ast)'\in\mathbf{R}_+^n$ with $\gamma_{c,i}^\ast=\sum_{j=1}^{n+1} c_{i,j}\gamma_j$, and
$\boldsymbol{\sigma}=(\sigma_1,\ldots,\sigma_n)'\in\mathbf{R}_+^n$ are vectors of
parameters. Assume that $\boldsymbol{\Lambda}$ and $\boldsymbol{\Xi}$ are stochastically independent,
then $\mathbf{X}=(X_1,\ldots,X_n)'$ has d.d.f.
(\ref{MPflexGenddf}), and it is thus the $n$-variate Pareto distribution introduced in
Definition \ref{defMPflexGen} if and only if
$(X_1,\ldots,X_n)'\overset{d}{=}(\Lambda_1/ \Xi_1,\ldots,\Lambda_n/ \Xi_n)'$.
\end{theorem}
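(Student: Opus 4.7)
The plan is to treat this as a straightforward computation via conditioning, since the d.d.f.\ in (\ref{MPflexGenddf}) coincides term-for-term with the Laplace transform of $\boldsymbol{\Xi}$ recorded in Proposition \ref{propLTG}. The theorem reduces to matching these two objects, so one direction is a short conditional-expectation calculation and the other follows because a d.d.f.\ determines a distribution.

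For the sufficient direction, assume $(X_1,\ldots,X_n)'\overset{d}{=}(\Lambda_1/\Xi_1,\ldots,\Lambda_n/\Xi_n)'$. First I would fix $(x_1,\ldots,x_n)'\in\mathbf{R}_+^n$ and write
\[
\overline{F}_{1,\ldots,n}(x_1,\ldots,x_n)=\mathbf{P}[\Lambda_1>x_1\Xi_1,\ldots,\Lambda_n>x_n\Xi_n],
\]
which is legitimate since each $\Xi_i>0$ almost surely. Next I would condition on $\boldsymbol{\Xi}$; because $\Lambda_1,\ldots,\Lambda_n$ are mutually independent $Exp(1)$ variables and are independent of $\boldsymbol{\Xi}$, the conditional probability factorizes as $\prod_{i=1}^n e^{-x_i\Xi_i}$. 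Taking expectation with respect to $\boldsymbol{\Xi}$ then yields
\[
\overline{F}_{1,\ldots,n}(x_1,\ldots,x_n)=\mathbf{E}\!\left[\exp\!\left(-\sum_{i=1}^n x_i\Xi_i\right)\right]=\hat{G}_{1,\ldots,n}(x_1,\ldots,x_n).
\]
Applying Proposition \ref{propLTG} to the right-hand side produces $\prod_{j=1}^{n+1}(1+\sum_{i=1}^n c_{i,j}x_i/\sigma_i)^{-\gamma_j}$, which is precisely (\ref{MPflexGenddf}). Hence $\mathbf{X}\backsim Pa_{1,\ldots,n}^c(\boldsymbol{\sigma},\boldsymbol{\gamma},\gamma_{n+1})$.

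For the necessary direction, suppose $\mathbf{X}$ has d.d.f.\ (\ref{MPflexGenddf}). The calculation above shows that the vector $(\Lambda_1/\Xi_1,\ldots,\Lambda_n/\Xi_n)'$ built from the postulated $\boldsymbol{\Lambda}$ and $\boldsymbol{\Xi}$ also has d.d.f.\ (\ref{MPflexGenddf}); since the d.d.f.\ uniquely determines the joint distribution on $\mathbf{R}_+^n$, we conclude $\mathbf{X}\overset{d}{=}(\Lambda_1/\Xi_1,\ldots,\Lambda_n/\Xi_n)'$.

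There is no real obstacle here: the only substantive ingredient is the identification of the multivariate Laplace transform of $\boldsymbol{\Xi}$ with the d.d.f.\ of $\mathbf{X}$, and this is handed to us by Proposition \ref{propLTG}. The small point worth being explicit about is the use of Fubini/tower property to exchange the conditional probability with the expectation, which is justified since the integrand is non-negative.
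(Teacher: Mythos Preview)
Your proposal is correct and follows essentially the same approach as the paper: condition on $\boldsymbol{\Xi}$, use the exponential tails of the $\Lambda_i$'s to recognize the resulting expectation as the Laplace transform of $\boldsymbol{\Xi}$, and then invoke Proposition \ref{propLTG}. The only cosmetic difference is in the converse direction, where the paper phrases the uniqueness argument in terms of the Laplace transform rather than the d.d.f., but this amounts to the same observation you make.
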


Theorem \ref{CharLem} establishes the multiplicative background risk representation
of the multivariate probabilistic structure of main interest herein
(see, Franke et al., 2006; Meyers, 2007, Asimit et al., 2013, 2016
for applications of the multiplicative
background risk models in economics and actuarial science).

We conclude this section with yet another characterization of the multivariate Pareto
distribution of interest and its two implications. Namely,
let $\wedge_{i=1}^n X_i=:X_{-}\backsim F_{-}$ and
$\vee_{i=1}^n X_i=:X_+\sim F_+$ denote, respectively the minima and the maxima r.v.'s,
and let $X_i\sim F_i,\ i=1,\ldots,n$ be univariate coordinates of the multivariate Pareto r.v.
of interest in this paper.

\begin{theorem}
\label{minima}
Let $\mathbf{X}=(X_1,\ldots,X_n)'$ be distributed $Pa_{1,\ldots,n}^c(\boldsymbol{\sigma},\ \boldsymbol{\gamma},\ \gamma_{n+1})$ as per Definition \ref{defMPflexGen}, then
$X_{-}$ admits the mixture representation as $X_-|\Lambda=\lambda\backsim Exp(\lambda)$ and
$\Lambda\overset{d}{=}Z_1+\cdots+Z_{n+1}$, where $Z_j,\ j=1,\ldots,n+1$ are univariate
mutually independent r.v.'s distributed gamma.
\end{theorem}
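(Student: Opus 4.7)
The plan is to compute the d.d.f. of $X_- = \wedge_{i=1}^n X_i$ directly from Definition \ref{defMPflexGen} and then identify it as the Laplace transform of a sum of independent gamma random variables, which is precisely the d.d.f. one obtains by mixing an exponential in its rate.

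First, since $\{X_- > x\} = \bigcap_{i=1}^n \{X_i > x\}$, I would set $x_1 = \cdots = x_n = x$ in (\ref{MPflexGenddf}) to obtain
\begin{equation*}
\overline{F}_-(x) = \prod_{j=1}^{n+1}(1 + x\, s_j)^{-\gamma_j}, \qquad s_j := \sum_{i=1}^n \frac{c_{i,j}}{\sigma_i}.
\end{equation*}
Second, by the Laplace-transform formula (\ref{GamLS}), each factor $(1 + x s_j)^{-\gamma_j}$ equals $\mathbf{E}[e^{-xZ_j}]$ for $Z_j \backsim Ga(\gamma_j, 1/s_j)$ (indices $j$ with $s_j = 0$ give a trivial factor equal to one and correspond to a degenerate $Z_j \equiv 0$, which causes no difficulty). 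Taking the $Z_j$ to be mutually independent and setting $\Lambda := Z_1 + \cdots + Z_{n+1}$, the product of Laplace transforms collapses to $\mathbf{E}[e^{-x\Lambda}]$. Since this is exactly the d.d.f. at $x$ of a random variable whose conditional law given $\Lambda = \lambda$ is $Exp(\lambda)$, uniqueness of d.d.f.'s delivers the claimed mixture representation.

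I do not anticipate any real obstruction: the whole argument reduces to reading off the Laplace transforms of the gamma laws and invoking independence. A more structural but equivalent derivation proceeds via Theorem \ref{CharLem}: writing $X_i = \Lambda_i/\Xi_i$ with $\Lambda_i \backsim Exp(1)$ iid and independent of $\boldsymbol{\Xi} = A_c \mathbf{Y}$, one conditions on $\mathbf{Y}$ to get $\mathbf{P}[X_- > x \mid \mathbf{Y}] = \exp\bigl(-x\sum_{i=1}^n \Xi_i\bigr)$, and rewriting $\sum_i \Xi_i = \sum_j s_j Y_j$ exhibits the scaled variables $Z_j := s_j Y_j \backsim Ga(\gamma_j, 1/s_j)$ as the independent gamma summands forming $\Lambda$. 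This second route has the advantage of producing an explicit stochastic construction, rather than merely verifying the d.d.f.
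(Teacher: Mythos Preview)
Your proposal is correct and follows essentially the same route as the paper: set $x_1=\cdots=x_n=x$ in (\ref{MPflexGenddf}), recognize each factor $(1+xs_j)^{-\gamma_j}$ as the Laplace transform of an independent $Z_j\backsim Ga(\gamma_j,1/s_j)$, and identify the product as $\mathbf{E}[e^{-x\Lambda}]$ with $\Lambda=\sum_j Z_j$, i.e., the d.d.f.\ of an exponential scale mixture. The paper's proof is simply a terser version of your first argument; your alternative derivation via Theorem~\ref{CharLem} is a nice structural complement but not needed.
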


An important corollary of Theorem
\ref{minima} is a random parameter representation
(see, e.g., Feller, 1966)
of the d.d.f.'s of $X_-$ and $X_+$.
 The following lemma is crucial in studying the distribution of $X_-$ in Theorem \ref{ggamma}.

\begin{lemma}[Moschopoulos, 1985; Furman and Landsman, 2005]
\label{FL2005} For $i=1,\ldots,n$, let $Z_i\sim Ga(\gamma_i(\in\mathbf{R}_+).\ \alpha_i(\in\mathbf{R}_+))$ denote independent gamma-distributed
 r.v.'s. Then the distribution of $Z=Z_1+\cdots+Z_n$ is gamma with a random shape
parameter. More specifically, $Z\sim Ga(\gamma^\ast+K,\ \alpha_{+}),$ where
$\gamma^\ast=\gamma_1+\cdots+\gamma_n$, $\alpha_{+}=\vee_{i=1}^n \alpha_i$ and $K$ is an
integer-valued non-negative r.v. with the probability mass function (p.m.f.) given by
\begin{equation}
\label{pk}
p_k=\mathbf{P}[K=k]=c_{+} \delta_k,\ k=0,\ 1, \ldots,
\end{equation}
where
\[
c_{+}=\prod_{i=1}^n \left(\frac{\alpha_i}{\alpha_{+}}\right)^{\gamma_i},\
\delta_0=1
\]
and
\[
\delta_k=k^{-1}\sum_{l=1}^k\sum_{i=1}^n \gamma_i \left(
1-\frac{\alpha_i}{\alpha_{+}}
\right)^l\delta_{k-l} \textnormal{ for } k>0.
\]
\end{lemma}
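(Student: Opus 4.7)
The plan is to prove the identity at the level of Laplace transforms and then invoke uniqueness of the Laplace transform on $\mathbf{R}_+$. Since each $Z_i\sim Ga(\gamma_i,\alpha_i)$ and they are independent, by (\ref{GamLS})
\[
\hat{G}_Z(t)=\prod_{i=1}^n\left(1+\frac{t}{\alpha_i}\right)^{-\gamma_i}=\prod_{i=1}^n\left(\frac{\alpha_i}{\alpha_i+t}\right)^{\gamma_i},\qquad t\in\mathbf{R}_+.
\]
The goal is to rewrite this product so that it displays, as a function of $t$, a mixture of Laplace transforms of $Ga(\gamma^\ast+k,\alpha_+)$ r.v.'s, with mixing weights $p_k$ of the prescribed form.

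First I would pull out the reference rate $\alpha_+=\vee_i\alpha_i$, setting $\beta_i=1-\alpha_i/\alpha_+\in[0,1)$ and $u=u(t)=\alpha_+/(\alpha_++t)$. A short manipulation gives
\[
\left(\frac{\alpha_i}{\alpha_i+t}\right)^{\gamma_i}=\left(\frac{\alpha_i}{\alpha_+}\right)^{\gamma_i}u^{\gamma_i}(1-\beta_i u)^{-\gamma_i},
\]
so that
\[
\hat{G}_Z(t)=c_+\,u^{\gamma^\ast}\,H(u),\qquad H(u):=\prod_{i=1}^n(1-\beta_i u)^{-\gamma_i},
\]
where $c_+$ is exactly the constant in the statement. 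The function $H$ is analytic at $u=0$, so I would write $H(u)=\sum_{k\ge 0}\delta_k u^k$ with $\delta_0=H(0)=1$, which yields
\[
\hat{G}_Z(t)=\sum_{k=0}^\infty c_+\delta_k\,u^{\gamma^\ast+k}=\sum_{k=0}^\infty p_k\left(\frac{\alpha_+}{\alpha_++t}\right)^{\gamma^\ast+k}.
\]

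The step that really does the work is pinning down the recursion for $\delta_k$. I would logarithmically differentiate $H$ to get
\[
\frac{H'(u)}{H(u)}=\sum_{i=1}^n\frac{\gamma_i\beta_i}{1-\beta_i u}=\sum_{l=0}^\infty\Bigl(\sum_{i=1}^n\gamma_i\beta_i^{l+1}\Bigr)u^l,
\]
multiply by $H$, and equate the coefficient of $u^{k-1}$ on each side. This produces $k\delta_k=\sum_{l=1}^k\sum_{i=1}^n\gamma_i\beta_i^l\,\delta_{k-l}$, which after substituting $\beta_i=1-\alpha_i/\alpha_+$ is exactly the recursion displayed in the lemma. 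This coefficient-matching is the only non-bookkeeping ingredient; everything else is rearrangement.

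Finally, I would check that $\{p_k\}$ is a genuine p.m.f.: non-negativity is immediate from $\beta_i\ge 0$ and $\gamma_i>0$ (so every $\delta_k\ge 0$ as a coefficient of a product of non-negative power series, and $c_+>0$), while normalisation $\sum_k p_k=1$ follows by evaluating $\hat{G}_Z$ at $t=0$, i.e.\ $u=1$. Having obtained $\hat{G}_Z(t)=\sum_k p_k\hat{G}(t;\gamma^\ast+k,\alpha_+)$, uniqueness of Laplace transforms on $\mathbf{R}_+$ identifies $Z$ in distribution with $Ga(\gamma^\ast+K,\alpha_+)$ where $K$ has p.m.f.\ (\ref{pk}), completing the proof.
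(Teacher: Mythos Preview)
Your proof is correct and follows the classical Moschopoulos approach via Laplace transforms and logarithmic differentiation to obtain the recursion for the $\delta_k$. Note, however, that the paper does not supply its own proof of this lemma: it is quoted from Moschopoulos (1985) and Furman and Landsman (2005) and used as an auxiliary tool, so there is no in-paper argument to compare against beyond the original references, whose method your write-up reproduces faithfully.
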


\begin{theorem}
\label{ggamma} Let $\mathbf{X}\backsim Pa_{1,\ldots,n}^c(\boldsymbol{\sigma},\ \boldsymbol{\gamma},\ \gamma_{n+1})$ as in Definition \ref{defMPflexGen},
then
$X_{-}\backsim Pa(II)(\alpha_{+}(\boldsymbol{\sigma}),\\ \gamma^\ast +K)$, where
$\alpha_{+}(\boldsymbol{\sigma})=\vee_{j=1}^{n+1} \left(\sum_{i=1}^n\frac{c_{i,j}}{\sigma_i}\right)^{-1}$, $K$ is an integer-valued
r.v. with p.m.f. (\ref{pk})
and $\gamma^\ast=\gamma_1+\cdots+\gamma_{n+1} >1$.
\end{theorem}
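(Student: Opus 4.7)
The plan is to combine the mixture representation of Theorem \ref{minima} with the Moschopoulos-type expansion given by Lemma \ref{FL2005}. The starting point is the explicit d.d.f.\ of $X_-$. Since $X_-=\wedge_{i=1}^n X_i$, we have $\mathbf{P}[X_->x]=\overline{F}_{1,\ldots,n}(x,\ldots,x)$, and plugging into (\ref{MPflexGenddf}) with $\beta_j:=\sum_{i=1}^n c_{i,j}/\sigma_i$ yields
\[
\mathbf{P}[X_->x]=\prod_{j=1}^{n+1}(1+\beta_j x)^{-\gamma_j},\quad x\in\mathbf{R}_+.
\]
By (\ref{GamLS}), each factor $(1+\beta_j x)^{-\gamma_j}$ is precisely the Laplace transform of $Z_j\sim Ga(\gamma_j,1/\beta_j)$ evaluated at $x$. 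Hence, taking $Z_1,\ldots,Z_{n+1}$ mutually independent and setting $\Lambda=Z_1+\cdots+Z_{n+1}$, the above product equals $\mathbf{E}[e^{-\Lambda x}]$. This recovers the mixture $X_-\mid\Lambda=\lambda\sim Exp(\lambda)$ already asserted in Theorem \ref{minima}, with the rate parameters of the summands identified as $\alpha_j=1/\beta_j=\left(\sum_{i=1}^n c_{i,j}/\sigma_i\right)^{-1}$.

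Next I would invoke Lemma \ref{FL2005} with these $Z_j$'s. It gives the random-shape representation $\Lambda\sim Ga(\gamma^\ast+K,\ \alpha_+)$, where $\gamma^\ast=\sum_{j=1}^{n+1}\gamma_j$, the rate is $\alpha_+=\vee_{j=1}^{n+1}\alpha_j=\vee_{j=1}^{n+1}\left(\sum_{i=1}^n c_{i,j}/\sigma_i\right)^{-1}=\alpha_+(\boldsymbol{\sigma})$, and $K$ is a non-negative integer-valued r.v.\ with p.m.f.\ (\ref{pk}) determined through the recursion stated in Lemma \ref{FL2005} with $(\alpha_i,\gamma_i)$ replaced by $(1/\beta_j,\gamma_j)$.

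Finally, I would uncondition on $\Lambda$ in two steps: conditional on $K=k$, one has $\Lambda\sim Ga(\gamma^\ast+k,\alpha_+)$ and therefore
\[
\mathbf{P}[X_->x\mid K=k]=\int_0^\infty e^{-\lambda x}\,g(\lambda;\gamma^\ast+k,\alpha_+)\,d\lambda=\left(1+\frac{x}{\alpha_+(\boldsymbol{\sigma})}\right)^{-(\gamma^\ast+k)},
\]
which, in view of (\ref{Par2cdf}) with $\mu=0$, is the d.d.f.\ of $Pa(II)(\alpha_+(\boldsymbol{\sigma}),\gamma^\ast+k)$. Averaging with respect to the p.m.f.\ of $K$ gives the claimed representation $X_-\sim Pa(II)(\alpha_+(\boldsymbol{\sigma}),\gamma^\ast+K)$.

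The main obstacle is purely bookkeeping rather than mathematical: one must correctly translate the construction-level parameters $(c_{i,j},\sigma_i)$ into the rate parameters $1/\beta_j$ of the auxiliary gammas so that the ``largest rate'' in Lemma \ref{FL2005} lines up with the stated $\alpha_+(\boldsymbol{\sigma})=\vee_j\beta_j^{-1}$, and then verify that the ratios $\alpha_j/\alpha_+=\beta_+/\beta_j$ appearing in the recursion for $\delta_k$ produce the p.m.f.\ (\ref{pk}) as written. The hypothesis $\gamma^\ast>1$ plays no role in deriving the distribution itself; it is imposed only to guarantee that $\gamma^\ast+K>1$ almost surely, so that $\mathbf{E}[X_-]$ is finite whenever needed in the sequel.
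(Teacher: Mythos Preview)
Your proposal is correct and follows essentially the same route as the paper: use the mixture representation of Theorem \ref{minima} to write $\overline{F}_-(x)=\mathbf{E}[e^{-\Lambda x}]$ with $\Lambda=\sum_j Z_j$ and $Z_j\sim Ga(\gamma_j,\beta_j^{-1})$, apply Lemma \ref{FL2005} to get $\Lambda\sim Ga(\gamma^\ast+K,\alpha_+(\boldsymbol{\sigma}))$, and then integrate out $\lambda$ via (\ref{GamLS}) to obtain the mixed Pareto d.d.f. Your observation that $\gamma^\ast>1$ is irrelevant to the distributional identity and serves only to ensure $\mathbf{E}[X_-]<\infty$ is also accurate.
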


While Theorem \ref{ggamma} demonstrates that the minima r.v. $X_-$ is distributed mixed Pareto with random
tail index parameter, the next theorem asserts that the maxima r.v. $X_+$ has a d.d.f.
that is a linear combination of the d.d.f.'s of such mixed Pareto-distributed r.v.'s.
The proof is similar to the one of Proposition 2 in Vernic (2011) and is thus omitted.
\begin{theorem}
\label{maxrv}
Assume that $\mathbf{X}\backsim Pa_{1,\ldots,n}^c(\boldsymbol{\sigma},\ \boldsymbol{\gamma},\ \gamma_{n+1})$ as in Definition \ref{defMPflexGen}, then
the d.d.f. of the maxima r.v.
is given by
\begin{equation}
\label{maxddf}
\overline{F}_+(x)=\sum_{\mathcal{S}\subseteq\{1,\ldots,n\}} (-1)^{|\mathcal{S}|-1}
\overline{F}_{\mathcal{S}-}(x),\ x\in\mathbf{R}_{ +},
\end{equation}
where $X_{\mathcal{S}-}=\wedge_{s\in\mathcal{S}\subseteq\{1,\ldots,n\}}X_s$ and
$X_{\mathcal{S}-}\sim F_{\mathcal{S}-}$.
\end{theorem}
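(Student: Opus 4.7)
The plan is to apply the inclusion--exclusion principle directly to the event $\{X_{+}>x\}$. The multivariate Pareto structure of $\mathbf{X}$ does not actually enter this step of the argument, and it is needed only afterwards, through Theorem \ref{ggamma}, to make each summand $\overline{F}_{\mathcal{S}-}$ explicit.

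First I would observe that
\[
\{X_{+}>x\}=\bigcup_{i=1}^{n}\{X_i>x\},
\]
since the maximum of finitely many random variables exceeds $x$ if and only if at least one of them does. Taking probabilities and expanding this union over the $2^{n}-1$ nonempty subsets of $\{1,\ldots,n\}$ by the usual inclusion--exclusion formula yields
\[
\overline{F}_{+}(x)=\sum_{\emptyset\neq\mathcal{S}\subseteq\{1,\ldots,n\}}(-1)^{|\mathcal{S}|-1}\,\mathbf{P}\!\left[\bigcap_{s\in\mathcal{S}}\{X_s>x\}\right].
\]
Next I would identify each intersection with a minima event by the elementary equality $\bigcap_{s\in\mathcal{S}}\{X_s>x\}=\{\wedge_{s\in\mathcal{S}}X_s>x\}=\{X_{\mathcal{S}-}>x\}$, whence $\mathbf{P}[\cap_{s\in\mathcal{S}}\{X_s>x\}]=\overline{F}_{\mathcal{S}-}(x)$. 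Substituting produces precisely the asserted formula (\ref{maxddf}).

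There is no genuine obstacle in this argument; it is essentially a rewriting of inclusion--exclusion in terms of minima rather than joint survival functions. The only points that require a little care are (a) maintaining the convention $\mathcal{S}\neq\emptyset$ in the sum so as to match the sign $(-1)^{|\mathcal{S}|-1}$ in (\ref{maxddf}), and (b) noting that each sub-vector $(X_s)_{s\in\mathcal{S}}$ is itself a member of the family $Pa_{1,\ldots,|\mathcal{S}|}^{c_{\mathcal{S}}}$ with the matrix $c$ restricted to the rows indexed by $\mathcal{S}$; combined with Theorem \ref{ggamma} this exhibits $\overline{F}_{\mathcal{S}-}$ in closed mixed-Pareto form, so that the right-hand side of (\ref{maxddf}) is fully explicit in the parameters $(\boldsymbol{\sigma},\boldsymbol{\gamma},\gamma_{n+1},c)$.
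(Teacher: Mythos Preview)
Your proposal is correct and is exactly the standard inclusion--exclusion argument one would expect here; the paper itself omits the proof, stating only that it ``is similar to the one of Proposition 2 in Vernic (2011) and is thus omitted,'' and that argument is precisely the inclusion--exclusion computation you have written out. Your additional remarks about the sub-vector $(X_s)_{s\in\mathcal{S}}$ again belonging to the same family and the resulting explicit form via Theorem~\ref{ggamma} are accurate and helpful, though not strictly required for the statement of (\ref{maxddf}) itself.
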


\section{Bivariate quantities of interest}
\label{sec-biv-stuff}

It is worthwhile to make an additional observation before stating the main result of
this section. Namely, we note in passing that for $1\leq k\neq l\leq n$,
a r.v. $(\Xi_k,\ \Xi_l)'$ distributed the bivariate gamma per Definition
\ref{GamDefG}
and an $(n+1)$-variate r.v.
$\mathbf{Y}=(Y_1,\ldots,Y_{n+1})'$ having mutually independent coordinates
$Y_j\backsim Ga(\gamma_j,1),$ $\gamma_j\in\mathbf{R}_+$,
the
following stochastic representation holds
\begin{equation}
\label{StRep}
(\sigma_k \Xi_k,\ \sigma_l \Xi_l)'
\overset{d}{=}
(Y_{c,(k,l)}+Y_{c,k},\ Y_{c,(k,l)}+Y_{c,l})',
\end{equation}
where $Y_{c,(k,l)}=\sum_{j=1}^{n+1}c_{k,j}c_{l,j}Y_j$, $Y_{c,k}=\sum_{j=1}^{n+1}c_{k,j}(1-c_{l,j})Y_j$ and $Y_{c,l}=\sum_{j=1}^{n+1}c_{l,j}(1-c_{k,j})Y_j$ are mutually independent
gamma-distributed r.v.'s with the shape parameters $\gamma_{c,(k,l)}=\sum_{j=1}^{n+1}c_{k,j}c_{l,j}\gamma_j$,
$\gamma_{c,k}=\sum_{j=1}^{n+1}c_{k,j}(1-c_{l,j})\gamma_j$ and
$\gamma_{c,l}=\sum_{j=1}^{n+1}c_{l,j}(1-c_{k,j})\gamma_j$, respectively.  

We next show that the covariance of a random couple within the multivariate Pareto of
interest in this paper can be formulated using the $(q+1)\times q$ hypergeometric function
(see, Gradshteyn and Ryzhik, 2007), which is formulated as
\begin{eqnarray}
\label{3F2}
_{q+1}F_q(a_1,\ldots,a_{q+1};b_1,\ldots,b_q;z):=\sum_{k=0}^{\infty}\frac{(a_1)_k,\ldots,(a_{q+1})_k }{(b_1)_k,\ldots,(b_q)_k}\frac{z^k}{k!},\
\end{eqnarray}
where $q\in\mathbf{Z}_+$.
For $a_1,\ldots,a_{q+1}$ all positive, and these are the cases of interest in the present paper,
the radius of convergence of the series is the open disk $|z|<1$. On the boundary $|z|=1$, the series
converges absolutely if $h:=b_1+\cdots + b_q-a_1-\cdots -a_{q+1}>0$, and it
converges except at $z=1$ if $0\geq h>-1$.
\begin{theorem}
\label{cov}
Let $\mathbf{X}\backsim Pa_{1,\ldots,n}^c(\boldsymbol{\sigma},\ \boldsymbol{\gamma},\ \gamma_{n+1})$ as in Definition \ref{defMPflexGen} and assume that
both $\gamma_{c,k}^\ast$ and $\gamma_{c,l}^\ast$ exceed two, then, for
$0\leq k\neq l\leq n$,
\[
\mathbf{Cov}[X_k,\ X_l]=
\sigma_k\sigma_l
\frac{1}{(\gamma_{c,k}^\ast-1)(\gamma_{c,l}^\ast-1)}
\left({}_3F_2\left(\gamma_{c,(k,l)},1,1;\gamma_{c,k}^\ast,\gamma_{c,l}^\ast;1\right)-1\right).
\]
\end{theorem}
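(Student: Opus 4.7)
The plan is to compute $\mathbf{E}[X_k X_l]$ via the multiplicative background risk representation of Theorem \ref{CharLem}, reduce the task to a triple-gamma integral using (\ref{StRep}), and then subtract the product of marginal means supplied by (\ref{Epar}).

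First, Theorem \ref{CharLem} yields $(X_k, X_l)\overset{d}{=}(\Lambda_k/\Xi_k,\Lambda_l/\Xi_l)$ with $\Lambda_k,\Lambda_l\sim Exp(1)$ mutually independent and independent of $(\Xi_k,\Xi_l)$, so $\mathbf{E}[X_k X_l]=\mathbf{E}[1/(\Xi_k\Xi_l)]$. Next, (\ref{StRep}) allows me to write $(\sigma_k\Xi_k,\sigma_l\Xi_l)\overset{d}{=}(V+U,V+W)$ for independent $V\sim Ga(a,1)$, $U\sim Ga(b,1)$, $W\sim Ga(c,1)$, where $a:=\gamma_{c,(k,l)}$, $b:=\gamma_{c,k}$, $c:=\gamma_{c,l}$ so that $a+b=\gamma_{c,k}^\ast$ and $a+c=\gamma_{c,l}^\ast$. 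Thus the problem reduces to evaluating
$$I:=\mathbf{E}\!\left[\frac{1}{(V+U)(V+W)}\right].$$
I would use $1/(v+u)=\int_0^\infty e^{-(v+u)s}\,ds$ and analogously for $1/(v+w)$, then interchange expectation and integration by Tonelli and apply the Laplace transforms $\mathbf{E}[e^{-(s+t)V}]=(1+s+t)^{-a}$, $\mathbf{E}[e^{-sU}]=(1+s)^{-b}$, $\mathbf{E}[e^{-tW}]=(1+t)^{-c}$ to obtain
$$I=\int_0^\infty\!\!\int_0^\infty (1+s+t)^{-a}(1+s)^{-b}(1+t)^{-c}\,ds\,dt.$$

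The decisive step is the substitution $s=x/(1-x)$, $t=y/(1-y)$, under which $1+s+t=(1-xy)/((1-x)(1-y))$ and the Jacobian combines with the three factors to produce
$$I=\int_0^1\!\!\int_0^1 (1-x)^{a+b-2}(1-y)^{a+c-2}(1-xy)^{-a}\,dx\,dy.$$
Expanding $(1-xy)^{-a}=\sum_{k\geq 0}(a)_k(xy)^k/k!$ and integrating termwise, the two Beta integrals evaluate via $B(k+1,m-1)=\Gamma(k+1)\Gamma(m-1)/\Gamma(k+m)=k!/((m-1)(m)_k)$; after using $(1)_k=k!$ the series collapses to
$$I=\frac{1}{(a+b-1)(a+c-1)}\,{}_3F_2(a,1,1;a+b,a+c;1).$$

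Finally, multiplying by $\sigma_k\sigma_l$ gives $\mathbf{E}[X_kX_l]$, and subtracting $\mathbf{E}[X_k]\mathbf{E}[X_l]=\sigma_k\sigma_l/((\gamma_{c,k}^\ast-1)(\gamma_{c,l}^\ast-1))$ from (\ref{Epar}) yields the claimed identity; the assumption $\gamma_{c,k}^\ast,\gamma_{c,l}^\ast>2$ supplies the product moment and, via the parametric convergence criterion $h>0$ mentioned just after (\ref{3F2}), ensures convergence of the ${}_3F_2$ at unit argument. The main obstacle is the evaluation of $I$: the substitution $s=x/(1-x)$ is the crucial trick, since it turns a Laplace-type integral whose coupling is the shared gamma $V$ into a Beta-type integral whose binomial expansion aligns exactly with the Pochhammer structure of ${}_3F_2$.
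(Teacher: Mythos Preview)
Your proof is correct and follows the same overall strategy as the paper: both invoke Theorem \ref{CharLem} and representation (\ref{StRep}) to reduce $\mathbf{E}[X_kX_l]$ to the double integral $\int_0^\infty\!\int_0^\infty (1+s+t)^{-a}(1+s)^{-b}(1+t)^{-c}\,ds\,dt$, and then subtract the product of means from (\ref{Epar}). The only tactical difference lies in how that integral is evaluated---the paper identifies the inner integral as a ${}_2F_1$ via Gradshteyn--Ryzhik 3.197(5) and then applies 7.512(5) to obtain the ${}_3F_2$, whereas your substitution $s=x/(1-x)$, $t=y/(1-y)$ followed by the binomial expansion of $(1-xy)^{-a}$ and termwise Beta integration reaches the same series directly; your route is more self-contained, the paper's more compact given access to the tables.
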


An immediate consequence of Theorem \ref{cov} is that the maximal attainable Pearson
correlation in the context of the multivariate Pareto distribution introduced in the
present paper is not one. This consequence is however solely a
result of the fact that the
Pearson index of correlation exists only if the involved second moments are finite, a
pitfall that is well-known to non-life
actuaries, which often deal with heavy-tailed losses (see, Embrechts et al., 2002).

\begin{corollary}
\label{corrbounds}
Let $\mathbf{X}\backsim Pa_{1,\ldots,n}^c(\boldsymbol{\sigma},\ \boldsymbol{\gamma},\ \gamma_{n+1})$ and assume that
both $\gamma_{c,k}^\ast$ and $\gamma_{c,l}^\ast$ exceed two for
$0\leq k\neq l\leq n$, then, for the Pearson correlation, it holds that $\mathbf{Corr}[X_k,\
X_l]\in [0,\ 1/2)$.
\end{corollary}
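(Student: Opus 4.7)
The correlation is the covariance from Theorem~\ref{cov} divided by the product of the standard deviations read off from (\ref{Varpar}). After cancellation of the common factor $\sigma_k\sigma_l/\bigl((\gamma_{c,k}^\ast-1)(\gamma_{c,l}^\ast-1)\bigr)$, one obtains
\[
\mathbf{Corr}[X_k, X_l] \;=\; \sqrt{\frac{(\gamma_{c,k}^\ast-2)(\gamma_{c,l}^\ast-2)}{\gamma_{c,k}^\ast\,\gamma_{c,l}^\ast}}\,\Bigl({}_3F_2\bigl(\gamma_{c,(k,l)},1,1;\gamma_{c,k}^\ast,\gamma_{c,l}^\ast;1\bigr)-1\Bigr),
\]
so the problem reduces to two-sided estimation of the bracketed hypergeometric quantity. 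Non-negativity is immediate from this representation: every parameter in the defining series (\ref{3F2}) is positive, so each summand is non-negative and the $m=0$ term equals $1$; hence the bracket is $\geq 0$ and $\mathbf{Corr}[X_k,X_l]\geq 0$.

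For the strict upper bound, the key observation is that, since $c_{k,j},c_{l,j}\in\{0,1\}$,
\[
\gamma_{c,(k,l)} \;=\; \sum_{j=1}^{n+1} c_{k,j}c_{l,j}\gamma_j \;\leq\; \min(\gamma_{c,k}^\ast,\gamma_{c,l}^\ast),
\]
which yields the term-by-term inequalities $(\gamma_{c,(k,l)})_m \leq (\gamma_{c,k}^\ast)_m$ and $(\gamma_{c,(k,l)})_m \leq (\gamma_{c,l}^\ast)_m$. Dropping either ratio in (\ref{3F2}) collapses the $_3F_2$ down to a $_2F_1(1,1;c;1)$ with $c\in\{\gamma_{c,k}^\ast,\gamma_{c,l}^\ast\}$, which evaluates in closed form by Gauss's summation (applicable precisely because the hypothesis $c>2$ forces $c>a+b=2$):
\[
{}_2F_1(1,1;c;1) \;=\; \frac{\Gamma(c)\Gamma(c-2)}{\Gamma(c-1)^2} \;=\; \frac{c-1}{c-2}.
\]
This gives the two one-sided bounds ${}_3F_2-1 \leq 1/(\gamma_{c,k}^\ast-2)$ and ${}_3F_2-1 \leq 1/(\gamma_{c,l}^\ast-2)$. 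Multiplying them and taking square roots produces
\[
{}_3F_2\bigl(\gamma_{c,(k,l)},1,1;\gamma_{c,k}^\ast,\gamma_{c,l}^\ast;1\bigr) - 1 \;\leq\; \frac{1}{\sqrt{(\gamma_{c,k}^\ast-2)(\gamma_{c,l}^\ast-2)}},
\]
so that inserting this into the correlation formula cancels the radical prefactor and yields
\[
\mathbf{Corr}[X_k,X_l] \;\leq\; \frac{1}{\sqrt{\gamma_{c,k}^\ast\,\gamma_{c,l}^\ast}} \;<\; \frac{1}{2},
\]
with the strict inequality following from $\gamma_{c,k}^\ast,\gamma_{c,l}^\ast>2$.

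The main subtlety is the passage to a symmetric bound on ${}_3F_2-1$: a single one-sided bound gives an asymmetric estimate which, when combined with the radical prefactor $\sqrt{(\gamma_{c,k}^\ast-2)(\gamma_{c,l}^\ast-2)}$, does not yield a universal constant; applying both bounds and taking the geometric mean is what makes the $(\gamma_{c,k}^\ast-2)(\gamma_{c,l}^\ast-2)$ factors cancel and leaves the clean constant $1/2$.
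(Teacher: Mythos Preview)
Your proof is correct and takes a genuinely different route from the paper's. The paper argues via extremal parameter configurations: for the lower bound it sets $\gamma_{c,(k,l)}=0$ so that the correlation vanishes, and for the upper bound it sends $\gamma_{c,k},\gamma_{c,l}\to 0$ (forcing $\gamma_{c,k}^\ast=\gamma_{c,l}^\ast=\gamma_{c,(k,l)}$), simplifies the ${}_3F_2$ to ${}_2F_1(1,1;\gamma_{c,(k,l)};1)=(\gamma_{c,(k,l)}-1)/(\gamma_{c,(k,l)}-2)$, and reads off the limiting correlation $1/\gamma_{c,(k,l)}<1/2$. That argument locates the extremes over the parameter space rather than bounding the correlation pointwise, and strictly speaking it relies on an unstated monotonicity step to pass from the limiting configuration to the universal inequality. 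Your termwise comparison $(\gamma_{c,(k,l)})_m\leq(\gamma_{c,k}^\ast)_m$ and $(\gamma_{c,(k,l)})_m\leq(\gamma_{c,l}^\ast)_m$, combined with the geometric-mean device, sidesteps this and in fact delivers the sharper intermediate bound $\mathbf{Corr}[X_k,X_l]\leq 1/\sqrt{\gamma_{c,k}^\ast\gamma_{c,l}^\ast}$, from which the strict $1/2$ bound follows at once. One minor remark: $\gamma_{c,(k,l)}$ may equal zero (if no index $j$ has $c_{k,j}=c_{l,j}=1$), so your phrase ``every parameter \ldots\ positive'' is slightly off; but then $(\gamma_{c,(k,l)})_m=0$ for $m\geq 1$, the ${}_3F_2$ collapses to $1$, the correlation is exactly $0$, and your bounds hold trivially.
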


Another consequence of Theorem \ref{cov} pertains to two special cases of the
multivariate Pareto introduced in this paper, and it is formulated as the following corollary.
\begin{corollary}
\label{CL2009Cov}
Let $\mathbf{X}_1=(X_{1,1},\ldots, X_{1,n})'\sim Pa_{1,\ldots,n}^{(I)}$ and
$\mathbf{X}_2=(X_{2,1},\ldots,X_{2,n})'\sim Pa_{1,\ldots,n}^{(II)}$
be distributed, respectively, the multivariate flexible Pareto of type $I$ and $II$ of Chiragiev and
Landsman (2009). Then the corresponding covariances are readily obtained,
for $0\leq k \neq l\leq n$, as
\begin{eqnarray}
\label{covflex1}
&&\mathbf{Cov}[X_{1,k},\ X_{1,l}]  \\
&=&\frac{\sigma_k\sigma_l}{(\gamma_k+\gamma_{n+1}-1)(\gamma_l+\gamma_{n+1}-1)}
\left({}_3F_2\left(\gamma_{n+1},1,1;\gamma_k+\gamma_{n+1},\gamma_l+\gamma_{n+1};1\right)-1\right) \notag,
\end{eqnarray}
for $\gamma_k+\gamma_{n+1}>2$ and $\gamma_l+\gamma_{n+1}>2$,
and as
\begin{eqnarray}
\label{covflex2}
\mathbf{Cov}[X_{2,k},\ X_{2,l}]&=& \frac{\sigma_k\sigma_l}{(\gamma_{c,k}^\ast-1)(\gamma_{c,l}^\ast-1)(\gamma_{c,l}^\ast-2)},
\end{eqnarray}
for $\gamma_{c,k}^\ast=\sum_{j=1}^k\gamma_j>2$ and $\gamma_{c,l}^\ast=\sum_{j=1}^l\gamma_j>2$.
\end{corollary}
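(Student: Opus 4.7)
The plan is to derive both covariance formulas by direct substitution of the pertinent $c$-matrix entries into the general expression of Theorem \ref{cov}, and then to simplify the resulting ${}_3F_2$ where applicable.

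First, for the flexible Pareto of type $I$, recall that $c_{i,i}=c_{i,n+1}\equiv 1$ and zero otherwise. I would read off, for $k\neq l$, that $\gamma_{c,k}^\ast=\gamma_k+\gamma_{n+1}$, $\gamma_{c,l}^\ast=\gamma_l+\gamma_{n+1}$, and $\gamma_{c,(k,l)}=\sum_{j=1}^{n+1} c_{k,j}c_{l,j}\gamma_j$. Since $c_{k,j}c_{l,j}=1$ only when both indicators fire simultaneously, and since $k\neq l$, the only common column is $j=n+1$, whence $\gamma_{c,(k,l)}=\gamma_{n+1}$. Plugging these into the formula of Theorem \ref{cov} yields (\ref{covflex1}) verbatim, with no further analytic input required.

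For the flexible Pareto of type $II$, we have $c_{i,j}\equiv 1$ for $1\le j\le i\le n$ and zero otherwise. Without loss of generality I take $k<l$, so that $\gamma_{c,k}^\ast=\sum_{j=1}^k\gamma_j$ and $\gamma_{c,l}^\ast=\sum_{j=1}^l\gamma_j$. The key observation is that $c_{k,j}c_{l,j}=1$ if and only if $j\le k$ (because $j\le k$ implies $j\le l$), and therefore $\gamma_{c,(k,l)}=\sum_{j=1}^k\gamma_j=\gamma_{c,k}^\ast$. Substituting into Theorem \ref{cov}, the ${}_3F_2$ becomes ${}_3F_2(\gamma_{c,k}^\ast,1,1;\gamma_{c,k}^\ast,\gamma_{c,l}^\ast;1)$, whose upper parameter $\gamma_{c,k}^\ast$ cancels the equal lower parameter term by term in (\ref{3F2}), reducing the series to the Gauss hypergeometric
\[
{}_2F_1(1,1;\gamma_{c,l}^\ast;1)=\sum_{k=0}^{\infty}\frac{(1)_k(1)_k}{(\gamma_{c,l}^\ast)_k\,k!}.
\]

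The final step is to evaluate this ${}_2F_1$ at unity by the classical Gauss summation theorem, which applies since $\gamma_{c,l}^\ast-1-1=\gamma_{c,l}^\ast-2>0$ by assumption, giving
\[
{}_2F_1(1,1;\gamma_{c,l}^\ast;1)=\frac{\Gamma(\gamma_{c,l}^\ast)\Gamma(\gamma_{c,l}^\ast-2)}{\Gamma(\gamma_{c,l}^\ast-1)^2}=\frac{\gamma_{c,l}^\ast-1}{\gamma_{c,l}^\ast-2}.
\]
Subtracting one yields $1/(\gamma_{c,l}^\ast-2)$, and substituting back into Theorem \ref{cov} produces (\ref{covflex2}). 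The only mildly tricky step is recognizing the upper/lower parameter cancellation and applying Gauss's theorem; everything else is bookkeeping of the $c$-matrix entries.
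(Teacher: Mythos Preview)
Your proof is correct and mirrors the paper's own argument: both parts are obtained by specializing the $c$-matrix in Theorem~\ref{cov}, and for type~II the paper likewise cancels the coincident upper and lower parameter to reduce ${}_3F_2$ to ${}_2F_1(1,1;\gamma_{c,l}^\ast;1)$ and then evaluates the latter at unity. The only cosmetic difference is that you invoke Gauss's summation theorem explicitly, whereas the paper simply states the resulting value.
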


 We note in passing that expression (\ref{covflex2}) confirms the one derived
in Chiragiev and Landsman (2009), whereas formula (\ref{covflex1}) complements the discussion
therein. Also, the covariance of two r.v.'s coming from the Arnold's multivariate
Pareto distribution (see, Arnold, 1983) is readily obtained from both ($\ref{covflex1}$) and
($\ref{covflex2}$). More specifically, we set $\gamma_k=\gamma_l\equiv 0$
for all $0\leq k \neq l\leq n$ and verify that
(\ref{covflex1}) reduces to
\[
\frac{\sigma_k\sigma_l}{(\gamma_{n+1}-1)(\gamma_{n+1}-1)}
\left({}_2F_1\left(1,1;\gamma_{n+1};1\right)-1\right)=
\frac{\sigma_k\sigma_l}{(\gamma_{n+1}-1)^2(\gamma_{n+1}-2)},
\]
for $\gamma_{n+1}>2$. The verification is straightforward in the case of (\ref{covflex2}).

Generalized hypergeometric function (\ref{3F2}) plays an important role when deriving the
centred regression function $r(y)=\mathbf{E}[X-\mathbf{E}[X]|\ Y=y]$, where $y\in\mathbf{R}_+$
(see, Furman and Zitikis, 2008b, 2010, for applications of the function in insurance and
finance).  We next present the conditional d.d.f., followed by the centered regression function for a pair of r.v.'s having the probability law
as in Definition \ref{defMPflexGen}. To this end, let
\[
m(x)=\frac{\sigma_k}{\gamma_{c,(k,l)}}\left(
1+\frac{x}{\sigma_l}
\right),\ x\in\mathbf{R}_+.
\]
\begin{theorem}
\label{cd}
Let $\mathbf{X}\backsim Pa_{1,\ldots,n}^c(\boldsymbol{\sigma},\ \boldsymbol{\gamma},\ \gamma_{n+1})$ as in Definition \ref{defMPflexGen}, the d.d.f. of $X_k$ given $X_l=x_l,\ 0\leq k\neq l\leq n$, is formulated as
\begin{eqnarray}
\label{CondFunction}
&&\mathbf{P}[X_k>x_k|X_l=x_l]\\
=&&\left(\frac{\gamma_{c,(k,l)}}{\gamma_{c,l}^\ast}+\frac{\gamma_{c,l}}{\gamma_{c,l}^\ast} \left(1+\frac{x_k}{\gamma_{c,(k,l)}m(x_l)} \right)\right)\left(1+\frac{x_k}{\sigma_k} \right)^{-\gamma_{c,k}}\left(1+\frac{x_k}{\gamma_{c,(k,l)}m(x_l)} \right)^{-\gamma_{c,(l,k)}-1}, \nonumber
\end{eqnarray}
where $x_k,x_l\in \mathbf{R}_+$.
\end{theorem}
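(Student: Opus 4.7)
The plan is to compute the conditional d.d.f. from the bivariate marginal of $(X_k,X_l)$ via the standard identity
\[
\mathbf{P}[X_k>x_k\mid X_l=x_l]=-\frac{1}{f_l(x_l)}\frac{\partial}{\partial x_l}\overline{F}_{k,l}(x_k,x_l),
\]
where $f_l$ is obtained from Proposition \ref{Nt1}. First, I would read off the bivariate marginal $\overline{F}_{k,l}(x_k,x_l)=\overline{F}_{1,\ldots,n}(\mathbf{x})\big|_{x_i=0,\ i\neq k,l}$ from (\ref{MPflexGenddf}) and group the $n+1$ factors according to the value of the pair $(c_{k,j},c_{l,j})\in\{0,1\}^2$. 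In view of (\ref{StRep}), this yields the clean decomposition
\[
\overline{F}_{k,l}(x_k,x_l)=\Bigl(1+\tfrac{x_k}{\sigma_k}+\tfrac{x_l}{\sigma_l}\Bigr)^{-\gamma_{c,(k,l)}}\Bigl(1+\tfrac{x_k}{\sigma_k}\Bigr)^{-\gamma_{c,k}}\Bigl(1+\tfrac{x_l}{\sigma_l}\Bigr)^{-\gamma_{c,l}},
\]
with $\gamma_{c,l}^\ast=\gamma_{c,(k,l)}+\gamma_{c,l}$ (and similarly for $k$).

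Next, I would differentiate this product with respect to $x_l$ using the product rule on the two factors that actually depend on $x_l$. This produces two additive terms: one in which the exponent of the mixed factor is lowered to $-\gamma_{c,(k,l)}-1$ (with prefactor $\gamma_{c,(k,l)}/\sigma_l$), and one in which the exponent of the pure-$x_l$ factor is lowered to $-\gamma_{c,l}-1$ (with prefactor $\gamma_{c,l}/\sigma_l$). Dividing by $f_l(x_l)=(\gamma_{c,l}^\ast/\sigma_l)(1+x_l/\sigma_l)^{-\gamma_{c,l}^\ast-1}$ kills the $\sigma_l$'s and produces the weights $\gamma_{c,(k,l)}/\gamma_{c,l}^\ast$ and $\gamma_{c,l}/\gamma_{c,l}^\ast$.

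The crux of the simplification is the elementary identity
\[
1+\tfrac{x_k}{\sigma_k}+\tfrac{x_l}{\sigma_l}=\Bigl(1+\tfrac{x_l}{\sigma_l}\Bigr)\Bigl(1+\tfrac{x_k}{\sigma_k(1+x_l/\sigma_l)}\Bigr)=\Bigl(1+\tfrac{x_l}{\sigma_l}\Bigr)\Bigl(1+\tfrac{x_k}{\gamma_{c,(k,l)}m(x_l)}\Bigr),
\]
where the last equality uses the definition $m(x_l)=(\sigma_k/\gamma_{c,(k,l)})(1+x_l/\sigma_l)$ given in the statement. Substituting this into the two terms and exploiting $\gamma_{c,l}^\ast-\gamma_{c,l}=\gamma_{c,(k,l)}$, every remaining power of $(1+x_l/\sigma_l)$ cancels, and the two terms share the common factor $(1+x_k/\sigma_k)^{-\gamma_{c,k}}(1+x_k/(\gamma_{c,(k,l)}m(x_l)))^{-\gamma_{c,(k,l)}-1}$. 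Pulling this out leaves the convex combination displayed in (\ref{CondFunction}), noting that $\gamma_{c,(k,l)}=\gamma_{c,(l,k)}$ by symmetry of the definition.

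The argument itself is entirely computational, so I do not expect a conceptual obstacle. The main place where care is required is the bookkeeping: one must correctly split the $n+1$ gamma components of the bivariate marginal into the three groups corresponding to $(c_{k,j},c_{l,j})\in\{(1,1),(1,0),(0,1)\}$, and track which exponent is decremented when differentiating, since a sign error or an exponent mismatch at this stage destroys the later cancellation with $f_l(x_l)$.
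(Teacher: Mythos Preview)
Your proposal is correct and follows essentially the same route as the paper: the paper also starts from the identity $\mathbf{P}[X_k>x_k\mid X_l=x_l]=-\frac{\partial}{\partial x_l}\overline{F}_{X_k,X_l}(x_k,x_l)\big/ f_{X_l}(x_l)$, writes out the two-term derivative of the product, and then appeals to ``plain simplifications'' for the rest. Your write-up in fact spells out those simplifications (the factorization through $m(x_l)$ and the cancellation of the $(1+x_l/\sigma_l)$ powers) more explicitly than the paper does.
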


\begin{theorem}
\label{cregTh}
Let $\mathbf{X}\backsim Pa_{1,\ldots,n}^c(\boldsymbol{\sigma},\ \boldsymbol{\gamma},\ \gamma_{n+1})$ as in Definition \ref{defMPflexGen}, the centred regression function of $X_k$ on $X_l,\ 0\leq k\neq l\leq n$, is given, for $\gamma_{c,k}^\ast>1$, by
\begin{equation}
\label{rk}
r_k(x_l)=m(x_l)\sum_{i=1}^2 a_i\ _2F_1\left({\gamma}_{c,k},1;{\gamma}_{c,k}^\ast+2-i;-\frac{x_l}{\sigma_l}  \right)-\sigma_k/(\gamma_{c,k}^\ast -1),
\end{equation}
where
\[
a_1=\frac{{{\gamma}^2_{c,(k,l)}}}{{\gamma}^*_{c,k}~ {\gamma}^*_{c,l}},\
a_2=\frac{{\gamma}_{c,l} ~{\gamma}_{c,(k,l)}}{{\gamma}^*_{c,l} ({\gamma}^*_{c,k}-1)}\
 \textnormal{ and } x_l\in\mathbf{R}_{+}.
\]
The centred regression function is monotonically-increasing and concave.
\end{theorem}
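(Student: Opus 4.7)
The plan is to compute the conditional mean $\mathbf{E}[X_k|X_l=x_l]$ from the conditional d.d.f.\ of Theorem \ref{cd} by the layer-cake formula $\mathbf{E}[X_k|X_l=x_l]=\int_0^\infty\mathbf{P}[X_k>x_k|X_l=x_l]\,dx_k$, and then to subtract the marginal mean $\sigma_k/(\gamma_{c,k}^\ast-1)$ furnished by Proposition \ref{Nt1}. Substituting (\ref{CondFunction}) and distributing the affine bracket in $x_k$ splits the conditional mean as
\[
\mathbf{E}[X_k|X_l=x_l]=\frac{\gamma_{c,(k,l)}}{\gamma_{c,l}^\ast}J_1(x_l)+\frac{\gamma_{c,l}}{\gamma_{c,l}^\ast}J_2(x_l),
\]
with the two building-block integrals
\[
J_i(x_l)=\int_0^\infty\Bigl(1+\tfrac{x_k}{\sigma_k}\Bigr)^{-\gamma_{c,k}}\Bigl(1+\tfrac{x_k}{\gamma_{c,(k,l)}m(x_l)}\Bigr)^{-\gamma_{c,(k,l)}-2+i}dx_k,\qquad i=1,2.
\]

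To evaluate each $J_i$, I would apply the substitution $t=x_k/(x_k+\sigma_k)$, which turns it into the Euler-type integral
\[
J_i(x_l)=\sigma_k\int_0^1(1-t)^{\gamma_{c,k}^\ast-i}\bigl(1-(1-\beta)t\bigr)^{-(\gamma_{c,(k,l)}+2-i)}dt,\quad\beta=(1+x_l/\sigma_l)^{-1}.
\]
Recognising this as the Euler representation of ${}_2F_1$ with second parameter equal to $1$ gives
\[
J_i(x_l)=\frac{\sigma_k}{\gamma_{c,k}^\ast+1-i}\,{}_2F_1\!\Bigl(\gamma_{c,(k,l)}+2-i,1;\gamma_{c,k}^\ast+2-i;\tfrac{x_l/\sigma_l}{1+x_l/\sigma_l}\Bigr).
\]
The symmetric Pfaff transformation ${}_2F_1(a,b;c;z)=(1-z)^{-b}{}_2F_1(c-a,b;c;z/(z-1))$ together with the identity $\gamma_{c,k}^\ast-\gamma_{c,(k,l)}=\gamma_{c,k}$ then replaces the argument by $-x_l/\sigma_l$ and the first parameter by $\gamma_{c,k}$, producing
\[
J_i(x_l)=\frac{\gamma_{c,(k,l)}m(x_l)}{\gamma_{c,k}^\ast+1-i}\,{}_2F_1\!\bigl(\gamma_{c,k},1;\gamma_{c,k}^\ast+2-i;-x_l/\sigma_l\bigr).
\]
Combining with the prefactors $\gamma_{c,(k,l)}/\gamma_{c,l}^\ast$ and $\gamma_{c,l}/\gamma_{c,l}^\ast$ recovers precisely the constants $a_1$ and $a_2$ declared in the statement, and subtracting $\sigma_k/(\gamma_{c,k}^\ast-1)$ yields (\ref{rk}).

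For the monotonicity and concavity assertions, I would re-express (\ref{rk}) by substituting the Euler representation ${}_2F_1(\gamma_{c,k},1;c;-x_l/\sigma_l)=(c-1)\int_0^1(1-t)^{c-2}(1+tx_l/\sigma_l)^{-\gamma_{c,k}}dt$ and regrouping the two hypergeometric summands onto the common measure $(1-t)^{\gamma_{c,k}^\ast-2}dt$, obtaining
\[
\mathbf{E}[X_k|X_l=x_l]=\frac{\sigma_k}{\gamma_{c,l}^\ast}\Bigl(1+\tfrac{x_l}{\sigma_l}\Bigr)\int_0^1(1-t)^{\gamma_{c,k}^\ast-2}\bigl(\gamma_{c,l}+\gamma_{c,(k,l)}(1-t)\bigr)\Bigl(1+\tfrac{tx_l}{\sigma_l}\Bigr)^{-\gamma_{c,k}}dt,
\]
whose weight is manifestly non-negative on $[0,1]$. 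Differentiating this single integral with respect to $x_l$ under the integral sign and using the monotonicity and convexity of $x_l\mapsto(1+tx_l/\sigma_l)^{-\gamma_{c,k}}$ then reduces the claims $r_k'(x_l)\geq 0$ and $r_k''(x_l)\leq 0$ to sign inspection of the resulting integrands.

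The main obstacle will be the hypergeometric bookkeeping: one must select precisely the \emph{symmetric} Pfaff transformation (rather than the ordinary one) in order to send the first parameter to $\gamma_{c,k}$ and the argument to $-x_l/\sigma_l$ simultaneously, and then regroup the two ${}_2F_1$ summands into a single sign-definite integral so that the monotonicity and concavity follow from direct differentiation under the integral. All remaining steps amount to routine calculus once the identifications have been made.
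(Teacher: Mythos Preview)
Your derivation of formula (\ref{rk}) is correct and, in contrast to the paper, entirely self-contained: the paper's own proof simply observes that the bivariate conditional law coincides with that of the type~I flexible Pareto of Chiragiev and Landsman (2009) via representation (\ref{StRep}), and then cites Theorem~3 of that reference together with Proposition~\ref{Nt1}. Your route---splitting the conditional survival function of Theorem~\ref{cd} into two pieces, evaluating each via the substitution $t=x_k/(x_k+\sigma_k)$ as an Euler integral, and applying the Pfaff transformation in the $b$-variable to send $(\gamma_{c,(k,l)}+2-i,\,z)$ to $(\gamma_{c,k},\,-x_l/\sigma_l)$---is a genuine and explicit alternative that avoids reliance on the external paper. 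The identification of $a_1,a_2$ is correct.

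There is, however, a real gap in your monotonicity and concavity argument. After forming the single integral representation you write down, differentiation in $u:=x_l/\sigma_l$ gives an integrand proportional to
\[
(1-t)^{\gamma_{c,k}^\ast-2}\bigl(\gamma_{c,l}+\gamma_{c,(k,l)}(1-t)\bigr)(1+tu)^{-\gamma_{c,k}-1}\bigl[1-\gamma_{c,k}t+tu(1-\gamma_{c,k})\bigr],
\]
and the bracket is \emph{not} non-negative on all of $[0,1]$ once $\gamma_{c,k}>1$ (for $t$ near $1$ it equals $(1+u)(1-\gamma_{c,k})<0$). So ``sign inspection of the resulting integrands'' fails; the conclusion requires a genuine cancellation argument across $t\in[0,1]$, not a pointwise one. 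A cleaner route to monotonicity is to note directly from Theorem~\ref{cd} that $\mathbf{P}[X_k>x_k\mid X_l=x_l]$ is increasing in $x_l$ for every fixed $x_k$ (since both factors $(1+x_k/M)^{-\gamma_{c,(k,l)}-1}$ and $(1+x_k/M)^{-\gamma_{c,(k,l)}}$, with $M=\sigma_k(1+x_l/\sigma_l)$, increase in $M$), whence the conditional mean is increasing by first-order stochastic dominance. Concavity needs a separate argument; your proposal does not yet supply one, and the paper itself defers this to Chiragiev and Landsman (2009).
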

{\color{blue} We reiterate that our results readily recover the ones derived in Landsman and Chiragiev (2009). More specifically, 
by a simple alignment of notation in Theorem \ref{cregTh} above, we obtain Theorem 3 in loc. cit., whereas by choosing   
$\gamma_{c,k}=0$ in Theorem \ref{cregTh} and hence for
\[
{}_2F_1\left(0,1;{\gamma}_{c,k}^\ast+2-i;-\frac{x_l}{\sigma_l}  \right)=1,\ 
a_1=\frac{\gamma_{c,k}^\ast}{\gamma_{c,l}^\ast} \textnormal{ and }
a_2=\frac{(\gamma_{c,l}^\ast-\gamma_{c,k}^\ast)\gamma_{c,k}^\ast}{\gamma_{c,l}^\ast(\gamma_{c,k}^\ast-1)},
\]
we end up with Theorem 7 therein.}

Clearly, the centred regression function of the new multivariate Pareto distribution
 is not linear, while it is well-known that the classical multivariate Pareto has linear regression
(see, Arnold, 1983).  Theorem \ref{cregTh} confirms the latter fact by
setting $\gamma_{c,(k,l)}=\gamma_{c,k}^\ast=\gamma_{c,l}^\ast\equiv \gamma_{n+1}$ and $\gamma_{c,l}=\gamma_{c,k} \equiv 0$ in (\ref{rk}), which then reduces to the following
linear form
\[
r_k(x_l)=\frac{\sigma_k}{\sigma_l}\left(x_l-\frac{\sigma_l}{\gamma_{n+1}(\gamma_{n+1}-1)}\right)
\]
for $\gamma_{n+1}>1,\ 0\leq k\neq l\leq n$ and $x_l\in\mathbf{R}_+$.

\section{Applications to insurance}
\label{SecAppl}
In what follows, we assume that $\mathbf{X}=(X_1,\ldots,X_n)'$ denotes a risk portfolio
(r.p.) with $X_i,\ i=1,\ldots,n$ representing its risk components (r.c.'s). According to Theorem
\ref{CharLem}, if $\mathbf{X}\sim
 Pa_{1,\ldots,n}^c(\boldsymbol{\sigma},\ \boldsymbol{\gamma},\ \gamma_{n+1})$, then
it admits the multiplicative background risk representation (see, Franke et al., 2006;
 Meyers, 2007; Asimit et al., 2013, 2016).

We next show that the new multivariate Pareto distribution can also be interpreted
as a variant of the classical
minima-based common shock model (see, e.g., Bowers et al,
1997). To this end,
assume that the $i$-th r.c of the r.p. is exposed to the set $\mathcal{R}_i=\{r\in\mathbf{N}:r\leq (n+1)\}$,
 $i=1,\ldots,n$ of risk factors (r.f.'s) and
let the r.v.
$\mathbf{Y}=(Y_1,\ldots,Y_{n+1})'$ stipulate the randomness of actuarial interest associated
with the  r.f.'s.
The following theorem
establishes the minima-based
multiple risk factor representation of the multivariate Pareto proposed in the present paper.
We note in
passing that `$*$'  stands for the mixture operator, i.e., given two appropriately
jointly measurable r.v.'s
${X}_{{\lambda}}\sim C(\cdot;{\lambda})$ and ${\Lambda}\sim H$,
it holds that ${X}_{{\lambda}}*{\Lambda}\overset{d}{=}{X}_{{\Lambda}}$.

\begin{theorem}
\label{MinimaCS}
Let $\mathbf{W}_i=(W_{i,1},\ldots,W_{i.n+1})'$ be r.v.'s with
independent
exponentially-distributed margins
$W_{i,j}\sim Exp(\lambda_{i,j}(\in\mathbf{R}_+)),\ i=1,\ldots,n,\ j=1,\ldots,n+1$, and
let $A_c$ be a deterministic matrix
of zero-one coefficients. Also, let $\mathbf{\Lambda}=(\Lambda_1,\ldots,\Lambda_{n+1})'$
be a r.v. having independent gamma-distributed margins with arbitrary shape parameters
$\gamma_j(\in\mathbf{R}_+)$ and rate parameters equal to $1$,
$j=1,\ldots,n+1$. Set, for $\sigma_i \in \mathbf{R}_+$ and  $i=1,\ldots,n$,
\begin{equation}\label{minX-eq}
X_i=\sigma_i
\bigwedge_{j=1,\ c_{i,j}\neq 0}^{n+1}
(W_{i,j}* \Lambda_j),
\end{equation}
then
$\mathbf{X}=(X_1,\ldots,X_n)'\sim Pa_{1,\ldots,n}^c(\boldsymbol{\sigma},\ \boldsymbol{\gamma},\ \gamma_{n+1})$.
\end{theorem}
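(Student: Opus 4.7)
The plan is to reduce Theorem \ref{MinimaCS} to Theorem \ref{CharLem} via a conditioning argument on the common shock vector $\boldsymbol{\Lambda}=(\Lambda_{1},\ldots,\Lambda_{n+1})'$. Write $U_{i,j}:=W_{i,j}* \Lambda_{j}$. By the definition of the mixture operator, $U_{i,j}\mid\Lambda_{j}=\lambda\sim Exp(\lambda)$; since the $W_{i,j}$'s are mutually independent and the $\Lambda_{j}$'s are independent across $j$, conditional on $\boldsymbol{\Lambda}$ the entire collection $\{U_{i,j}:1\leq i\leq n,\,1\leq j\leq n+1\}$ is mutually independent with $U_{i,j}\mid\boldsymbol{\Lambda}\sim Exp(\Lambda_{j})$.

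First, I would fix $i$ and exploit the standard fact that the minimum of independent exponentials is exponential with rate equal to the sum, giving
\[
\bigwedge_{j:\,c_{i,j}\neq 0} U_{i,j}\,\bigg|\,\boldsymbol{\Lambda}\sim Exp\!\left(\sum_{j=1}^{n+1} c_{i,j}\Lambda_{j}\right),
\]
so that, by the scaling (\ref{minX-eq}),
\[
X_{i}\mid\boldsymbol{\Lambda}\sim Exp(\Xi_{i}),\qquad \Xi_{i}:=\frac{1}{\sigma_{i}}\sum_{j=1}^{n+1}c_{i,j}\Lambda_{j}.
\]
Second, since the index sets $\{W_{i,j}\}_{j}$ are disjoint for different values of $i$, the random variables $X_{1},\ldots,X_{n}$ are conditionally independent given $\boldsymbol{\Lambda}$. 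Third, comparing with Definition \ref{GamDefG} (applied with $\mathbf{Y}=\boldsymbol{\Lambda}$, which has the prescribed independent $Ga(\gamma_{j},1)$ coordinates), the vector $\boldsymbol{\Xi}=(\Xi_{1},\ldots,\Xi_{n})'$ is precisely distributed $Ga_{1,\ldots,n}(\boldsymbol{\gamma}_{c}^{\ast},\boldsymbol{\sigma})$ with $\gamma_{c,i}^{\ast}=\sum_{j}c_{i,j}\gamma_{j}$.

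Combining these three points, and introducing (on an enlarged probability space) i.i.d.\ $Exp(1)$ random variables $\Lambda_{1}^{\ast},\ldots,\Lambda_{n}^{\ast}$ independent of $\boldsymbol{\Xi}$, the conditional structure just derived yields the distributional identity $(X_{1},\ldots,X_{n})'\overset{d}{=}(\Lambda_{1}^{\ast}/\Xi_{1},\ldots,\Lambda_{n}^{\ast}/\Xi_{n})'$. Theorem \ref{CharLem} then immediately identifies the joint law of $\mathbf{X}$ as $Pa_{1,\ldots,n}^{c}(\boldsymbol{\sigma},\boldsymbol{\gamma},\gamma_{n+1})$. As a sanity check, one may also verify the claim directly: conditioning and using the Laplace transform (\ref{GamLS}) of each $\Lambda_{j}$ gives $\overline{F}_{1,\ldots,n}(x_{1},\ldots,x_{n})=\mathbf{E}[e^{-\sum_{i}x_{i}\Xi_{i}}]=\prod_{j=1}^{n+1}(1+\sum_{i}c_{i,j}x_{i}/\sigma_{i})^{-\gamma_{j}}$, matching (\ref{MPflexGenddf}).

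The only delicate step is the bookkeeping around the mixture notation $W_{i,j}*\Lambda_{j}$: one must be explicit that, conditionally on $\boldsymbol{\Lambda}$, the $U_{i,j}$'s are \emph{jointly} independent (not merely pairwise), so that both the conditional minimum over $j$ is an exponential with summed rate and the coordinates $X_{1},\ldots,X_{n}$ are conditionally independent. Everything else is either a direct appeal to Theorem \ref{CharLem} or a routine Laplace transform computation.
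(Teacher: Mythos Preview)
Your proposal is correct and rests on the same core idea as the paper: condition on $\boldsymbol{\Lambda}$, use that the minimum of independent exponentials is exponential with summed rate so that $X_i\mid\boldsymbol{\Lambda}\sim Exp(\sigma_i^{-1}\sum_j c_{i,j}\Lambda_j)$ with conditional independence across $i$, and then uncondition. The only cosmetic difference is that the paper finishes by computing $\overline{F}_{1,\ldots,n}$ directly via the Laplace transform (\ref{GamLS}) of each $\Lambda_j$---precisely what you call your ``sanity check''---whereas your main line routes the conclusion through Theorem~\ref{CharLem}; both arrive at (\ref{MPflexGenddf}) immediately.
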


Theorem \ref{MinimaCS} suggests that the multivariate Pareto distribution
proposed in the present paper might be an appropriate formal framework for modelling
dependent default, survival or failure times when these times are exponentially-distributed
with random parameters. We elaborate on this observation in Section
\ref{sec-numex}.

\subsection{Actuarial risk measurement}
Regulatory accords around the globe require that insurance companies carry out a careful
assessment of their future losses. From now on, the r.v. $X:\Omega \rightarrow \mathbf{R}_+$ is
interpreted as an insurance loss r.v., and $\mathcal{X}$ denotes the collection of such r.v.'s.
\begin{definition}
\label{DefRM}
A risk measure is a functional map
$H:\mathcal{X}\rightarrow[0,\ \infty]$.
\end{definition}

The literature on risk measures is
vast and growing quickly. The following two
indices are arguably the most popular amongst practitioners.
\begin{definition}
\label{VaRDef}
Let $X\in\mathcal{X}$ and fix $q\in[0,\ 1)$, then the Value-at-Risk (VaR) and the Conditional
Tail Expectation (CTE) risk measures are respectively given by
\begin{equation}
\label{VaRdef}
VaR_q[X]=\inf\{x\in\mathbf{R}:\mathbf{P}[X\leq x]\geq q\}
\end{equation}
and
\begin{equation}
\label{CTEdef}
CTE_q[X]=\mathbf{E}[X|\ X>VaR_q[X]].
\end{equation}
\end{definition}
\noindent
We note in passing that both VaR and CTE are distorted as well as weighted risk measures (see, respectively, Wang, 1996; and Furman and
Zitikis, 2008a).
\begin{definition}{Furman and Zitikis (2008a), see also Choo and de Jong (2009, 2010).}
\label{Weightedpcp}
Let $w:\mathbf{R}\rightarrow \mathbf{R}_+$ be a
non-decreasing Borel (weight) function, such
that $0<\mathbf{E}[w(X)]<\infty$,
then the class of weighted risk measures is defined as
\begin{equation}
\label{wpcp}
\pi_{w}[X]=\frac{\mathbf{E}[Xw(X)]}{\mathbf{E}[w(X)]}
\textnormal{ for }X\in\mathcal{X}.
\end{equation}
\end{definition}
Let $v_1,\ w_1:\mathbf{R}^n\rightarrow \mathbf{R}_+$ be two
legitimate weight functions such that all expectations in (\ref{vwpcp}) are finite,
and consider a generalized variant of (\ref{wpcp})
 \begin{equation}
\label{vwpcp}
\pi_{v_1,\ w_1}[\boldsymbol{\Lambda}]=
\frac{\mathbf{E}[v_1(\boldsymbol{\Lambda})w_1(\boldsymbol{\Lambda})]}{\mathbf{E}[w_1(\boldsymbol{\Lambda})]}\
\textnormal{ for }\boldsymbol{\Lambda}:\Omega\rightarrow \mathcal{R}\subseteq \mathbf{R}^n.
\end{equation}
\begin{proposition}
\label{CTEmixt}
Let $X|\boldsymbol{\Lambda} =\boldsymbol{\lambda} \backsim C(\cdot;
\boldsymbol{\lambda}(\in\mathcal{R}\subseteq\mathbf{R}^n))$ and assume that $\boldsymbol{\Lambda} \backsim H_{1,\ldots,n}$, then for any legitimate
weight function, the functional $\pi_w[X]$ admits representation (\ref{vwpcp}).
\end{proposition}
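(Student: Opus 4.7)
The plan is to realize $\pi_w[X]$ as $\pi_{v_1,w_1}[\boldsymbol{\Lambda}]$ by an appropriate choice of the weight pair $(v_1,w_1)$ and then reduce the equality to a single application of the tower property of conditional expectation. More specifically, I would take
\[
w_1(\boldsymbol{\lambda}) = \mathbf{E}[w(X)\mid \boldsymbol{\Lambda}=\boldsymbol{\lambda}]
\quad\textnormal{and}\quad
v_1(\boldsymbol{\lambda}) = \frac{\mathbf{E}[X\,w(X)\mid \boldsymbol{\Lambda}=\boldsymbol{\lambda}]}{\mathbf{E}[w(X)\mid \boldsymbol{\Lambda}=\boldsymbol{\lambda}]},
\]
both seen as Borel functions on $\mathcal{R}\subseteq\mathbf{R}^n$ via the regular conditional distribution $C(\cdot;\boldsymbol{\lambda})$ of $X$ given $\boldsymbol{\Lambda}=\boldsymbol{\lambda}$. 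A preliminary step is to check that these choices are legitimate in the sense of (\ref{vwpcp}): nonnegativity of $w_1$ is inherited from that of $w$; Borel measurability is inherited from the joint measurability of $(x,\boldsymbol{\lambda})\mapsto C(x;\boldsymbol{\lambda})$; and the finiteness constraints $0<\mathbf{E}[w_1(\boldsymbol{\Lambda})]<\infty$ and $\mathbf{E}[v_1(\boldsymbol{\Lambda})w_1(\boldsymbol{\Lambda})]<\infty$ follow from the assumption that $\pi_w[X]$ itself is well defined together with the tower identities below (so that $v_1 w_1$ being unintegrable would contradict $\mathbf{E}[Xw(X)]<\infty$).

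The second step is the main computation, which is essentially a single line. By iterated expectation conditional on $\boldsymbol{\Lambda}$,
\[
\mathbf{E}[v_1(\boldsymbol{\Lambda})\, w_1(\boldsymbol{\Lambda})]
= \mathbf{E}\bigl[\mathbf{E}[X w(X)\mid \boldsymbol{\Lambda}]\bigr]
= \mathbf{E}[X w(X)],
\]
and similarly
\[
\mathbf{E}[w_1(\boldsymbol{\Lambda})]
= \mathbf{E}\bigl[\mathbf{E}[w(X)\mid \boldsymbol{\Lambda}]\bigr]
= \mathbf{E}[w(X)].
\]
Dividing yields $\pi_{v_1,w_1}[\boldsymbol{\Lambda}] = \mathbf{E}[X w(X)]/\mathbf{E}[w(X)] = \pi_w[X]$, which is the desired representation (\ref{vwpcp}).

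There is no substantive obstacle; the only mildly delicate point is the measurability/integrability bookkeeping for $v_1$ and $w_1$, which I would dispatch by invoking the existence of a regular conditional distribution of $X$ given $\boldsymbol{\Lambda}$ (guaranteed here since we work on the Borel space $(\mathbf{R}_+^n,\mathcal{B}(\mathbf{R}_+^n))$) and Fubini's theorem applied to the nonnegative integrand $x\,w(x)\,C(dx;\boldsymbol{\lambda})\,H_{1,\ldots,n}(d\boldsymbol{\lambda})$. Once these standard measure-theoretic formalities are recorded, the proposition is immediate from the tower property as displayed above.
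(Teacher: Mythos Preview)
Your proposal is correct and follows essentially the same route as the paper: the paper sets $w_1(\boldsymbol{\lambda})=\mathbf{E}[w(X)\mid\boldsymbol{\lambda}]$ and $v_1(\boldsymbol{\lambda})=\pi_w[X\mid\boldsymbol{\lambda}]$, which is precisely your $v_1$, and then observes the tower-property identity $\pi_w[X]=\mathbf{E}[w_1(\boldsymbol{\Lambda})v_1(\boldsymbol{\Lambda})]/\mathbf{E}[w_1(\boldsymbol{\Lambda})]$. Your version is simply more explicit about the measurability and integrability bookkeeping that the paper leaves implicit.
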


\begin{corollary}
\label{VaRProp}
Let $\mathbf{X}\backsim Pa_{1,\ldots,n}^c(\boldsymbol{\sigma},\ \boldsymbol{\gamma},\ \gamma_{n+1})$ as in Definition \ref{defMPflexGen}, then, for $i=1,\ldots,n$ and $q\in[0,\ 1)$,
 we have that
\[
VaR_q[X_i]=\sigma_i\left(
(1-q)^{-1/\gamma_{c,i}^\ast}-1
\right).
\]
\end{corollary}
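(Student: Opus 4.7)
The plan is almost entirely bookkeeping: this corollary follows by inverting the marginal c.d.f.\ already obtained in Proposition \ref{Nt1}. First I would recall from that proposition that if $\mathbf{X}\backsim Pa_{1,\ldots,n}^c(\boldsymbol{\sigma},\ \boldsymbol{\gamma},\ \gamma_{n+1})$, then each coordinate satisfies $X_i\backsim Pa(II)(\sigma_i,\gamma_{c,i}^\ast)$ with marginal d.d.f.\ $\overline{F}_i(x)=(1+x/\sigma_i)^{-\gamma_{c,i}^\ast}$ for $x\in\mathbf{R}_+$. In particular, the marginal c.d.f.\ $F_i(x)=1-(1+x/\sigma_i)^{-\gamma_{c,i}^\ast}$ is continuous and strictly increasing from $0$ to $1$ on $\mathbf{R}_+$.

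Because $F_i$ is continuous and strictly increasing, the infimum defining $VaR_q[X_i]$ in (\ref{VaRdef}) is attained at the unique solution of $F_i(x)=q$. Thus I would solve
\[
1-\left(1+\frac{x}{\sigma_i}\right)^{-\gamma_{c,i}^\ast}=q,
\]
for $q\in[0,1)$, which after rearrangement gives $(1+x/\sigma_i)^{\gamma_{c,i}^\ast}=(1-q)^{-1}$ and hence $x=\sigma_i\bigl((1-q)^{-1/\gamma_{c,i}^\ast}-1\bigr)$, proving the claimed identity.

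There is no real obstacle here; the only minor point worth checking is the boundary case $q=0$, for which $(1-q)^{-1/\gamma_{c,i}^\ast}-1=0$, in agreement with $VaR_0[X_i]=0=\inf\,\mathrm{supp}(X_i)$. No appeal to the multivariate structure (Definition \ref{defMPflexGen}) or to the heavier machinery of Sections \ref{sec2}--\ref{sec-biv-stuff} is needed beyond the marginal identification supplied by Proposition \ref{Nt1}.
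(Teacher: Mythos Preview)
Your proof is correct and coincides with the paper's own argument: the paper simply says to evoke Proposition \ref{Nt1} (or, equivalently, to use Proposition \ref{CTEmixt} with a Dirac delta weight), which is exactly the marginal-inversion you carry out in detail.
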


\begin{corollary}
\label{CTEgen}
Let $X|\boldsymbol{\Lambda} =\boldsymbol{\lambda} \backsim C(\cdot;\
\boldsymbol{\lambda}(\in\mathcal{R}\subseteq\mathbf{R}^n))$ and assume that $\boldsymbol{\Lambda} \backsim H_{1,\ldots,n}$, then
the CTE risk measure of $X\backsim F$ is, if exists and for $q\in[0,\ 1)$, given by
\[
CTE_q[X]=\frac{\mathbf{E}[\overline{C}(VaR_q[X];\boldsymbol{\Lambda})CTE_{q^\ast}[X|\boldsymbol{\Lambda}]]}{\overline{F}(VaR_q[X])}  \textnormal{ for } X\in\mathcal{X},
\]
where $q^\ast=C(VaR_q[X];\ \boldsymbol{\lambda})$.
\end{corollary}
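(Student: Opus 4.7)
My plan is to derive the identity directly from the definition of CTE by applying the tower property of conditional expectation to the mixing variable $\boldsymbol{\Lambda}$. The key bookkeeping is recognising that, once we condition on $\boldsymbol{\Lambda}$, the unconditional threshold $VaR_q[X]$ translates into a \emph{different} conditional probability level, namely $q^{\ast}=C(VaR_q[X];\boldsymbol{\Lambda})$, because in general $VaR_q[X]\neq VaR_q[X\mid \boldsymbol{\Lambda}]$.

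First, I would rewrite the CTE via its definition (\ref{CTEdef}) as the ratio
\[
CTE_q[X]=\frac{\mathbf{E}\!\left[X\,\mathbf{1}\{X>VaR_q[X]\}\right]}{\mathbf{P}[X>VaR_q[X]]},
\]
where the denominator is precisely $\overline{F}(VaR_q[X])$. Then I would apply the tower property to the numerator, conditioning on $\boldsymbol{\Lambda}$:
\[
\mathbf{E}\!\left[X\,\mathbf{1}\{X>VaR_q[X]\}\right]
=\mathbf{E}\!\left[\mathbf{E}\!\left[X\,\mathbf{1}\{X>VaR_q[X]\}\mid \boldsymbol{\Lambda}\right]\right].
\]
Inside the outer expectation, $VaR_q[X]$ is a deterministic constant (it is computed from the marginal law $F$), so conditionally on $\boldsymbol{\Lambda}=\boldsymbol{\lambda}$ the inner expectation factors as $\mathbf{P}[X>VaR_q[X]\mid \boldsymbol{\Lambda}=\boldsymbol{\lambda}]\cdot \mathbf{E}[X\mid X>VaR_q[X],\ \boldsymbol{\Lambda}=\boldsymbol{\lambda}]=\overline{C}(VaR_q[X];\boldsymbol{\lambda})\cdot \mathbf{E}[X\mid X>VaR_q[X],\ \boldsymbol{\Lambda}=\boldsymbol{\lambda}]$.

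The second step is to identify the conditional expectation on the right-hand side with a conditional CTE. For $X\mid\boldsymbol{\Lambda}=\boldsymbol{\lambda}\sim C(\cdot;\boldsymbol{\lambda})$, pick $q^{\ast}=C(VaR_q[X];\boldsymbol{\lambda})$; then, assuming $C(\cdot;\boldsymbol{\lambda})$ is strictly increasing (continuous) at the threshold, $VaR_{q^{\ast}}[X\mid\boldsymbol{\Lambda}=\boldsymbol{\lambda}]=VaR_q[X]$ and consequently $\mathbf{E}[X\mid X>VaR_q[X],\ \boldsymbol{\Lambda}=\boldsymbol{\lambda}]=CTE_{q^{\ast}}[X\mid\boldsymbol{\Lambda}=\boldsymbol{\lambda}]$. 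Substituting back and taking the outer expectation with respect to $\boldsymbol{\Lambda}$ yields the claimed formula.

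The only subtle point I anticipate is the continuity of $C(\cdot;\boldsymbol{\lambda})$ at the threshold needed to identify $VaR_{q^{\ast}}[X\mid\boldsymbol{\Lambda}]$ with $VaR_q[X]$. If the conditional distribution has no atom at $VaR_q[X]$ (which holds for the absolutely continuous mixture models of interest in this paper, in particular for the multivariate Pareto of Definition \ref{defMPflexGen}), the identification is immediate; otherwise one works with the generalised inverse in (\ref{VaRdef}) and the argument still goes through by replacing strict inequalities with the corresponding set $\{X\geq VaR_q[X]\}$ on a probability-one event. Beyond this minor regularity caveat, the proof amounts to a single application of the tower property plus the elementary observation that $q^{\ast}$ is defined precisely so that the conditional and unconditional thresholds coincide.
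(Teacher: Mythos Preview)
Your argument is correct and is essentially the paper's approach: the paper simply specialises Proposition~\ref{CTEmixt} to the weight $w(x)=\mathbf{1}\{x>VaR_q[X]\}$, and the proof of Proposition~\ref{CTEmixt} is nothing but the tower-property identity $\pi_w[X]=\mathbf{E}[\mathbf{E}[w(X)\mid\boldsymbol{\Lambda}]\,\pi_w[X\mid\boldsymbol{\Lambda}]]\big/\mathbf{E}[\mathbf{E}[w(X)\mid\boldsymbol{\Lambda}]]$ that you derive directly. The only cosmetic difference is that the paper routes the computation through the weighted-risk-measure abstraction rather than writing out the indicator explicitly; your observation about continuity of $C(\cdot;\boldsymbol{\lambda})$ at the threshold is a welcome clarification that the paper leaves implicit.
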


\begin{corollary}
\label{CTEParXiCor}
Let $\mathbf{X}\backsim Pa_{1,\ldots,n}^c(\boldsymbol{\sigma},\ \boldsymbol{\gamma},\ \gamma_{n+1})$ as in Definition \ref{defMPflexGen}, we have that the CTE risk measure is,
if exists and  for $i=1,\ldots,n$ and $q\in[0,\ 1)$, given by
\begin{eqnarray}
\label{CTEpar}
CTE_q[X_i]&=&\mathbf{E}[X_i] \frac{\overline{F}_{X_i^\ast}(VaR_q[X_i])}{1-q}+VaR_q[X_i] \nonumber\\
&=&\mathbf{E}[X_i]+VaR_q[X_i]\frac{\gamma_{c,i}^\ast}{\gamma_{c,i}^\ast-1},
\end{eqnarray}
where $X_i^\ast \sim Pa(II)(\sigma_i,\gamma_{c,i}^\ast-1)$.
\end{corollary}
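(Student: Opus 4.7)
By Proposition \ref{Nt1}, the marginal $X_i$ is univariate $Pa(II)(\sigma_i,\gamma_{c,i}^\ast)$, so the statement reduces to evaluating the CTE of a single Pareto II random variable, and Corollary \ref{VaRProp} gives the explicit form of $v_q:=VaR_q[X_i]=\sigma_i((1-q)^{-1/\gamma_{c,i}^\ast}-1)$. The plan is to compute the truncated mean directly and then recognize the compact form in terms of the shape-reduced Pareto $X_i^\ast\sim Pa(II)(\sigma_i,\gamma_{c,i}^\ast-1)$.

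First I would write
\[
CTE_q[X_i]=\frac{1}{1-q}\int_{v_q}^\infty x\,f_i(x)\,dx,
\]
with $f_i(x)=(\gamma_{c,i}^\ast/\sigma_i)(1+x/\sigma_i)^{-\gamma_{c,i}^\ast-1}$ from Proposition \ref{Nt1}, and substitute $u=1+x/\sigma_i$ to turn the integral into $\sigma_i\gamma_{c,i}^\ast\int_{u_0}^\infty(u-1)u^{-\gamma_{c,i}^\ast-1}du$ with $u_0=(1-q)^{-1/\gamma_{c,i}^\ast}$. This splits into two elementary power integrals whose primitives are $u^{1-\gamma_{c,i}^\ast}/(1-\gamma_{c,i}^\ast)$ and $u^{-\gamma_{c,i}^\ast}/(-\gamma_{c,i}^\ast)$, giving
\[
\int_{v_q}^\infty x f_i(x)dx=\frac{\sigma_i}{\gamma_{c,i}^\ast-1}u_0^{-(\gamma_{c,i}^\ast-1)}-\sigma_i u_0^{-\gamma_{c,i}^\ast}.
\]
Using $u_0^{-\gamma_{c,i}^\ast}=1-q$ and $u_0^{-(\gamma_{c,i}^\ast-1)}=(1-q)(1+v_q/\sigma_i)$, together with $\mathbf{E}[X_i]=\sigma_i/(\gamma_{c,i}^\ast-1)$ from \eqref{Epar}, dividing by $1-q$ and collecting terms yields the second equality in \eqref{CTEpar}.

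For the first equality, the key identification is that $\overline{F}_{X_i^\ast}(v_q)=(1+v_q/\sigma_i)^{-(\gamma_{c,i}^\ast-1)}=u_0\cdot u_0^{-\gamma_{c,i}^\ast}=u_0(1-q)$, so $\overline{F}_{X_i^\ast}(v_q)/(1-q)=1+v_q/\sigma_i$. Plugging this into $\mathbf{E}[X_i]\overline{F}_{X_i^\ast}(v_q)/(1-q)+v_q$ and simplifying with $\mathbf{E}[X_i]=\sigma_i/(\gamma_{c,i}^\ast-1)$ recovers exactly the expression on the second line, establishing both representations.

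Alternatively, one can derive the first form by applying Corollary \ref{CTEgen} to the gamma–exponential mixture $X_i|W\sim Exp(W)$, $W\sim Ga(\gamma_{c,i}^\ast,\sigma_i)$, which is implicit in Theorem \ref{CharLem}. The conditional CTE of an exponential is $CTE_{q^\ast}[X_i|W=w]=v_q+1/w$, so the main work reduces to evaluating $\mathbf{E}[e^{-Wv_q}/W]$ against the gamma density; integrating $w^{\gamma_{c,i}^\ast-2}e^{-(\sigma_i+v_q)w}$ returns $(\sigma_i/(\gamma_{c,i}^\ast-1))(1+v_q/\sigma_i)^{-(\gamma_{c,i}^\ast-1)}=\mathbf{E}[X_i]\overline{F}_{X_i^\ast}(v_q)$, matching the first line. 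The only mildly non-obvious step in either route is recognizing that the leftover factor is precisely the d.d.f. of the shape-reduced Pareto $X_i^\ast$; beyond that the argument is bookkeeping.
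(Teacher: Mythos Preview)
Your alternative route via the exponential--gamma mixture is exactly the paper's proof: they write $X_i\mid\Lambda=\lambda\sim Exp(\lambda)$, note $CTE_{q^\ast}[X_i\mid\lambda]=1/\lambda+VaR_q[X_i]$ and $\overline{C}(VaR_q[X_i];\lambda)=e^{-\lambda VaR_q[X_i]}$, and invoke Proposition~\ref{CTEmixt}. Your primary route, direct integration of the marginal Pareto density, is a genuinely more elementary alternative that bypasses the mixture machinery altogether; it trades the structural viewpoint (which the paper reuses for $X_-$ and $X_+$ in Propositions~\ref{CTEparMin} and~\ref{ctemaxima}) for a self-contained calculus computation. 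One caveat: your displayed truncated-mean formula drops a factor of $\gamma_{c,i}^\ast$ in the first term; the correct evaluation of $\sigma_i\gamma_{c,i}^\ast\int_{u_0}^\infty(u-1)u^{-\gamma_{c,i}^\ast-1}\,du$ is
\[
\frac{\sigma_i\,\gamma_{c,i}^\ast}{\gamma_{c,i}^\ast-1}\,u_0^{-(\gamma_{c,i}^\ast-1)}-\sigma_i\,u_0^{-\gamma_{c,i}^\ast},
\]
and with that fix the ``collecting terms'' step does indeed yield $\mathbf{E}[X_i]+v_q\,\gamma_{c,i}^\ast/(\gamma_{c,i}^\ast-1)$ as you claim.
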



The minima r.v. $X_{-} = \wedge_{i=1}^n X_i$ plays an important role in insurance mathematics
(recall, e.g., the joint life policies in life insurance), as well as in general finance
(think of, e.g., the first-to-default baskets).

Recall that the r.v. $K$ has been defined as an integer-valued non-negative r.v.
with the following p.m.f.
 \begin{equation}
\label{pk2}
p_k=\mathbf{P}[K=k]=c_{+} \delta_k,\ k=0,\ 1, \ldots,
\end{equation}
where
\[
c_{+}=\prod_{i=1}^n \left(\frac{\alpha_i}{\alpha_{+}}\right)^{\gamma_i},\
\delta_0=1
\]
and
\[
\delta_k=k^{-1}\sum_{l=1}^k\sum_{i=1}^n \gamma_i \left(
1-\frac{\alpha_i}{\alpha_{+}}
\right)^l\delta_{k-l} \textnormal{ for } k>0.
\]
\begin{proposition}
\label{CTEparMin}
In the context of the multivariate Pareto of interest, the CTE risk measure of the
minima can be written, if finite and for $q\in[0,\ 1)$, as
\[
CTE_q[X_-]=\mathbf{E}[X_-]\frac{\overline{F}_{X_-^\ast}(VaR_q[X_{-}])}{1-q}+VaR_q[X_{-}],
\]
where $X_-^\ast \sim Pa(\alpha_+(\boldsymbol{\sigma}),\ \gamma^\ast+Q-1)$, where
$\alpha_+(\boldsymbol{\sigma})=\vee_{j=1}^{n+1}\left(
\sum_{i=1}^n \frac{c_{i,j}}{\sigma_i}
\right)^{-1}$, $\gamma^\ast=\gamma_1+\cdots+\gamma_{n+1}$ and $Q$ is an integer-valued r.v. with the p.m.f. obtained from the p.m.f.
of $K$ with the help of the following change of measure
\begin{equation}
\label{cmkq}
q_k=\frac{1}{\mathbf{E}[X_{-}]}\frac{\alpha_+(\boldsymbol{\sigma})}{\gamma^\ast+k-1}p_k,\
k=0,1,\ldots
\end{equation}
\end{proposition}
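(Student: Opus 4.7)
The plan is to reduce the CTE of the mixed Pareto r.v.\ $X_-$ to a single Pareto stop-loss calculation by conditioning on the mixing integer r.v.\ $K$ supplied by Theorem \ref{ggamma}. Recall from that theorem that $X_-\mid K=k\backsim Pa(II)(\alpha_+(\boldsymbol{\sigma}),\gamma^\ast+k)$, with $K$ having p.m.f.\ (\ref{pk2}). Writing $d:=VaR_q[X_-]$ and using the standard identity
\[
CTE_q[X_-]=d+\frac{1}{1-q}\,\mathbf{E}\bigl[(X_--d)_+\bigr],
\]
the problem is reduced to recognizing the mixed stop-loss premium as $\mathbf{E}[X_-]\,\overline{F}_{X_-^\ast}(d)$.

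Next, I would compute the stop-loss premium by conditioning. The classical Pareto-tail calculation gives
\[
\mathbf{E}\bigl[(X_--d)_+\,\big|\,K=k\bigr]=\frac{\alpha_+(\boldsymbol{\sigma})}{\gamma^\ast+k-1}\left(1+\frac{d}{\alpha_+(\boldsymbol{\sigma})}\right)^{-(\gamma^\ast+k-1)}
\]
provided $\gamma^\ast+k>1$. The key observation is that the leading coefficient equals $\mathbf{E}[X_-\mid K=k]$, while the parenthesized factor is the ddf of $Pa(II)(\alpha_+(\boldsymbol{\sigma}),\gamma^\ast+k-1)$ evaluated at $d$, which I denote $\overline{F}^\ast(d\mid K=k)$. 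Applying Fubini to the non-negative summands yields
\[
\mathbf{E}[(X_--d)_+]=\sum_{k\geq 0}p_k\,\mathbf{E}[X_-\mid K=k]\,\overline{F}^\ast(d\mid K=k).
\]

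Finally, I would introduce the tilted p.m.f.\ $q_k:=p_k\,\mathbf{E}[X_-\mid K=k]/\mathbf{E}[X_-]$, which sums to one by the law of total expectation and which, after substituting the explicit value $\mathbf{E}[X_-\mid K=k]=\alpha_+(\boldsymbol{\sigma})/(\gamma^\ast+k-1)$, coincides with the change of measure (\ref{cmkq}). Letting $Q$ have p.m.f.\ $\{q_k\}$ and setting $X_-^\ast\backsim Pa(II)(\alpha_+(\boldsymbol{\sigma}),\gamma^\ast+Q-1)$, the averaged stop-loss display rewrites as $\mathbf{E}[X_-]\,\overline{F}_{X_-^\ast}(d)$, and substituting into the stop-loss form of the CTE produces the claim. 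The main, and essentially only, technical hurdle is the bookkeeping that identifies $\{q_k\}$ as a bona fide p.m.f.\ and matches it to (\ref{cmkq}); finiteness of the expectations appearing in the manipulations is built into the hypothesis that $CTE_q[X_-]$ is finite.
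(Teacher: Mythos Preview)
Your proof is correct and follows essentially the same route as the paper's: both start from the stop-loss identity $CTE_q[X_-]=VaR_q[X_-]+(1-q)^{-1}\int_{VaR_q[X_-]}^\infty\overline{F}_-(x)\,dx$, expand the survival function via the mixture representation of Theorem~\ref{ggamma}, and then introduce the tilted p.m.f.\ $q_k$ to factor out $\mathbf{E}[X_-]$ and recognize the remaining sum as $\overline{F}_{X_-^\ast}(VaR_q[X_-])$. The only cosmetic difference is that you condition on $K$ and compute the Pareto stop-loss first, whereas the paper rewrites the integrand $\overline{F}_-(x)$ as $\mathbf{E}[X_-]\,f_{X_-^\ast}(x)$ before integrating; the underlying computation is identical.
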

\begin{proposition}
\label{ctemaxima}
In the context of the multivariate Pareto of interest, the CTE risk measure of the
maxima can be written,  if finite and for $q\in[0,\ 1)$, as
the following linear combination
\[
CTE_q[X_+]=\frac{1}{1-q}\sum_{\mathcal{S}\subseteq \{1,\ldots,n\}} (-1)^{|\mathcal{S}|-1} \mathbf{E}[X_{\mathcal{S}-}|X_{\mathcal{S}-}>VaR_q(X_+)] \overline{F}_{{\mathcal{S}-}}(VaR_q(X_+)),
\]
where $X_{\mathcal{S}-}=\wedge_{s\in \mathcal{S}\subseteq\{1,\ldots,n\}}X_s$ and
$X_{\mathcal{S}-}\sim F_{\mathcal{S}-}$.
\end{proposition}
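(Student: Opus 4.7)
The plan is to reduce the CTE of the maxima to CTEs of minima over subsets via the inclusion–exclusion decomposition of $\overline{F}_+$ supplied by Theorem \ref{maxrv}. Throughout I use $v := VaR_q[X_+]$ and recall that, since $\mathbf{X}$ is absolutely continuous (the p.d.f. is given explicitly in Theorem \ref{jointpdf}), so is $X_+$, and hence $\overline{F}_+(v) = 1-q$ and $CTE_q[X_+] = \mathbf{E}[X_+ \mathbf{1}\{X_+ > v\}]/(1-q)$.

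First I would rewrite the numerator via the layer-cake (tail-integral) identity. For any non-negative r.v.\ $Z$ and threshold $v\geq 0$, one has
\[
\mathbf{E}[Z\, \mathbf{1}\{Z>v\}] = v\, \overline{F}_Z(v) + \int_v^\infty \overline{F}_Z(t)\, dt,
\]
obtained by writing $Z\mathbf{1}\{Z>v\} = \int_0^\infty \mathbf{1}\{Z > \max(s,v)\}\, ds$ and splitting the $s$-integral at $v$. Applying this with $Z=X_+$ gives
\[
\mathbf{E}[X_+ \mathbf{1}\{X_+>v\}] = v\,\overline{F}_+(v) + \int_v^\infty \overline{F}_+(t)\, dt.
\]

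Second, I would substitute the identity of Theorem \ref{maxrv},
\[
\overline{F}_+(t) = \sum_{\mathcal{S}\subseteq\{1,\ldots,n\}} (-1)^{|\mathcal{S}|-1}\, \overline{F}_{\mathcal{S}-}(t),
\]
into both occurrences on the right-hand side above. Since the sum over subsets is finite, interchanging summation and the $t$-integral is immediate, yielding
\[
\mathbf{E}[X_+ \mathbf{1}\{X_+>v\}] = \sum_{\mathcal{S}\subseteq\{1,\ldots,n\}} (-1)^{|\mathcal{S}|-1}\!\left( v\,\overline{F}_{\mathcal{S}-}(v) + \int_v^\infty \overline{F}_{\mathcal{S}-}(t)\, dt \right).
\]
Applying the layer-cake identity in reverse to each $X_{\mathcal{S}-}$ recognises the bracketed expression as $\mathbf{E}[X_{\mathcal{S}-}\mathbf{1}\{X_{\mathcal{S}-}>v\}] = \mathbf{E}[X_{\mathcal{S}-}\mid X_{\mathcal{S}-}>v]\,\overline{F}_{\mathcal{S}-}(v)$. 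Dividing through by $1-q$ delivers the stated formula.

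The only mildly delicate point, and hence where I would be most careful, is the finiteness of the individual terms: the layer-cake reversal requires each $\mathbf{E}[X_{\mathcal{S}-}\mathbf{1}\{X_{\mathcal{S}-}>v\}]$ to be finite, which follows from the hypothesis that $CTE_q[X_+]$ exists (any non-integrable cancellation on the left would propagate to non-integrability of the dominating minimum $X_{\{1,\ldots,n\}-}$). No new heavy machinery beyond Theorem \ref{maxrv} and Fubini is needed; the proof is essentially a bookkeeping exercise around inclusion–exclusion, which is exactly why the authors remark that it parallels Proposition 2 of Vernic (2011).
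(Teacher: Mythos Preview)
Your proof is correct and follows essentially the same route as the paper: both start from the tail-integral form of the CTE, substitute the inclusion--exclusion decomposition of $\overline{F}_+$ from Theorem \ref{maxrv}, interchange the finite sum with the integral, and then recognise each resulting term as $\mathbf{E}[X_{\mathcal{S}-}\mid X_{\mathcal{S}-}>v]\,\overline{F}_{\mathcal{S}-}(v)$. The only cosmetic difference is that the paper keeps $VaR_q[X_+]$ outside the sum and uses $\int_v^\infty \overline{F}(t)\,dt = \mathbf{E}[X-v\mid X>v]\,\overline{F}(v)$, whereas you fold the $v\,\overline{F}_+(v)$ term into the sum from the outset via inclusion--exclusion; the two bookkeepings are equivalent.
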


\begin{definition}[Furman and Zitikis, 2008b]
Let $w:\mathbf{R} \rightarrow \mathbf{R}_+$ be a non-decreasing Borel function, such that
$0<\mathbf{E}[w(Y)]<\infty$, then the functional $\Pi:\mathcal{X}\times\mathcal{X}
\rightarrow [0,\ \infty]$ is referred to as the economic risk measure. Moreover,
the special form of $\Pi$, given by
\begin{equation}
\label{ewpcp}
\Pi_w[X,\ Y]=\frac{\mathbf{E}[Xw(Y)]}{\mathbf{E}[w(Y)]}\
\textnormal{for\ }X\in\mathcal{X} \textnormal{ and }Y\in\mathcal{X},
\end{equation}
is called a weighted economic risk measure.
\end{definition}
We further derive an expression for the economic CTE risk measure, which is a particular
case of (\ref{ewpcp}) with $w(y)=\mathbf{1}\{y>VaR_q[Y]\},\ q\in[0,\ 1)$ and $y\in\mathbf{R}_+$.  To this
end, we find the next proposition useful. The proof is plain and thus omitted.
\begin{proposition}
\label{conditional_distribution}
Let $\mathbf{X}\backsim Pa_{1,\ldots,n}^c(\boldsymbol{\sigma},\ \boldsymbol{\gamma},\ \gamma_{n+1})$ as in Definition \ref{defMPflexGen}, the d.d.f.
 of $X_k$ given $X_l>x_l,\ 0\leq k\neq l\leq n$, is formulated as
\begin{eqnarray}
\label{CondFunction}
\mathbf{P}[X_k>x_k|X_l>x_l]=\left(1+\frac{x_k}{\sigma_k} \right)^{-\gamma_{c,k}}\left(1+\frac{x_k}{\gamma_{c,(k,l)}m(x_l)} \right)^{-\gamma_{c,(l,k)}}, \nonumber
\end{eqnarray}
where
\[
m(x_l)=\frac{\sigma_k}{{\gamma}_{c,(k,l)}}\left(1+\frac{x_l}{\sigma_l}\right)
\]
and
$x_k,\ x_l$ are both in $\mathbf{R}_+$.
\end{proposition}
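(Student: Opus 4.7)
The proposition asserts a simple product form for the conditional survival function $\mathbf{P}[X_k>x_k\mid X_l>x_l]$, and since this quantity equals the ratio of a bivariate survival function to a univariate one, the plan is a direct computation at the bivariate marginal level, together with an algebraic identification of the factor $m(x_l)$.

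\textbf{Step 1 (Extract the bivariate d.d.f.).} The first move is to reduce to the bivariate margin $(X_k,X_l)$. Setting $x_i=0$ for $i\neq k,l$ in the joint d.d.f.\ (\ref{MPflexGenddf}) and using that the corresponding factors become $1$, I obtain
\[
\overline{F}_{k,l}(x_k,x_l)=\prod_{j=1}^{n+1}\left(1+c_{k,j}\frac{x_k}{\sigma_k}+c_{l,j}\frac{x_l}{\sigma_l}\right)^{-\gamma_j}.
\]
Partitioning the indices $j\in\{1,\ldots,n+1\}$ into the three disjoint sets $\{j:c_{k,j}=c_{l,j}=1\}$, $\{j:c_{k,j}=1,c_{l,j}=0\}$ and $\{j:c_{k,j}=0,c_{l,j}=1\}$ (indices with $c_{k,j}=c_{l,j}=0$ contribute a unit factor) and recalling the notation of (\ref{StRep}) for $\gamma_{c,(k,l)}$, $\gamma_{c,k}$, $\gamma_{c,l}$, this collapses to
\[
\overline{F}_{k,l}(x_k,x_l)=\left(1+\frac{x_k}{\sigma_k}+\frac{x_l}{\sigma_l}\right)^{-\gamma_{c,(k,l)}}\left(1+\frac{x_k}{\sigma_k}\right)^{-\gamma_{c,k}}\left(1+\frac{x_l}{\sigma_l}\right)^{-\gamma_{c,l}}.
\]

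\textbf{Step 2 (Form the ratio).} By Proposition \ref{Nt1} the marginal d.d.f.\ of $X_l$ is $\overline{F}_l(x_l)=(1+x_l/\sigma_l)^{-\gamma_{c,l}^\ast}$, and the identity $\gamma_{c,l}^\ast=\gamma_{c,(k,l)}+\gamma_{c,l}$ follows from splitting the sum $\sum_j c_{l,j}\gamma_j$ along whether $c_{k,j}=1$ or not. Therefore
\[
\mathbf{P}[X_k>x_k\mid X_l>x_l]=\frac{\overline{F}_{k,l}(x_k,x_l)}{\overline{F}_l(x_l)}=\left(1+\frac{x_k}{\sigma_k}\right)^{-\gamma_{c,k}}\left(\frac{1+x_k/\sigma_k+x_l/\sigma_l}{1+x_l/\sigma_l}\right)^{-\gamma_{c,(k,l)}},
\]
after cancelling the $(1+x_l/\sigma_l)$ factors.

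\textbf{Step 3 (Reconcile with $m(x_l)$).} The last algebraic step is to recognize the second factor. Since $\gamma_{c,(k,l)}m(x_l)=\sigma_k(1+x_l/\sigma_l)$, a direct rearrangement gives
\[
\frac{1+x_k/\sigma_k+x_l/\sigma_l}{1+x_l/\sigma_l}=1+\frac{x_k}{\gamma_{c,(k,l)}m(x_l)},
\]
and using the symmetry $\gamma_{c,(l,k)}=\gamma_{c,(k,l)}$ in the notation of (\ref{StRep}) delivers exactly the stated formula. There is no genuine obstacle here; the only point that requires care is the index bookkeeping in Step 1 that correctly identifies the three $\gamma$-aggregates, and this is precisely the decomposition already recorded in (\ref{StRep}).
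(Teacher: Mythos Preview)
Your proof is correct and is precisely the direct computation the authors had in mind; the paper itself declares the proof ``plain and thus omitted,'' and your three steps (extract the bivariate d.d.f.\ via the index partition underlying (\ref{StRep}), divide by the univariate margin from Proposition~\ref{Nt1}, and rewrite the resulting factor using the definition of $m(x_l)$) constitute exactly this straightforward verification.
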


\begin{proposition}
\label{econCTE}
In the context of the multivariate Pareto of interest,
the economic CTE risk measure is given, for $\gamma^\ast_{c,k}>1$,
 $q\in[0,\ 1)$ and $1\leq k \neq l\leq n$, by
 \[
\mathbf{E}[X_k|X_l > VaR_q[X_l]]=\frac{\sigma_k}{\gamma_{c,k}^*-1} {}_2F_1\left(\gamma_{c,(k,l)},1;\gamma_{c,k}^*;\frac{VaR_q[X_l]}{\sigma_l+VaR_q[X_l]}\right).
\]
\end{proposition}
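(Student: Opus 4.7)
\medskip

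\noindent\textbf{Proof proposal for Proposition \ref{econCTE}.} The plan is to compute the economic CTE by expressing it as the tail integral of the conditional d.d.f.\ supplied by Proposition \ref{conditional_distribution}, and then identifying the resulting one-dimensional integral as an Euler-type representation of the Gauss hypergeometric function ${}_2F_1$.

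To begin, write $v:=VaR_q[X_l]$ and invoke the standard layer formula
\[
\mathbf{E}[X_k\mid X_l>v]=\int_0^\infty \mathbf{P}[X_k>x_k\mid X_l>v]\,dx_k,
\]
which is legitimate since $X_k$ is nonnegative. Next, plug in the expression from Proposition \ref{conditional_distribution}, observing that by the symmetry of the definition $\gamma_{c,(l,k)}=\gamma_{c,(k,l)}$ and that $\gamma_{c,(k,l)}\,m(v)=\sigma_k(1+v/\sigma_l)$. This yields the closed form
\[
\mathbf{E}[X_k\mid X_l>v]=\int_0^\infty\!\left(1+\frac{x_k}{\sigma_k}\right)^{-\gamma_{c,k}}\!\!\left(1+\frac{x_k}{\sigma_k(1+v/\sigma_l)}\right)^{-\gamma_{c,(k,l)}}\!dx_k.
\]
Convergence at infinity is guaranteed precisely by the hypothesis $\gamma_{c,k}^*=\gamma_{c,k}+\gamma_{c,(k,l)}>1$.

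After the rescaling $u=x_k/\sigma_k$, this integral becomes
\[
\sigma_k\int_0^\infty (1+u)^{-\gamma_{c,k}}\bigl(1+\rho\, u\bigr)^{-\gamma_{c,(k,l)}}du,\qquad \rho:=\frac{\sigma_l}{\sigma_l+v}.
\]
I would then apply the Euler substitution $t=u/(1+u)$, noting $1+u=(1-t)^{-1}$ and $1+\rho u=(1-(1-\rho)t)/(1-t)$, together with $du=(1-t)^{-2}dt$. Combining the exponents $\gamma_{c,k}+\gamma_{c,(k,l)}-2=\gamma_{c,k}^*-2$ yields
\[
\mathbf{E}[X_k\mid X_l>v]=\sigma_k\int_0^1 (1-t)^{\gamma_{c,k}^*-2}\bigl(1-(1-\rho)t\bigr)^{-\gamma_{c,(k,l)}}dt.
\]
This is exactly Euler's integral representation of ${}_2F_1$ with parameters $a=\gamma_{c,(k,l)}$, $b=1$, $c=\gamma_{c,k}^*$ and argument $z=1-\rho=v/(\sigma_l+v)$. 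Applying the formula
\[
{}_2F_1(a,1;c;z)=(c-1)\int_0^1(1-t)^{c-2}(1-zt)^{-a}\,dt
\]
delivers the stated identity at once, after substituting $v=VaR_q[X_l]$.

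The key observations driving the argument are the two algebraic simplifications $\gamma_{c,(l,k)}=\gamma_{c,(k,l)}$ and $\gamma_{c,(k,l)}m(v)=\sigma_k(1+v/\sigma_l)$, which collapse the conditional d.d.f.\ into a product whose integral matches Euler's representation term-for-term; after that, everything is routine. I do not anticipate a genuine obstacle; the only bookkeeping worth double-checking is that the exponent arithmetic $\gamma_{c,k}+\gamma_{c,(k,l)}=\gamma_{c,k}^*$ correctly produces the upper parameter $c=\gamma_{c,k}^*$ and that the convergence condition imposed by the ${}_2F_1$-representation coincides with the stated hypothesis $\gamma_{c,k}^*>1$.
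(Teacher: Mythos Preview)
Your proof is correct and follows essentially the same route as the paper: both express the economic CTE as the tail integral of the conditional d.d.f.\ from Proposition~\ref{conditional_distribution}, arrive at the same one-dimensional integral in $u=x_k/\sigma_k$, and then identify it as a Gauss hypergeometric function. The only cosmetic difference is that the paper invokes the infinite-interval representation (Gradshteyn--Ryzhik 3.197(5)) directly, whereas you first map $[0,\infty)$ to $[0,1]$ via $t=u/(1+u)$ and then use the standard Euler integral; these two identifications are of course equivalent.
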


To summarize, so far we have introduced and studied a new multivariate probability distribution with the univariate margins
distributed Pareto of the 2nd kind. The dependence structure of the new distribution is driven by a number
of stochastic representations that are variants of the multiplicative background risk and the minima-based
common shock models. We next employ the latter interpretation of the proposed multivariate probability
distribution to exemplify its possible application to modelling and measuring default risk.

\section{Numerical illustration}
\label{sec-numex}

{
For the sake of the discussion in this section,
we adopt the view of the
Financial Stability Board and the International Monetary Fund
that the systemic risk can be caused by impairment of
all or parts of the financial system, and more formally, we call the risk factor
$j\in\{1,\ldots,n+1\}$ `systemic', if $c_{i,j}=1$ for at least two distinct r.c.'s
$i\in\{1,\ldots,n\}$. Similarly, we call the risk factor $j\in\{1,\ldots,n+1\}$
`idiosyncratic', if $c_{i,j}=1$ for only one risk component $i\in\{1,\ldots,n\}$.

Consider obligors in a default risk portfolio,
each of
which is exposed to exactly two distinct categories of fatal risk factors, e.g.,
systemic (category A)
and idiosyncratic (category B).
We assume that the
risk factors from distinct risk categories are independent and that
the hitting times (or occurrences) of defaults of the r.c.'s are exponentially-distributed with random parameters distributed gamma.
In fact,
 the future lifetime r.v. of the $i$-th r.c. has exponential
distribution with the random parameter $\sigma_i^{-1}\sum_{j=1}^{n+1}c_{i,j}\Lambda_j$, where $\Lambda_j$ are
distributed gamma with unit rate parameters, and $i$ is $1,\ 2$ or 3.
Then Theorem \ref{MinimaCS} readily implies that the joint default times of the aforementioned
r.c.'s has d.d.f. (\ref{MPflexGenddf}).

To illustrate the effect of the dependence structure on the joint default
probability we further set the dimension to $n=3$ and
specialize the set-up above along the lines in Section 16.8 of Engelmann and Rauhmeier (2011)
as well as employing the 2014's Annual Global Corporate Default Study and Rating Transitions of
Standard \& Poor's. (Standard \& Poor's, 2015).
More specifically, we set
 $\mu:=\mathbf{E}[\Lambda_j]\equiv 1.67$, fix the time horizon to $15$
years and choose the corresponding default probability, $p$ say, to be equal to $0.3198$
(on par with the `B' credit rating of speculative entities).  This yields the multivariate
probability structure of Definition \ref{defMPflexGen} with identically distributed
margins having the parameters $\sigma_i\equiv \sigma=122.39$ and $\gamma_{c,i}^*\equiv 3.33$, for $i=1,\ 2$ and
$3$.

Then we explore three different exposures
of the obligors to the systemic
 and idiosyncratic  r.f.'s.
The distinct exposures
are stipulated by appropriate choices of the $c$ parameters gathered by matrices $A_c^{(k)},\
k=1,\ 2,\ 3$. We compare the aforementioned three exposures with the reference case
in which no systemic risk presents, that is the joint d.d.f. of default times
is a trivariate Pareto with independent margins.
 We note in passing that the expressions for the d.d.f.'s below readily follow
from Theorem \ref{MinimaCS}, whereas the values of the Pearson correlation coefficient
 are in non-trivial cases obtained  with the help of Theorem \ref{cov}.

\begin{itemize}
\item [Case (1).]
Only the systemic risk presents, and all risk components are exposed to it.  The exposure
is represented schematically with the use of the following matrix, in which the rows
and the columns represent r.c.'s and r.f.'s, respectively
\[
A_c^{(1)}=\left(
\begin{array}{cc|ccc}
1 & 1 & 0 & 0\\

1 & 1 & 0 & 0\\

1 & 1 & 0 & 0
\end{array}
\right).
\]
The joint d.d.f. of the risk components is given by
\[
\overline{F}^{(1)}(x_1,x_2,x_3)=\left(
1+\frac{x_1+x_2+x_3}{\sigma}
\right)^{-2\mu},
\]
where $x_1,x_2,x_3$ are all in $\mathbf{R}_{+}$.
This is obviously the d.d.f. of the classical trivariate Pareto distribution (Arnold, 1983).
In this r.p., the Pearson correlation coefficient between any two of the r.c.'s is $0.3$.
\end{itemize}
\noindent
In the following two cases, both the systemic and idiosyncratic risks present.
\begin{itemize}
\item [Case (2).]
There are overall three uncorrelated idiosyncratic risk factors and one
systemic risk factor.  The exposure is gathered by the following block matrix
\[
A_c^{(2)}=\left(
\begin{array}{c|cccc}
1 & 1 & 0 & 0\\

1 & 0 & 1 & 0\\

1 & 0 & 0 & 1
\end{array}
\right).
\]
The joint d.d.f. of the risk components is given by
\begin{eqnarray*}
&&\overline{F}^{(2)}(x_1,x_2,x_3)\\
&&=\left(
1+\frac{x_1+x_2+x_3}{\sigma}
\right)^{-\mu}
\left(
1+\frac{x_1}{\sigma}
\right)^{-\mu}
\left(
1+\frac{x_2}{\sigma}
\right)^{-\mu}
\left(
1+\frac{x_3}{\sigma}
\right)^{-\mu},
\end{eqnarray*}
where $x_1,x_2,x_3$ are all in $\mathbf{R}_{+}$. This case corresponds to the `flexible Pareto type I' of Landsman and
Chiragiev (2009).
In this r.p., the Pearson correlation coefficient between any two of the r.c.'s is $0.09$.

\item [Case (3).] The systemic
risk is represented by two distinct risk factors of which one targets the entire risk
portfolio and the other only hits r.c.'s $\# 1$ and $\# 2$. There is one idiosyncratic
risk factor, and only r.c. $\# 3$ is exposed to it.
The exposure block matrix is given by
\[
A_c^{(3)}=\left(
\begin{array}{cc|ccc}
1 & 1 & 0 & 0\\

1 & 1 & 0 & 0\\

1 & 0 & 1 & 0
\end{array}
\right).
\]
The joint d.d.f. of the risk components is
\[
\overline{F}^{(3)}(x_1,x_2,x_3)=\left(
1+\frac{x_1+x_2+x_3}{\sigma}\right)^{-\mu}
\left(
1+\frac{x_1+x_2}{\sigma}\right)^{-\mu}
\left(
1+\frac{x_3}{\sigma}
\right)^{-\mu},
\]
where $x_1,x_2,x_3$ are all in $\mathbf{R}_{+}$.
In this r.p., the Pearson correlation coefficient between r.c. \#1 and \#2  is $0.3$, and
it is equal to $0.09$ otherwise.
\end{itemize}

\subsection{Expected times of the first default} The left panel of
Figure \ref{fig:CTEmin}
depicts the values of $CTE_q[X_-]$ for $q\in[0,\ 1)$, $X_-\in\mathcal{X}$
and portfolios (1) to (3) as well as the reference portfolio, denoted by $(\perp)$.
{As the risk components are identically distributed, it is not difficult to see that
the following ordering holds
\begin{equation}
\label{stord1}
\overline{F}^{(1)}_-\geq_{st}\overline{F}^{(3)}_-\geq_{st}\overline{F}^{(2)}_-\geq_{st}
\overline{F}^{(\perp)}_-,
\end{equation}
where `$\geq_{st}$' denotes first order stochastic dominance (FSD). Furthermore,
since the CTE risk measure
is known to preserve the FSD ordering, we also have that
\[
CTE^{(1)}_q[X_-]\geq CTE^{(3)}_q[X_-]\geq CTE^{(2)}_q[X_-]\geq CTE^{(\perp)}_q[X_-]
\]
for all $q\in[0,\ 1)$ and $X_-\in\mathcal{X}$. This conforms with
Figure \ref{fig:CTEmin} (left panel), which hints that the r.p.'s with more
significantly correlated r.c.'s
enjoy higher, and thus more favourable, occurrence times of the first default.}
\begin{figure}[h!]
\centering
\includegraphics[width=7cm, height=7cm]{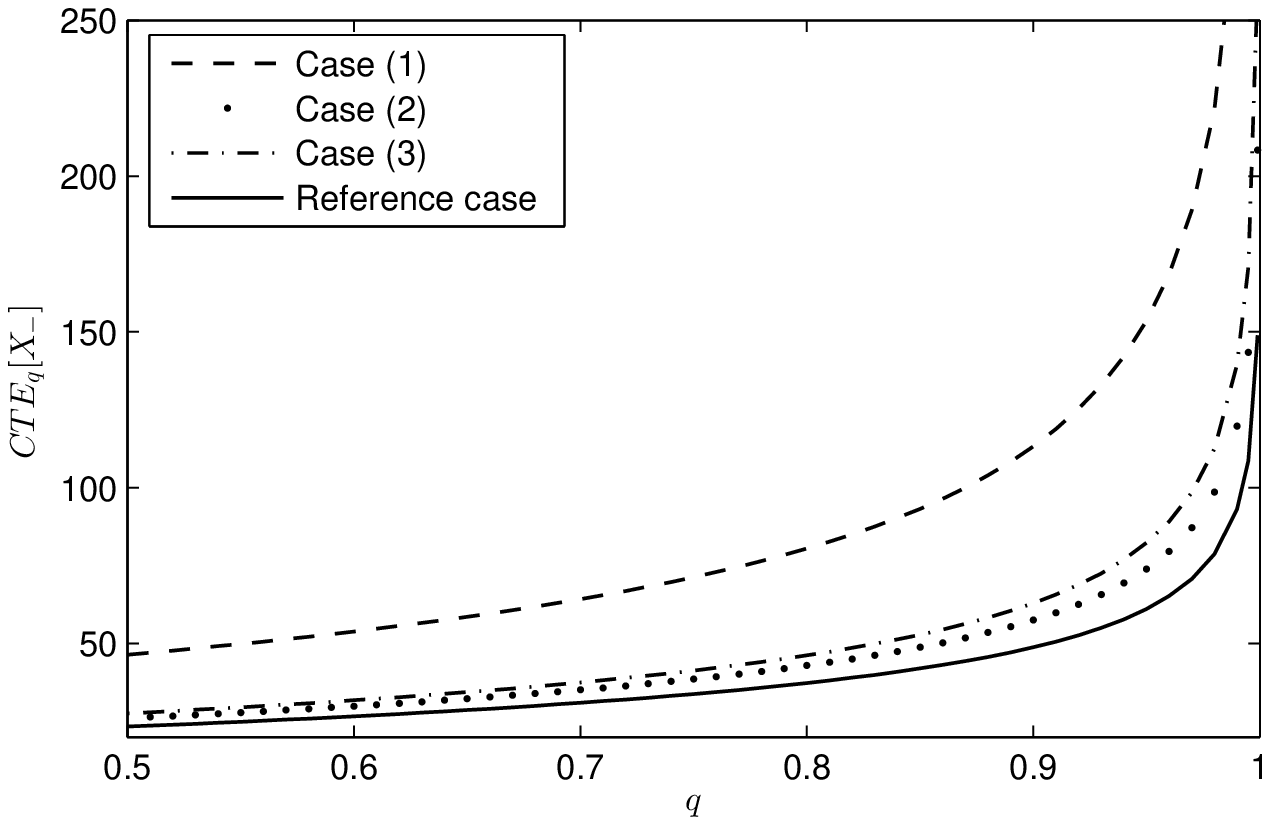}
 \includegraphics[width=7cm, height=7cm]{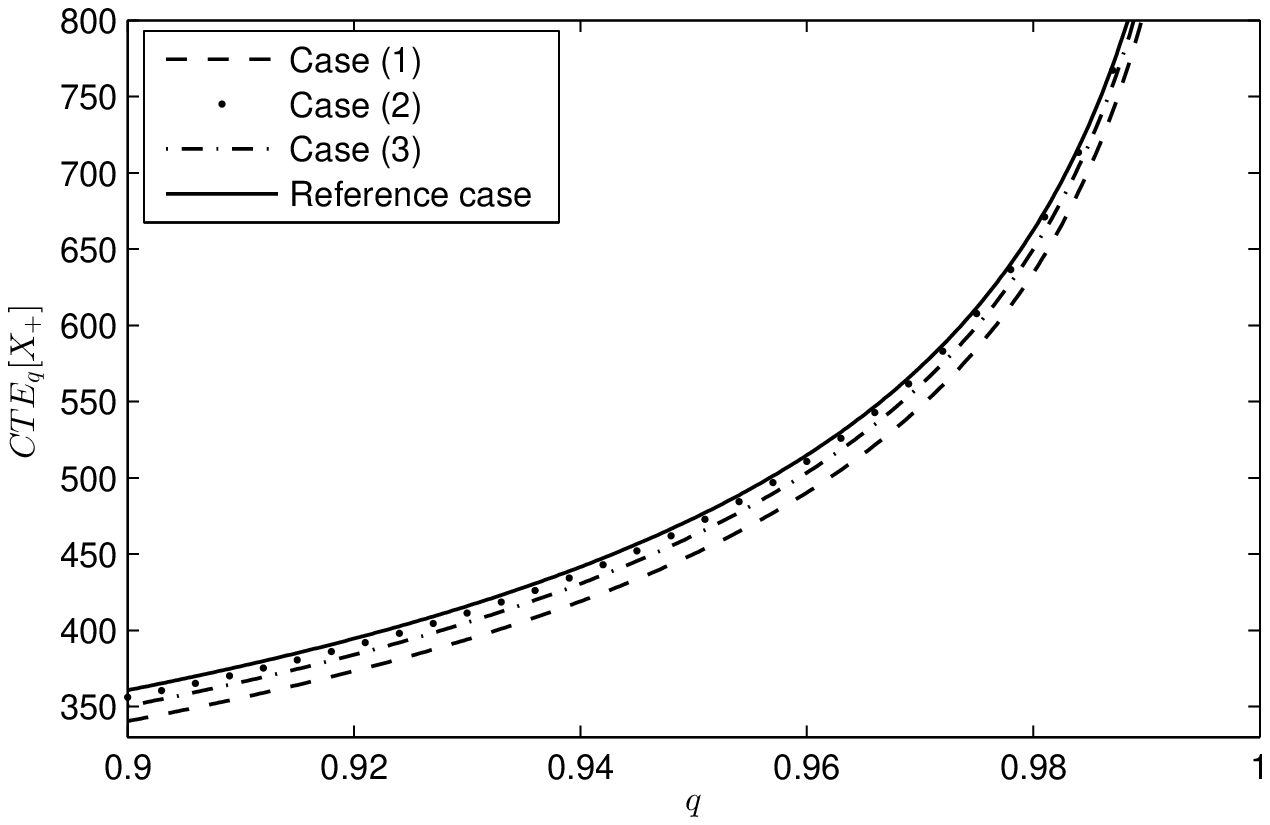}\\
\caption{Conditional expected times of first (left panel) and last
(right panel) default for portfolios (1) - (3) and the
reference portfolio $(\perp)$ for `B' rating r.p.'s with the probability of
default $p=0.3198$ and
$\mu=1.67$.
Proposition \ref{CTEparMin} is employed to
 compute the values of $CTE_q$ for $q\in[0,\ 1)$.}
\label{fig:CTEmin}
\end{figure}

The
downside of high correlations is elucidated in
 Figure \ref{fig:minCase}, in which
we leave the probability of default $p$ to be equal to $0.3198$ (`B' rating),
but vary the $\mu$ parameter that stipulates the effect of the risk factors.
In this respect, we observe that the r.p.'s with stronger correlations between r.c.'s are more
sensitive to the changes in the $\mu$ parameter, and therefore such r.p.'s
must be monitored and stress-tested more frequently.

\begin{figure}[h!]
\centering
\includegraphics[width=7cm,height=7cm]{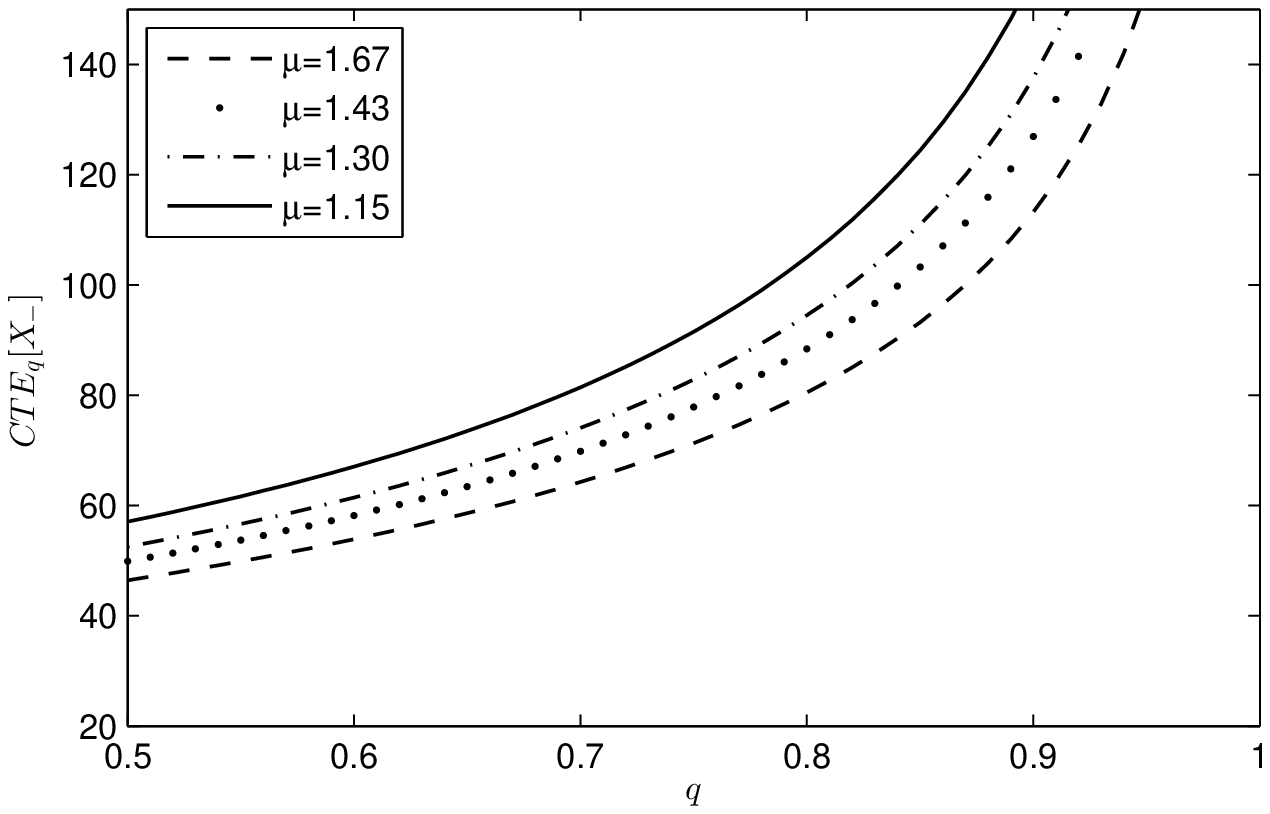}
\includegraphics[width=7cm,height=7cm]{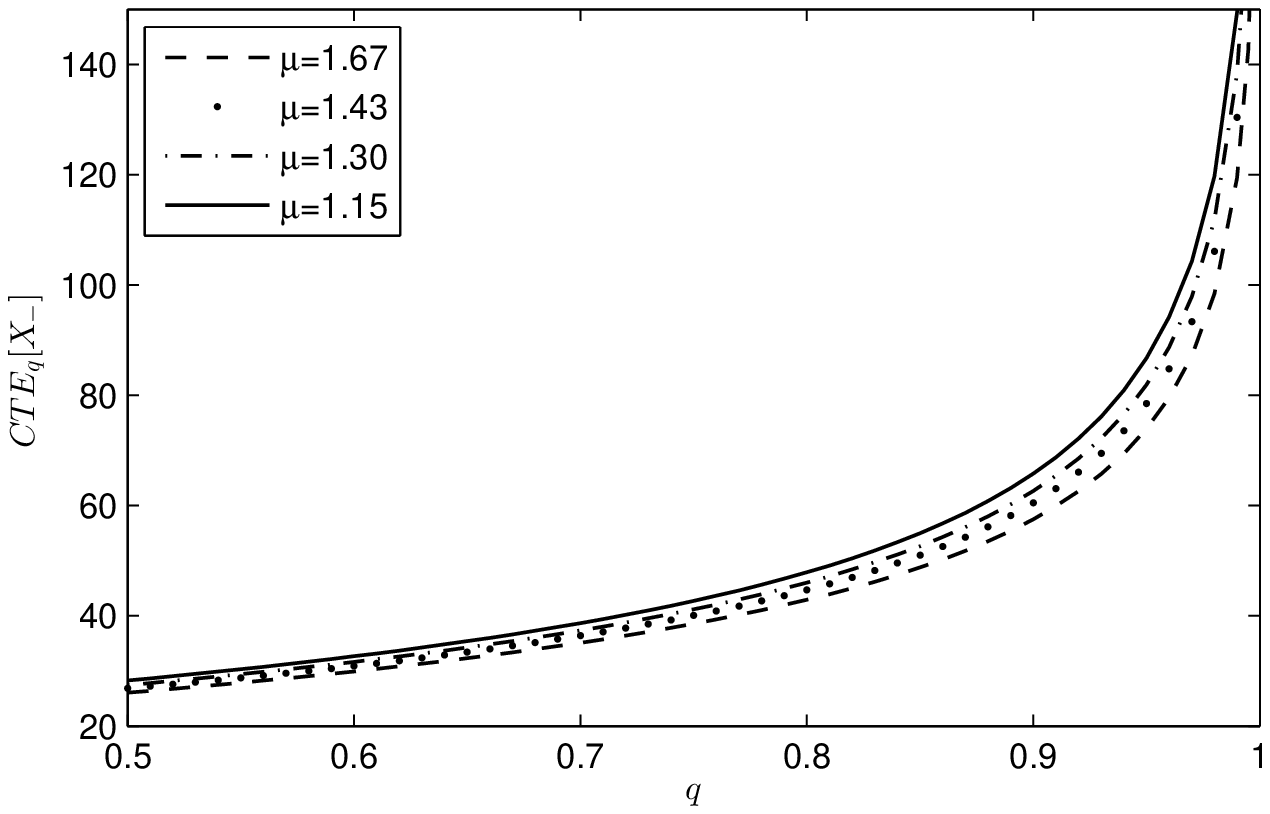}\\
\includegraphics[width=7cm,height=7cm]{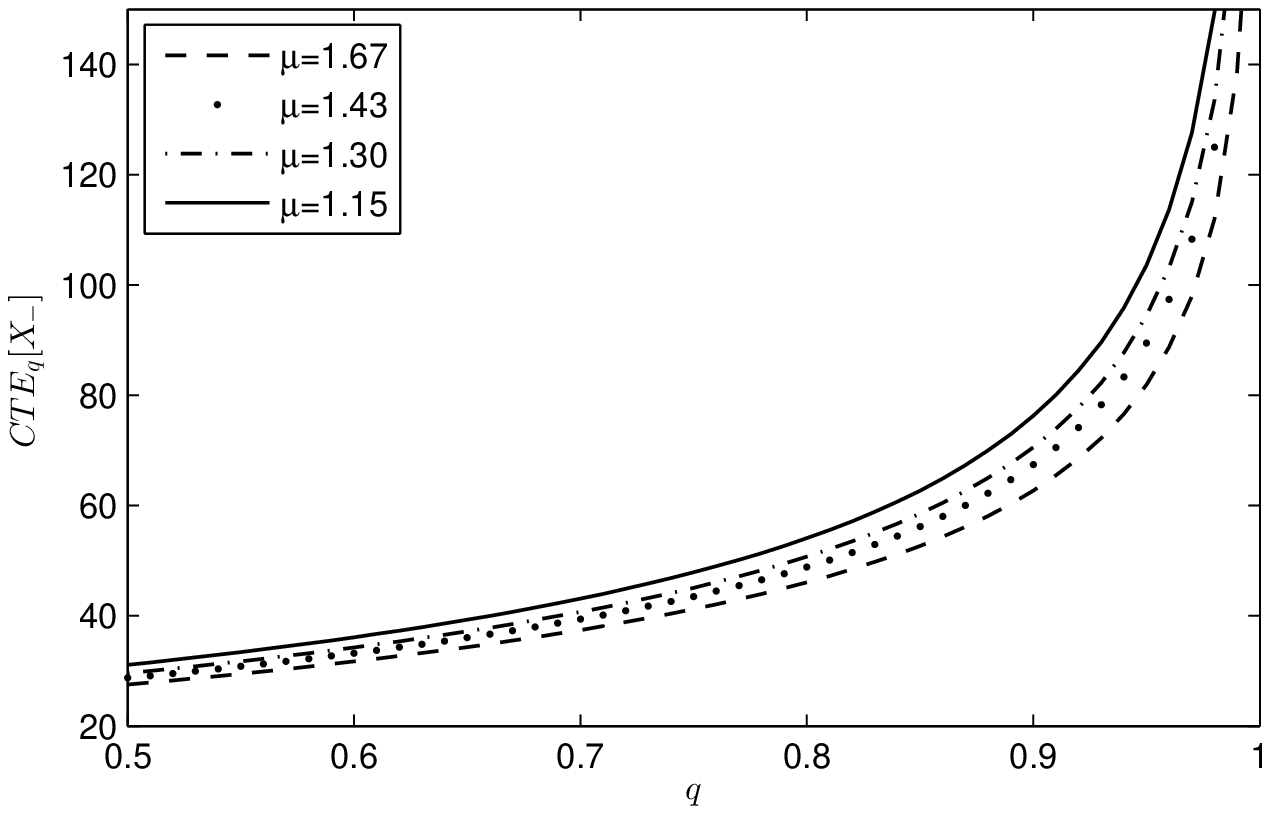}
\includegraphics[width=7cm,height=7cm]{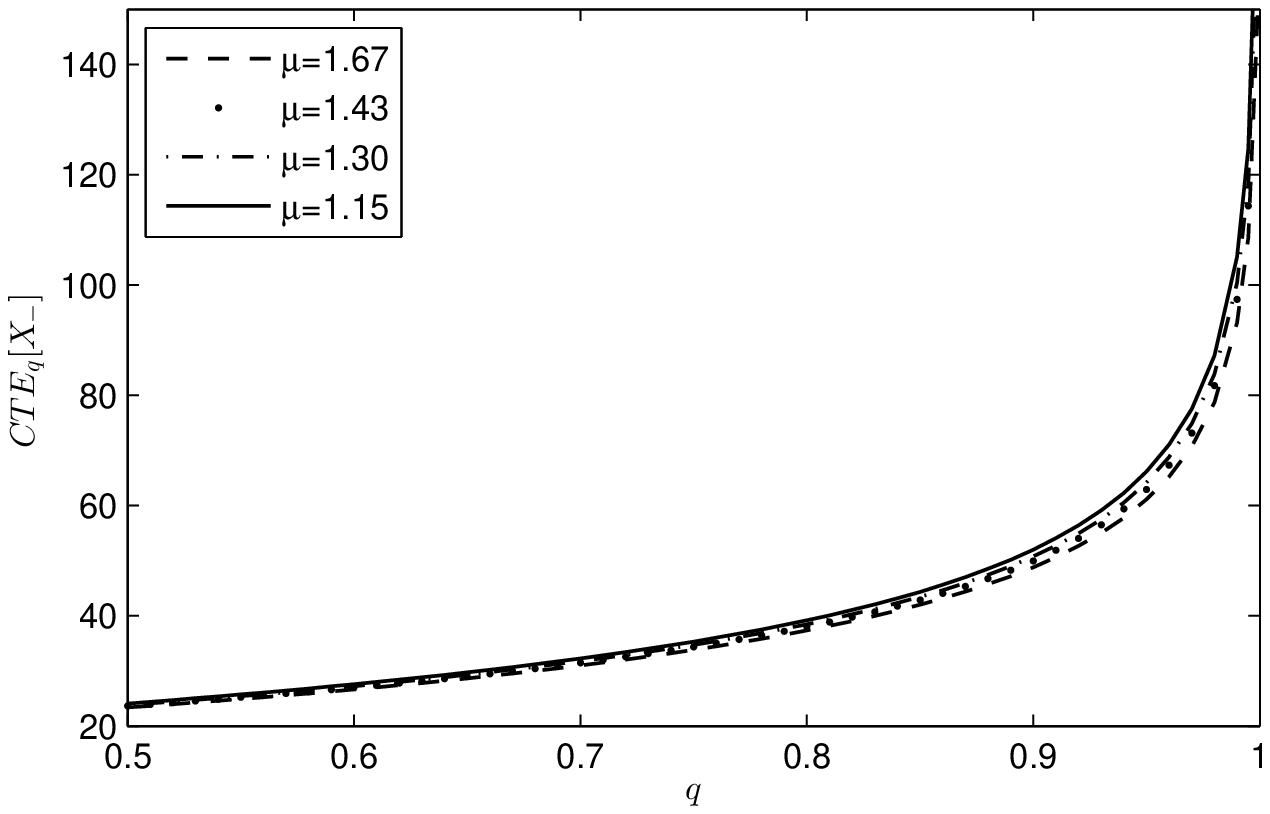}
\caption{Conditional expected times of first default for portfolios (1) (top left panel),
(2) (top right panel), (3) (bottom left panel) and reference $(\perp)$ (bottom right panel)
with the parameter $\mu$ varying from $1.67$ to $1.15$ and the default probability
$p=0.3198$. Proposition \ref{CTEparMin} is employed to compute the values of
$CTE_q$ for $q\in[0,\ 1)$.}
\label{fig:minCase}
\end{figure}

\subsection{Expected times of the last default}
Figure \ref{fig:CTEmin} (right panel)
depicts the values of $CTE_q[X_+]$ for $q\in[0,\ 1),$ $X_+\in\mathcal{X}$ and portfolios
(1) to (3) as well as the reference portfolio $(\perp)$.
Evoking Theorem \ref{maxrv} along with  (\ref{stord1}) results in
\[
\overline{F}^{(\perp)}_+\geq_{st}\overline{F}^{(2)}_+\geq_{st} \overline{F}^{(3)}_+
\geq_{st} \overline{F}^{(1)}_+,
\]
and hence
\[
CTE^{(\perp)}_q[X_+]\geq CTE^{(2)}_q[X_+]\geq CTE^{(3)}_q[X_+] \geq CTE^{(1)}_q[X_+]
\]
for all $q\in[0,\ 1)$ and $X_+\in\mathcal{X}$.
This conforms with the right panel of Figure \ref{fig:CTEmin}.

Unlike in the case of the first default,
we observe that if the time of the last default is
of interest and  the distributions of the r.c.'s are fixed,
then assuming stronger correlations between r.c.'s yields a more
conservative assessment of the expected time of the last default.
}
\section{Conclusions}
\label{sec-con}
We have introduced and studied a new form of an absolutely continuous with respect to the
Lebesgue measure multivariate probability law with the univariate margins distributed
Pareto of the 2nd kind.
The genesis of our distribution
is threefold, i.e., it originates as the Laplace transform of a multivariate gamma distribution
with the dependence structure based on the additive form of the multivariate reduction
method, and it also admits variants of the multiplicative background risk model
as well as the minima-based common shock model.  We have
meaningfully positioned the proposed multivariate Pareto distribution in the general context of
the current state of the art. We have proved and employed certain characteristic results to derive,
e.g., the (conditional/product) moments of the new multivariate Pareto as well as the
distributions of minima and maxima. Last but not least, we have developed
expressions for some tail-based risk measures of actuarial interest and elucidated our
findings with the help of a numerical example.

\section*{Acknowledgements}
We thank the anonymous referees and the Associate Editor, Prof. Montserrat Guill\' en,  for valuable comments
and suggestions that improved the work significantly and resulted in a better
presentation of the material.
We are also grateful to Prof. Paul Embrechts and all participants of the ETHs
Series of Talks in Financial and Insurance Mathematics for feedback and insights.

Our research has been supported by the Natural Sciences and Engineering Research Council (NSERC) of Canada. Jianxi Su also acknowledges the financial support of the Government of Ontario and MITACS Canada via, respectively, the Ontario Graduate Scholarship program
and the Elevate Postdoctoral fellowship.

\section*{References}
\hangindent=\parindent\noindent \textsc{Arnold, B.C. (1983)} \textit{Pareto
Distributions.} International Cooperative Publishing House, Fairland.

\hangindent=\parindent\noindent \textsc{Asimit, V., Furman, E. and Vernic, R. (2010)}
On a multivariate Pareto distribution.
\textit{Insurance: Mathematics and Economics} \textbf{46}(2), 308 -- 316.

\hangindent=\parindent\noindent \textsc{Asimit, V., Vernic, R. and Zitikis, R. (2013)}
Evaluating risk measures and capital allocations based on multi-losses driven by a heavy-tailed background risk: The multivariate Pareto-II model.
\textit{Risk} \textbf{1}(1), 14 -- 33.

\hangindent=\parindent\noindent \textsc{Asimit, V., Vernic, R. and Zitikis, R. (2016)}
Background risk models and stepwise portfolio construction.
\textit{Methodology and Computing in Applied Probability}, in press.

\hangindent=\parindent\noindent \textsc{Balkema, A. and de Haan, L. (1974)}
Residual life time at great age. \textit{Annals of Probability} \textbf{2}(5),
792 -- 804.

\hangindent=\parindent\noindent \textsc{Benson, D.A., Schumer, R. and Meerschaert, M.M. (2007)}
Recurrence of extreme events with power-law interarrival times.
\textit{Geophysical Research Letters} \textbf{34}(16), 1 -- 5.

\hangindent=\parindent\noindent \textsc{Boucher, J.P., Denuit, M. and Guill\'
en, M. (2008)} Models of insurance claim counts with time dependence based
on generalization of Poisson and negative binomial distributions. \textit{%
Variance} \textbf{2}(1), 135 -- 162.

\hangindent=\parindent\noindent \textsc{Bowers, N.L., Gerber, H.U., Hickman, J.C.,
Jones, D.A. and Nesbitt, C.J. (1997)} \textit{Actuarial Mathematics.} Second edition.
Society of Actuaries, Schaumburg.

\hangindent=\parindent\noindent \textsc{Cebri\'{a}n, A.C., Denuit, M. and Lambert, P. (2003)}
Generalized Pareto fit to the Society of Actuaries large claims database.
\textit{North American Actuarial Journal}
\textbf{7}(3), 18 -- 36.

\hangindent=\parindent\noindent \textsc{Chavez-Demoulin, V., Embrechts, P. and Hofert, M. (2015)}
An extreme value approach for modeling operational risk losses depending on covariates.
\textit{Journal of Risk and Insurance}, in press.

\hangindent=\parindent\noindent \textsc{Cherian, K.C. (1941)}
A bivariate correlated gamma-type distribution function.
\textit{Journal of the Indian Mathematical Society}
\textbf{5}, 133 -- 144.

\hangindent=\parindent\noindent \textsc{Chiragiev, A. and Landsman, Z. (2009)}
Multivariate flexible Pareto model: Dependency structure, properties and
characterizations.
\textit{Statistics and Probability Letters} \textbf{79}(16), 1733 -- 1743.

\hangindent=\parindent\noindent \textsc{Choo, W. and de Jong, P. (2009)}
Loss reserving using loss aversion functions.
\textit{Insurance: Mathematics and Economics} \textbf{45}(2), 271 -- 277.

\hangindent=\parindent\noindent \textsc{Choo, W. and de Jong, P. (2010)}
Determining and allocating diversification benefits for a portfolio of risks.
\textit{ASTIN Bulletin} \textbf{40}(1), 257 -- 269.


\hangindent=\parindent\noindent \textsc{Embrechts, P., McNeil, A. and Straumann, D. (2002)}
Correlation and dependence in risk management: Properties and pitfalls. In: Dempster, M.
et al. (Eds), \textit{Risk Management: Value at Risk and Beyond}. Cambridge University Press,
Cambridge.

\hangindent=\parindent\noindent \textsc{Engelmann, B. and Rauhmeier, R. (2011)}
\textit{The Basel II Risk Parameters: Estimation, Validation, Stress Testing - with Applications to Loan Risk Management}. Springer, Berlin.

\hangindent=\parindent\noindent \textsc{Feller, W. (1966)}
\textit{An Introduction to Probability Theory and its Applications.}
John Wiley and Sons, New York.

\hangindent=\parindent\noindent \textsc{Franke, G., Schlesinger, H.
and Stapleton, R.C. (2006)}
Multiplicative background risk.
\textit{Management Science}
\textbf{52}(1), 146 -- 153.

\hangindent=\parindent\noindent \textsc{Furman, E. (2008)} On a multivariate
gamma distribution. \textit{Statistics and Probability Letters} \textbf{78}(15), 2353 -- 2360.

\hangindent=\parindent\noindent \textsc{Furman, E. and Landsman, Z. (2005)}
Risk capital decomposition for a multivariate dependent gamma portfolio.
\textit{Insurance: Mathematics and Economics} \textbf{37}(3), 635 -- 649.

\hangindent=\parindent\noindent \textsc{Furman, E. and Landsman, Z. (2010)}
Multivariate Tweedie distributions and some related capital-at-risk
analysis. \textit{Insurance: Mathematics and Economics}
\textbf{46}(2), 351 -- 361.

\hangindent=\parindent\noindent \textsc{Furman, E. and Zitikis, R. (2008a)}
Weighted premium calculation principles. \textit{Insurance: Mathematics and
Economics} \textbf{42}(1), 459 -- 465.

\hangindent=\parindent\noindent \textsc{Furman, E. and Zitikis, R. (2008b)}
Weighted risk capital allocations. \textit{Insurance: Mathematics and
Economics} \textbf{43}(2), 263 -- 269.

\hangindent=\parindent\noindent \textsc{Furman, E. and Zitikis, R. (2010)}
General Stein-type covariance decompositions with applications to
insurance and finance. \textit{ASTIN Bulletin} \textbf{40}(1), 369 -- 375.

\hangindent=\parindent\noindent \textsc{Gabaix, X., Gopikrishnan, P., Plerou, V. and Stanley, H.E. (2003)}
A theory of power-law distributions in financial market fluctuations.
\textit{Nature} \textbf{423}, 267 -- 270.

\hangindent=\parindent\noindent \textsc{Gollier, C. and Pratt, J.W. (1996)}
Weak proper risk aversion and the tempering effect of background risk.
\textit{Econometrica} \textbf{64}(5), 1109 -- 1123.

\hangindent=\parindent\noindent \textsc{Gradshteyn, I.S. and Ryzhik, I.M. (2007)}
\textit{Table of Integrals, Series and Products.} Seventh edition. Academic Press, New York.


\hangindent=\parindent\noindent \textsc{Koedijk, K.G., Schafgans, M.M.A. and de Vries, C.G. (1990)}
The tail index of exchange rate returns.
\textit{Journal of International Economics} \textbf{29}(1-2), 93 -- 108.

\hangindent=\parindent\noindent \textsc{Kotz, S., Balakrishnan, N. and Johnson,
N.L. (2000)} \textit{Continuous Multivariate Distributions.} Second edition.  Wiley, New York.

\hangindent=\parindent\noindent \textsc{Longin, F.M. (1996)}
The asymptotic distribution of extreme stock market returns.
\textit{Journal of Business} \textbf{69}(3), 383 -- 408.

\hangindent=\parindent\noindent \textsc{Mathai, A.M. and Moschopoulos, P.G. (1991)}
On a multivariate gamma.
\textit{Journal of Multivariate Analysis} \textbf{39}(1), 135 -- 153.

\hangindent=\parindent\noindent \textsc{Mathai, A.M. and Moschopoulos, P.G. (1992)}
A form of multivariate gamma distribution.
\textit{Annals of the Institute of Statistical Mathematics}
\textbf{44}(1), 97 -- 106.

\hangindent=\parindent\noindent \textsc{Meyers, G.G. (2007)}
The common-shock model for correlated insurance losses.
\textit{Variance} \textbf{1}(1), 40 --52.

\hangindent=\parindent\noindent \textsc{Moschopoulos, P.G. (1985)}
The distribution of the sum of independent gamma random variables.
\textit{Annals of the Institute of Statistical Mathematics}
\textbf{37}, 541 -- 544.

\hangindent=\parindent\noindent \textsc{Pareto, V. (1897)} The new theories of economics. \textit{Journal of Political Economy} \textbf{5}(4), 485 -- 502.

\hangindent=\parindent\noindent \textsc{Pfeifer, D. and Ne\v slehov\' a, J. (2004)}
Modeling and generating dependent risk processes for IRM and DFA. \textit{%
ASTIN Bulletin} \textbf{34}(2), 333 -- 360.

\hangindent=\parindent\noindent \textsc{Pickands, J. (1975)} Statistical
inference using extreme order statistics. \textit{The Annals of Statistics}
\textbf{3}(1), 119 -- 131.

\hangindent=\parindent\noindent \textsc{Ramabhadran, V. (1951)}
A multivariate gamma-type distribution.
\textit{Journal of Multivariate Analysis}
\textbf{38}, 213 -- 232.

\hangindent=\parindent\noindent \textsc{Soprano, A., Crielaard, B., Piacenza, F. and Ruspantini, D. (2009)}  \textit{Measuring Operational and Reputational Risk: A Practitioner's Approach.} Wiley, Chichester.

\hangindent=\parindent\noindent \textsc{Standard \& Poor's  (2015)}
Default, transition and recovery:
2014 annual global corporate default study and rating transitions. Technical report, Standard and Poor's, New York.

\hangindent=\parindent\noindent \textsc{Tsanakas, A. (2008)} Risk
measurement in the presence of background risk. \textit{Insurance:
Mathematics and Economics} \textbf{42}(2), 520 -- 528.

\hangindent=\parindent\noindent \textsc{Vernic, R. (1997)} On the bivariate
generalized Poisson distribution. \textit{ASTIN Bulletin} \textbf{27}(1), 23
-- 32.

\hangindent=\parindent\noindent \textsc{Vernic, R. (2000)} A multivariate
generalization of the generalized Poisson distribution. \textit{ASTIN
Bulletin} \textbf{30}(1), 57 -- 67.

\hangindent=\parindent\noindent \textsc{Vernic, R. (2011)} Tail conditional
expectation for the multivariate
Pareto distribution of the second
kind: Another approach. \textit{Methodology and Computing in
Applied Probability} \textbf{13}(1), 121 -- 137.

\hangindent=\parindent\noindent \textsc{Wang, S. (1996)}
Premium calculation by transforming the layer premium density.
\textit{ASTIN Bulletin} \textbf{26}(1), 71 -- 92.
\appendix

\section{Proofs}

\label{ap} \numberwithin{equation}{section}

\label{appendix-sec}

\begin{proof}[Proof of Proposition \ref{propLTG}]
By construction we readily have that
\begin{eqnarray*}
\hat{G}_{1,\ldots,n}(\mathbf{t})&=&\mathbf{E}\left[e^{-\mathbf{X}'\mathbf{t}}\right]=
\mathbf{E}\left[
\exp\left\{
-\sum_{i=1}^n\sum_{j=1}^{n+1}\frac{c_{i,j}}{\sigma_i}Y_jt_i
\right\}
\right]
=\prod_{j=1}^{n+1}\hat{G}_j\left(
\sum_{i=1}^n \frac{c_{i,j}}{\sigma_i} t_i
\right),
\end{eqnarray*}
which along with (\ref{GamLS}) completes the proof.
\end{proof}

{
\begin{proof}[Proof of Theorem \ref{jointpdf}]
Let $G^{\prod}$ denote a multivariate c.d.f. with mutually
independent gamma-distributed univariate margins.
We have the following string of equations
\begin{eqnarray*}
&&\left(\prod_{i=1}^n \sigma_i\right) f_{1,\ldots,n}(x_1,\ldots,x_n)=
\left(\prod_{i=1}^n \sigma_i\right)(-1)^n \frac{\partial^n}{\partial x_1\cdots \partial x_n} \overline{F}_{1,\ldots,n}(x_1,\ldots,x_n)\\
&=&\mathbf{E}\left[\left(\prod_{i=1}^n \sigma_i\right)(-1)^n \frac{\partial^n}{\partial x_1\cdots \partial x_n}
\exp\left\{
-\sum_{i=1}^n\sum_{j=1}^{n+1}\frac{c_{i,j}}{\sigma_i}Y_jx_i
\right\}
\right] \\
&=&\mathbf{E}\left[\left(\prod_{i=1}^n \sum_{j=1}^{n+1} c_{i,j}Y_j\right)
\exp\left\{
-\sum_{i=1}^n\sum_{j=1}^{n+1}\frac{c_{i,j}}{\sigma_i}Y_jx_i
\right\}
\right] \\
&=&\sum_{\forall i_j\in I}d_c(i_1,\ldots,i_{n+1})  \int_{\mathbf{R}_+^{n+1}}\prod_{j=1}^{n+1}\exp\left\{-y_j\left(\sum_{i=1}^nc_{i,j}\frac{x_i}{\sigma_i} \right) \right\}\frac{\Gamma(\gamma_j+i_j)}{\Gamma(\gamma_j)}dG_{1,\ldots,n+1}^{\Pi}(\boldsymbol{\boldsymbol{y}};\ \boldsymbol{\tilde{\gamma}},1),
\end{eqnarray*}
where $\boldsymbol{\tilde{\gamma}}=(\gamma_1+i_1,\ldots,\gamma_{n+1}+i_{n+1})'$ is
a vector of positive parameters.
The proof is completed by computing the iterated integral.
\end{proof}
}

\begin{proof}[Proof of Theorem \ref{CharLem}]
Let $F_{\boldsymbol{\Xi}}$ denote the c.d.f. of the r.v. $\boldsymbol{\Xi}=(\Xi_1,\ldots,\Xi_n)'$.
The `if' part is immediate from the following obvious relations
\begin{eqnarray}
\overline{F}_{1,\ldots,n}(x_1,\ldots,x_n)
&=&\mathbf{P}[\Lambda_1>\Xi_1x_1,\ldots,\Lambda_n>\Xi_nx_n] \notag \\
&=&\int_{\mathbf{R}_+^n} \mathbf{P}[\Lambda_1>\xi_1x_1,\ldots,\Lambda_n>\xi_nx_n]
dF_{\boldsymbol{\Xi}}(\xi_1,\ldots,\xi_n) \notag \\
&=&\int_{\mathbf{R}_+^n} \exp\left\{-\sum_{i=1}^n \xi_ix_i\right\}
dF_{\boldsymbol{\Xi}}(\xi_1,\ldots,\xi_n)
\label{ProofChar1_1}
\end{eqnarray}
and by Proposition \ref{propLTG}. The `only if' part follows because (\ref{ProofChar1_1})
is the $n$-variate Laplace transform of $Ga_{1,\ldots,n}(\boldsymbol{\gamma}_c^\ast,\
\boldsymbol{\sigma})$, and it is thus unique. This completes the proof.
\end{proof}

{\begin{proof}[Proof of Theorem \ref{minima}]
For the proof, we readily have that
\begin{eqnarray*}
\overline{F}_{-}(x)&=& \overline{F}_{1,\ldots,n}(x,\ldots,x)=\prod_{j=1}^{n+1}
\left(
1+\sum_{i=1}^n \frac{c_{i,j}}{\sigma_i}x
\right)^{-\gamma_j} \\
&=&\int_0^\infty e^{-\lambda x} dF_{Z_1+\cdots+Z_{n+1}}(\lambda),
\textnormal{ where }x\in\mathbf{R}_{+},
\end{eqnarray*}
which establishes the mixture representation.
\end{proof}}

\begin{proof}[Proof of Theorem \ref{ggamma}]
Employing Theorem \ref{minima} with
$Z_j\backsim Ga\left(\gamma_j,\ \left(\sum_{i=1}^n \frac{c_{i,j}}{\sigma_j}\right)^{-1}\right),\ j=1,\ldots,n+1$, Lemma \ref{FL2005}, changing the order of summation and
integration and using equation
(\ref{GamLS}), we have that
\begin{eqnarray*}
\overline{F}_{-}(x)&=&\int_0^\infty \sum_{k=0}^\infty e^{-\lambda x}
 p_k\frac{e^{-\lambda\sigma_{+}}\lambda^{\gamma^\ast+k-1}\alpha_{+}(\boldsymbol{\sigma})^{\gamma^\ast+k}}{\Gamma(\gamma^\ast+k)}d\lambda
 =
 \sum_{k=0}^\infty \left(1+\frac{x}{\alpha_{+}(\boldsymbol{\sigma})}\right)^{-(\gamma^\ast+k)}
 p_k.
\end{eqnarray*}
This completes the proof.
\end{proof}

\begin{proof}[Proof of Theorem \ref{cov}]
Let $G^{\prod}$ denote a multivariate c.d.f. with mutually
independent gamma-distributed univariate margins.
We start by employing Lemma \ref{CharLem} and observation (\ref{StRep}), then do change of
variables and obtain that
\begin{eqnarray*}
(\sigma_k\sigma_l)^{-1}\mathbf{E}[X_kX_l]&=&(\sigma_k\sigma_l)^{-1}\mathbf{E}\left[
\frac{1}{\Xi_k}\cdot \frac{1}{\Xi_l}
\right]=\int_{\mathbf{R}_+^3} \frac{1}{(y_3+y_1)(y_3+y_2)}
dG_{1,\ldots,3}^\Pi(\mathbf{y;\gamma,1}) \notag \\
&=&\int_0^\infty \int_0^\infty (1+v)^{-\gamma_{c,l}} (1+u)^{-\gamma_{c,k}}
(1+u+v)^{-\gamma_{c,(k,l)}}du dv \\
&=&\int_0^\infty (1+v)^{-\gamma_{c,l}^\ast} \left(\int_0^\infty  (1+u)^{-\gamma_{c,k}} \left(1+\frac{u}{1+v}\right)^{-\gamma_{c,(k,l)}}du\right) dv  \\
&\overset{(1)}{=}&\int_0^\infty (1+v)^{-\gamma_{c,l}^\ast}
\frac{1}{\gamma_{c,k}^\ast-1}{}_2F_1\left(\gamma_{c,(k,l)},1;\gamma_{c,k}^\ast;\frac{v}{1+v}\right) dv \\
&=&\int_0^1 (1-u)^{(\gamma_{c,l}^\ast-1)-1}
\frac{1}{\gamma_{c,k}^\ast-1}{}_2F_1\left(\gamma_{c,(k,l)},1;\gamma_{c,k}^\ast;u\right) du \\
&\overset{(2)}{=}&
\frac{1}{(\gamma_{c,k}^\ast-1)(\gamma_{c,l}^\ast-1)}{}_3F_2\left(\gamma_{c,(k,l)},1,1;\gamma_{c,k}^\ast,\gamma_{c,l}^\ast;1\right),
\end{eqnarray*}
{
where `$\overset{(1)}{=}$'  holds because of the following integral representation of the Gauss hypergeometric
function (Equation 3.197(5) in Gradshtein and Ryzhik, 2007)
\[
{}_2F_1(\alpha,\beta;\gamma;z)=\frac{\Gamma(\gamma)}{\Gamma(\beta)\Gamma(\gamma
-\beta)}\int_0^\infty t^{\beta-1}(1+t)^{\alpha-\gamma}(1+z t)^{-\alpha}dt
\]
for $\gamma>\beta>0$ and all $z$ such that the integral above converges,
and `$\overset{(2)}{=}$' follows from Equation 7.512(5) in loc. cit. }
This completes the proof.
\end{proof}

\begin{proof}[Proof of Corollary \ref{corrbounds}]
The lower bound follows by setting $\gamma_{c,(k,l)}\equiv 0$ and both of
$\gamma_{c,l}^\ast$ and $\gamma_{c,k}^\ast$ to exceed two. To establish the upper bound,
let $\gamma_{c,k}\rightarrow 0$ and $\gamma_{c,l}\rightarrow 0$ and assume that
$\gamma_{c,(k,l)}$ exceeds two, {then
 we have that
\begin{eqnarray*}
&&\lim_{\gamma_{c,k},\gamma_{c,l}\rightarrow 0}\ _3F_2\left(\gamma_{c,(k,l)},1,1;\gamma_{c,k}+\gamma_{c,(k,l)},\gamma_{c,l}+\gamma_{c,(k,l)};1\right)\\
=&&{}_3F_2\left(\gamma_{c,(k,l)},1,1;\gamma_{c,(k,l)},\gamma_{c,(k,l)};1\right)\\
=&&{}_2F_1\left(1,1;\gamma_{c,(k,l)};1\right)\\
=&&\frac{\gamma_{c,(k,l)}-1}{\gamma_{c,(k,l)}-2},
\end{eqnarray*}
where the last equality holds due to Equation 9.122(1) in Gradshteyn and Ryzhik (2007),}
and the covariance of interest reduces to
\[
\mathbf{Cov}[X_k,\ X_l]\rightarrow
\frac{\sigma_l\sigma_k  }{(\gamma_{c,(k,l)}-1)^2(\gamma_{c,(k,l)}-2)},
\textnormal{ where } 1\leq k\neq l\leq n.
\]
This, along with Proposition \ref{Nt1}, completes the proof.
\end{proof}

\begin{proof}[Proof of Corollary \ref{CL2009Cov}]
To obtain (\ref{covflex1}), we set $c_{i,i}=c_{i,n+1}\equiv 1$ and
zero otherwise. This implies $\gamma_{c,k}^\ast=\gamma_k+\gamma_{n+1}$,
$\gamma_{c,l}^\ast=\gamma_l+\gamma_{n+1}$ and $\gamma_{c,(k,l)}=\gamma_{n+1}$ (see,
Example \ref{ExMMG1991}).
Then the result directly follows from Theorem \ref{cov}.
To establish (\ref{covflex2}), let $c_{i,j}\equiv 1$ for $1\leq j\leq i\leq n $ and zero otherwise.
Then the desired assertion follows
because
\begin{eqnarray*}
\mathbf{Cov}[X_{2,k},\ X_{2,l}]&=&\frac{\sigma_k\sigma_l}{(\gamma_{c,k}^\ast-1)(\gamma_{c,l}^\ast-1)}
({}_{3}F_2(\gamma_{c,(k,l)},1,1;\gamma_{c,(k,l)},\gamma_{c,l}^\ast; 1)-1) \\
&=&\frac{\sigma_k\sigma_l}{(\gamma_{c,k}^\ast-1)(\gamma_{c,l}^\ast-1)}
({}_{2}F_1(1,1;\gamma_{c,l}^\ast; 1)-1) \\
&=&\frac{\sigma_k\sigma_l}{(\gamma_{c,k}^\ast-1)(\gamma_{c,l}^\ast-1)(\gamma_{c,l}^\ast-2)},
\end{eqnarray*}
where the latter equality holds for $\gamma_k^\ast>2$. This completes the proof of the
corollary.
\end{proof}

{
\begin{proof}[Proof of Theorem \ref{cd}]
We first note that
\[
\mathbf{P}[X_k>x_k| X_l=x_l]=\frac{-\frac{\partial}{\partial x_l}\overline{F}_{X_k,X_l}(x_k,x_l)}{f_{X_l}(x_l)}
\textnormal{ for } x_k \textnormal{ and } x_l\textnormal{ both in }\mathbf{R}_+,
\]
and then write
\begin{eqnarray}
-\frac{\partial}{\partial x_l}\overline{F}_{X_k,X_l}(x_k,x_l)&=&\left(1+\frac{x_k}{\sigma_k}\right)^{-\gamma_{c,k}}\left[\frac{\gamma
_{c,(k,l)}}{\sigma_l}\left(1+\frac{x_k}{\sigma_k}+\frac{x_l}{\sigma_l} \right)^{-\gamma_{c,(k,l)}-1}\left(1+\frac{x_l}{\sigma_l}\right)^{-\gamma_{c,l}}\right.\nonumber\\
&+&\left. \frac{\gamma_{c,l}}{\sigma_l}\left(1+\frac{x_k}{\sigma_k}+\frac{x_l}{\sigma_l} \right)^{-\gamma
_{c,(k,l)}}\left(1+\frac{x_l}{\sigma_l}\right)^{-\gamma_{c,l}-1}
 \right].
\end{eqnarray}
Plain simplifications complete the proof.
\end{proof}
}

\begin{proof}[Proof of Theorem \ref{cregTh}]
{Note that for the pair $(X_k,\ X_l)',\ 0\leq k\neq l\leq n$, stochastic
representation (\ref{StRep}) is of utmost generality, i.e.,  conditional distribution function (\ref{CondFunction}) coincides with the one of the type I multivariate flexible Pareto model.} Hence the proof is completed by
evoking Theorem 3 of Chiragiev and Landsman (2009) as well as Proposition
\ref{Nt1}.
\end{proof}

\begin{proof}[Proof of Theorem \ref{MinimaCS}]
Note that
\[
\wedge_{j=1}^{n+1} \left(X_{\lambda_j}*\Lambda_j\right)\overset{d}{=}\left(\wedge_{j=1}^{n+1} X_{\lambda_j}
\right)*\boldsymbol{\Lambda},
\]
and hence, for $F_{\boldsymbol{\Lambda}}$ denoting the c.d.f. of the r.v. $\boldsymbol{\Lambda}=(\Lambda_1,\ldots,\Lambda_{n+1})'$,
\begin{eqnarray*}
\overline{F}_{1,\ldots,n}(x_1,\ldots,x_n)
&=&
\int_{\mathbf{R}_+^{n+1}} \prod_{i=1}^{n}\exp\left\{-\sum_{j=1}^{n+1} \frac{c_{i,j} x_i \lambda_j}{\sigma_i}\right\}
dF_{\mathbf{\Lambda}}(\lambda_1,\ldots,\lambda_{n+1}),
\end{eqnarray*}
which, along with \eqref{GamLS}, completes the proof.
\end{proof}

\begin{proof}[Proof of Proposition \ref{CTEmixt}]
 Notice that
\[
\pi_{w}[X]=\frac{\mathbf{E}\left[\mathbf{E}[w(X)|\ \boldsymbol{\Lambda}]\pi_w[X|\ \boldsymbol{\Lambda}]
\right]}{\mathbf{E}[ \mathbf{E}[w(X)|\ \boldsymbol{\Lambda}] ]}
 \textnormal{ for } X\in\mathcal{X},
\]
and set ${w}_1(\boldsymbol{\lambda})=\mathbf{E}[w(X)|\ \boldsymbol{\lambda}]$ and
${v}_1(\boldsymbol{\lambda})=\pi_w[X|\ \boldsymbol{\lambda}]$. This concludes the
proof.
\end{proof}

\begin{proof}[Proof of Corollary \ref{VaRProp}]
Use Proposition \ref{CTEmixt} setting the weight function equal to the Dirac delta function,
or alternatively evoke Proposition \ref{Nt1}. This concludes the proof.
\end{proof}

\begin{proof}[Proof of Corollary \ref{CTEgen}]
Fix $w(x)=\mathbf{1}\{x>VaR_q[X]\}$ for $q\in[0,\ 1)$, the result follows from Proposition
\ref{CTEmixt}.
\end{proof}

\begin{proof}[Proof of Corollary \ref{CTEParXiCor}]
As $X_i|\Lambda=\lambda\backsim Exp(\lambda),\ \lambda>0,\ i=1,\ldots,n$, we readily have that
\[
CTE_{q^\ast}[X_i|\ \Lambda=\lambda]=\frac{1}{\lambda}+VaR_q[X_i]
\]
and
\[
\overline{C}[VaR_q[X_i];\lambda]=e^{-\lambda VaR_{q}[X_i]},
\]
and the assertion holds by Proposition \ref{CTEmixt}. This completes the proof.
\end{proof}

\begin{proof}[Proof of Proposition \ref{CTEparMin}]
We readily have that
\begin{eqnarray*}
CTE_q[X_-]&=&VaR_q[X_-]+\frac{1}{1-q}\int_{VaR_q[X_-]}^\infty \overline{F}_-(x)dx \\
&\overset{(1)}{=}&VaR_q[X_{-}]+\frac{1}{1-q}\int_{VaR_q[X_{-}]}^\infty\left(\sum_{k=0}^\infty \left(1+\frac{x}{\alpha_+(\boldsymbol{\sigma})}\right)^{-(\gamma^\ast+k)}
p_k\right)dx \\
&\overset{(2)}{=}&VaR_q[X_{-}]+\frac{\mathbf{E}[X_-]}{1-q}
\int_{VaR_q[X_{-}]}^\infty\left(\sum_{k=0}^\infty \frac{\gamma^\ast+k-1}{\alpha_+(\boldsymbol{\sigma})}\left(1+\frac{x}{\alpha_+(\boldsymbol{\sigma})}\right)^{-(\gamma^\ast+k-1)-1}
q_k\right)dx ,
\end{eqnarray*}
where `$\overset{(1)}{=}$' follows because of Corollary \ref{ggamma} and `$\overset{(2)}{=}$'
holds since
\[
\mathbf{E}[X_-]=\sum_{k=0}^\infty \frac{\alpha_+(\boldsymbol{\sigma})}{\gamma^\ast+k-1}p_k.
\]
This completes the proof.
\end{proof}

\begin{proof}[Proof of Proposition \ref{ctemaxima}]
We observe that
\[
\int_x^\infty \overline{F}(t)dt=\mathbf{E}[X-x|\ X>x]\overline{F}(x),
\]
for all $x$ in the range of the r.v. $X$. Then the assertion follows
by Proposition \ref{maxrv} and changing the order of summation and integration in
\[
CTE_q[X_+]=\frac{\int_{VaR_q[X_+]}^\infty \sum_{\mathcal{S}\subseteq\{1,\ldots,n\}}
(-1)^{|\mathcal{S}|-1}
\overline{F}_{S-}(x)dx}{1-q}+VaR_q[X_+].
\]
This completes the proof.
\end{proof}

\begin{proof}[Proof of Proposition \ref{econCTE}]
Note that, for $1\leq k \neq l\leq n$,
\begin{eqnarray*}
\mathbf{E}[X_k|\ X_l > VaR_q[X_l]]&=&\int_0^{\infty}
\mathbf{P}[X_k>x|\ X_l>VaR_q[X_l]]
dx\\
&\overset{(1)}{=}&\sigma_k\int_0^\infty
(1+u)^{-\gamma_{c,k}}
\left(1+\frac{u}{1+VaR_q[X_l]/\sigma_l}\right)^{-\gamma_{c,(k,l)}}du,
\end{eqnarray*}
where `$\overset{(1)}{=}$' holds because of observation (\ref{StRep}) and techniques similar to
the ones used in Theorem \ref{cov}. The Proposition then follows by evoking Equation
3.197(5) in Gradshteyn and Ryzhik (2007).
This completes the proof.
\end{proof}
 \end{document}